\newtheorem{theorem}{Theorem}[section]
\newtheorem{lemma}{Lemma}[section]
\newbox\ProofSym
\newcommand{\eps}{\varepsilon}
\newcommand{\pr}[1]{\mathrm{Pr}\bigl[#1\bigr]}
\newcommand{\cancel}[1]{}
\newcommand{\E}[1] {\mathrm{E}\bigl[#1\bigr]}
\newcommand{\EE}[1]{\mathrm{E}\left[#1\right]}
\newcommand{\del}{\mathrm{Del}_Q}
\newcommand{\vor}{\mathrm{Vor}_Q}
\newcommand{\NN}{\mathrm{NN}}
\newcommand{\Int}{\mathrm{int}}
\begin{document}

%\begin{titlepage}
	
\title{Self-Improving Voronoi Construction for a Hidden Mixture of Product Distributions\thanks{Research of Cheng and Wong are supported by Research Grants Council, Hong Kong, China (project no.~16200317).}}

\author{Siu-Wing Cheng\footnote{Department of Computer Science and Engineering, HKUST, Hong Kong, China.  Email: {\tt schenng@cse.ust.hk, mtwongaf@connect.ust.hk}} \quad \quad Man Ting Wong\footnotemark[2]}

\date{}
\maketitle

\begin{abstract}
	We propose a self-improving algorithm for computing Voronoi diagrams under a given convex distance function with constant description complexity.  The $n$ input points are drawn from a hidden mixture of product distributions; we are only given an upper bound $m = o(\sqrt{n})$ on the number of distributions in the mixture, and the property that for each distribution, an input instance is drawn from it with a probability of $\Omega(1/n)$.   For any $\varepsilon \in (0,1)$, after spending $O\bigl(mn\log^{O(1)} (mn) + m^{\varepsilon} n^{1+\varepsilon}\log(mn)\bigr)$ time in a training phase, our algorithm achieves an $O\bigl(\frac{1}{\varepsilon}n\log m + \frac{1}{\varepsilon}n2^{O(\log^* n)} + \frac{1}{\varepsilon}H\bigr)$ expected running time with probability at least $1 - O(1/n)$, where $H$ is the entropy of the distribution of the Voronoi diagram output.  The expectation is taken over the input distribution and the randomized decisions of the algorithm.  For the Euclidean metric, the expected running time improves to $O\bigl(\frac{1}{\varepsilon}n\log m +  \frac{1}{\varepsilon}H\bigr)$.  
\end{abstract}

%\end{titlepage}

\section{Introduction}
\label{sect:introduction}

Self-improving algorithms, proposed by Ailon et al.~\cite{ailon11}, is a framework for studying algorithmic complexity beyond the worst case.  There is a \emph{training phase} that allows some auxiliary structures about the input distribution to be constructed.  In the \emph{operation phase}, these auxiliary structures help to achieve an expected running time, called the \emph{limiting complexity}, that may surpass the worst-case optimal time complexity.

%In this framework, one is not restricted to standard distributions like uniform, Poisson, or Guassian.   
Self-improving algorithms have been designed for \emph{product distributions}~\cite{ailon11,clarkson14}.  Let $n$ be the input size.  A product distribution $\mathscr{D} = (D_1,\ldots,D_n)$ consists of $n$ distributions $D_i$ such that the $i$th input item is drawn independently from $D_i$.  It is possible that $D_i = D_j$ for some $i \not= j$, but the draws of the $i$th and $j$th input items are independent.  No further information about $\mathscr{D}$ is given.  Sorting, Delaunay triangulation, 2D maxima, and 2D convex hull have been studied for product distributions.  For all four problems, the training phase uses $O(n^{\eps})$ input instances, and the space complexity is $O(n^{1+\eps})$.  The limiting complexities of sorting and Delaunay triangulation are $O\bigl(\frac{1}{\eps}n + \frac{1}{\eps}H_{\text{out}}\bigr)$ for any $\eps \in (0,1)$, where $H_{\text{out}}$ is the entropy of the output distribution~\cite{ailon11}.  The limiting complexities for 2D maxima and 2D convex hull are  $O(\mathrm{OptM} + n)$ and $O(\mathrm{OptC} + n\log\log n)$ respectively, where OptM and OptC are the expected depths of the optimal linear decision trees for the two problems~\cite{clarkson14}.   

Extensions that allow dependence among input items have been developed.  One extension is that there is a \emph{hidden partition} of $[n]$ into groups.  The input items with indices in the $k$th group follow some \emph{hidden functions} of a common parameter $u_k$.  The parameters $u_1, u_2, \cdots$ follow a product distribution.  The partition of $[n]$ is not given though.  If the hidden functions are known to be linear, sorting can be solved in a limiting complexity of $O\bigl(\frac{1}{\eps}n + \frac{1}{\eps}H_{\text{out}}\bigr)$ after a training phase that takes $O(n^2\log^3 n)$ time~\cite{cheng20b}.  If it is only known that each hidden function has $O(1)$ extrema and the graphs of two functions intersect in $O(1)$ places (without knowing any of the functions, or any of these extrema and intersections), sorting can be solved in a limiting complexity of $O(n + H_{\text{out}})$ after an $\tilde{O}(n^3)$-time training phase~\cite{cheng20a}.    For the Delaunay triangulation problem, if it is known that the hidden functions are bivariate polynomials of $O(1)$ degree (without knowing the polynomials), a limiting complexity of $O(n\alpha(n) + H_{\text{out}})$ can be achieved after a polynomial-time training phase~\cite{cheng20a}.  

Another extension is that the input instance $I$ is drawn from a \emph{hidden mixture} of at most $m$ product distributions.  That is, there are at most $m$ product distributions $\mathscr{D}_1, \mathscr{D}_2, \ldots$ such that $\Pr[I \sim \mathscr{D}_a] = \lambda_a$ for some fixed positive value $\lambda_a$.  The upper bound $m$ is given, but no information about the $\lambda_a$'s and the $\mathscr{D}_a$'s is provided.  Sorting can be solved in a limiting complexity of $O\bigl(\frac{1}{\eps}n\log m + \frac{1}{\eps}H_{\text{out}}\bigr)$ after a training phase that takes $O(mn\log^2 (mn) + m^\eps n^{1+\eps}\log(mn))$ time~\cite{cheng20b}.

In this paper, we present a self-improving algorithm for constructing Voronoi diagrams under a convex distance function $d_Q$ in $\mathbb{R}^2$, assuming that the input distribution is a hidden mixture of at most $m$ product distributions.  The convex distance function $d_Q$ is induced by a given convex polygon $Q$ of $O(1)$ size.  The upper bound $m$ is given, and we assume that $m = o(\sqrt{n})$.  We also assume that for each product distribution $\mathscr{D}_a$ in the mixture, $\lambda_a = \Omega(1/n)$.   Let $\eps \in (0,1)$ be a parameter fixed beforehand.  The training phase uses $O(mn\log(mn))$ input instances and takes $O\bigl(mn\log^{O(1)} (mn) + m^{\varepsilon} n^{1+\varepsilon}\log(mn)\bigr)$ time.   In the operation phase, given an input instance $I$, we can construct its Voronoi diagram $\vor(I)$ under $d_Q$ in a limiting complexity of $O\big(\frac{1}{\eps}n\log m + \frac{1}{\eps}n2^{O(\log^* n)} + \frac{1}{\eps}H\bigr)$, where $H$ denotes the entropy of the distribution of the Voronoi diagram output.   
%For example, if $m = O(\log m)$ and $H = o(n\log n)$, the limiting complexity is $o(n\log n)$.  
Note that $\Omega(H)$ is a lower bound of the expected running time of any comparison-based algorithm. Our algorithm also works for the Euclidean case, and the limiting complexity improves to $O\big(\frac{1}{\eps}n\log m + \frac{1}{\eps}H\bigr)$.  
%Although we assume that for any distribution in the mixture, the probability of an instance being drawn it is $\Omega(1/n)$, this lower bound is tiny for large $n$.

For simplicity, we will assume throughout the rest of this paper that the hidden mixture has exactly $m$ product distributions.  We give an overview of our method in the following.

We follow the strategy in~\cite{ailon11} for computing a Euclidean Delaunay triangulation.  The idea is to form a set $S$ of sample points and build $\mathrm{Del}(S)$ and some auxiliary structures in the training phase so that any future input instance $I$ can be merged quickly into $\mathrm{Del}(S)$ to form $\mathrm{Del}(S \cup I)$, and then $\mathrm{Del}(I)$ can be split off in $O(n)$ expected time.  Merging $I$ into $\mathrm{Del}(S)$ requires locating the input points in $\mathrm{Del}(S)$.  The location distribution is gathered in the training phase so that distribution-sensitive point location can be used to avoid the logarithmic query time as much as possible.  Modifying $\mathrm{Del}(S)$ efficiently into $\mathrm{Del}(S \cup I)$ requires that only $O(1)$ points in $I$ fall into the same neighborhood in $\mathrm{Del}(S)$ in expectation.

%We adopt this strategy for the Voronoi computation under $d_Q$.  
In our case, since there are $m$ product distributions, we will need a larger set $S$ of $mn$ sample points in order to ensure that only $O(1)$ points in $I$ fall into the same neighborhood in $\vor(S)$ in expectation.  But then merging $I$ into $\vor(S)$ in the operation phase would be too slow because scanning $\vor(S)$ already requires $\Theta(mn)$ time.  We need to extract a subset $R \subseteq S$ such that $R$ has $O(n)$ size and $R$ contains all points in $S$ whose Voronoi cells conflict with the input points.

Still, we cannot afford to construct $\vor(R)$ in $O(n\log n)$ time.  In the training phase, we form a metric $d$ related to $d_Q$ and construct a \emph{net-tree} $T_S$ for $S$ under $d$~\cite{har-peled06}.  In the operation phase, after finding the appropriate $R \subseteq S$, we use nearest common ancestor queries~\cite{TL88} to compress $T_S$ in $O(n\log\log m)$ time to a subtree $T_R$ for $R$ that has $O(n)$ size.  Next, we use $T_R$ to construct a well-separated pair decomposition of $R$ under $d$ in $O(n)$ time~\cite{har-peled06}, use the decomposition to compute the nearest neighbor graph of $R$ under $d$ in $O(n)$ time, and then construct $\vor(R)$ from the nearest neighbor graph in $O(n)$ expected time.    The merging of $I$ into $\vor(R)$ to form $\vor(R \cup I)$, and the splitting of $\vor(R \cup I)$ into $\vor(I)$ and $\vor(R)$ are obtained by transferring their analogous results in the Euclidean case~\cite{ailon11,chazelle02}.  

We have left out the expected time to locate the input points in $\vor(S)$.  It is bounded by $O(1/\varepsilon)$ times the sum of the entropies of the point location outcomes.  We show that $\vor(I)$ allows us to locate the input points in $\vor(S)$ in $O(n\log m  + n2^{O(\log^* n)})$ time. Then, a result in~\cite{ailon11} implies that the sum of the entropies of the point location outcomes is $O(n\log m + n2^{O(\log^* n)} + H)$.   The expected running time is thus $O(\frac{1}{\eps}n\log m + \frac{1}{\eps}n2^{O(\log^* n)} + \frac{1}{\eps}H)$, which dominates the limiting complexity.  In the Euclidean case, $\mathrm{Vor}(I)$ allows us to locate the input points in $O(n\log m)$ time, so the limiting complexity improves to $O(\frac{1}{\eps}n\log m + \frac{1}{\eps}H)$.

\section{Preliminaries}

Let $Q$ be a convex polygon that has $O(1)$ complexity and contains the origin in its interior.  Let $\partial$ and $\Int(\cdot)$ be the boundary and interior operators, respectively.  So $Q$'s boundary is $\partial Q$ and its interior is $\Int(Q)$.  Let $d_Q$ be the distance function induced by $Q$: $\forall \, x, y \in \mathbb{R}^2$, $d_Q(x,y) = \min\{\lambda \in [0,\infty) : y \in \lambda Q + x\}$.  As $Q$ may not be centrally symmetric (i.e., $x \in Q \iff -x \in Q$), $d_Q$ may not be a metric.

The bisector of two points $p$ and $q$ is $\{x \in \mathbb{R}^2 : d_Q(p,x) = d_Q(q,x)\}$, which is an open polygonal curve of $O(1)$ size.  The Voronoi diagram of a set $\Sigma$ of $n$ points, $\vor(\Sigma)$, is a partition of $\mathbb{R}^2$ into interior-disjoint cells $V_p(\Sigma) = \{x \in \mathbb{R}^2: \, \forall q \in \Sigma, \, d_Q(p,x) \leq d_Q(q,x)\}$ for all $p \in \Sigma$.   There are algorithms for constructing $\vor(\Sigma)$ in $O(n\log n)$ time~\cite{CD85,KMM93}.  
%The dual of $\vor(\Sigma)$ is a connected set of segments and triangles; however, it is not a partition of the convex hull of $S$ because some triangles may be missing.  

$V_p(\Sigma)$ is simply connected and star-shaped with respect to $p$~\cite{CD85}.  We use $N_p(\Sigma)$ to denote the set of Voronoi neighbors of $p$ in $\vor(\Sigma)$.  The Voronoi edges of $\vor(\Sigma)$ form a planar graph of $O(|\Sigma|)$ size.  Each Voronoi edge is a polygonal line, and we call its internal vertices \emph{Voronoi edge bends}.   We use $V_\Sigma$ to denote the set of Voronoi edge bends and Voronoi vertices in $\vor(\Sigma)$.  For the infinite Voronoi edges, their endpoints at infinity are included in $V_\Sigma$.

%We assume that we can place three dummy points that form a triangle so huge that the triangle encloses all possible input points from the input distribution, and these three dummy points make the Voronoi cells of the input points bounded.   Therefore, when we compute a Voronoi diagram of some input points, we mean constructing the bounded Voronoi cells of these input points together with the three dummy points.

Define $Q^* = \{-x : x \in Q\}$.  For any points $x,y  \in \mathbb{R}^2$, $d_{Q^*}(x,y) = d_Q(y,x)$.  At any point $x$ on a Voronoi edge of $\vor(\Sigma)$ defined by $p,q \in \Sigma$, there exists $\lambda \in (0,\infty)$ such that $d_{Q^*}(x,p) = d_Q(p,x) = d_Q(q,x) = d_{Q^*}(x,q) = \lambda$ and $d_{Q^*}(x,s) = d_Q(s,x) \geq \lambda$ for all $s \in \Sigma$.  Hence, $\{p,q\} \subset \partial(\lambda Q^* + x)$  and $\Int(\lambda Q^* + x) \cap \Sigma = \emptyset$, i.e., an ``empty circle property''.

Take a point $x$.  Consider the largest homothetic\footnote{A homothetic copy of a shape is a scaled and translated copy of it.} copy $Q^*_x$ of $Q^*$ centered at $x$ such that $\Int(Q^*_x) \cap \Sigma = \emptyset$.  If we insert a new point $q$ to $\Sigma$, we say that $q$ \emph{conflicts with} $x$ if $q \in Q^*_x$.  We say that $q$ \emph{conflicts with} a cell $V_p(\Sigma)$ if $q$ conflicts with some point in $V_p(\Sigma)$.  Clearly, $V_p(\Sigma)$ must be updated by the insertion of $q$.  We use $V_{\Sigma}|_q$ to denote the subset of $V_\Sigma$ that conflict with $q$.   The Voronoi edge bends and Voronoi vertices in $V_\Sigma|_q$ will be destroyed by the insertion of $q$.

We make three general position assumptions.  First, no two sides of $Q$ are parallel.  Second, for every pair of input points, their support line is not parallel to any side of $Q$.  Third, no four input points lie on the boundary of any homothetic copy of $Q^*$, which implies that  every Voronoi vertex  has degree three.  

It is much more convenient if all Voronoi cells of the input points are bounded.  We assume that all possible input points appear in some fixed bounding square $\cal B$ centered at the origin.  We place $O(1)$ dummy points outside $\cal B$ so that all Voronoi cells of the input points are bounded, and their portions inside $\cal B$ remain the same as before.  Refer to Figure~\ref{fg:dummy}.  Take $\lambda Q^*$ for some large enough $\lambda \in \mathbb{R}$ such that for every point $x \in {\cal B}$, $\lambda Q^* + x$ contains $\cal B$.  Refer to the left image in Figure~\ref{fg:dummy}.   We slide a copy of $\lambda Q^*$ around $\cal B$ to generate the outer convex polygon.  The dashed polygon demonstrates the sliding of $\lambda Q^*$ around $\cal B$.  This outer polygon contains a translational copy of every edge of $\cal B$ and two translational copies of every edge of $\lambda Q^*$.  We add the vertices of this outer polygon as dummy points.  Any homothetic copy of $Q^*$ that intersects $\cal B$ cannot be expanded indefinitely without containing some of these dummy points.  So all Voronoi cells of input points are bounded.  For each point $x \in {\cal B}$, since the dummy points lie outside $\lambda Q^*+x $ and ${\cal B} \subseteq \lambda Q^* + x$ (i.e., $\lambda Q^* +x$ is not empty of the input points), the portion of the Voronoi diagram inside $\cal B$ is unaffected by the dummy points.

\begin{figure}
	\centerline{\includegraphics[scale=0.7]{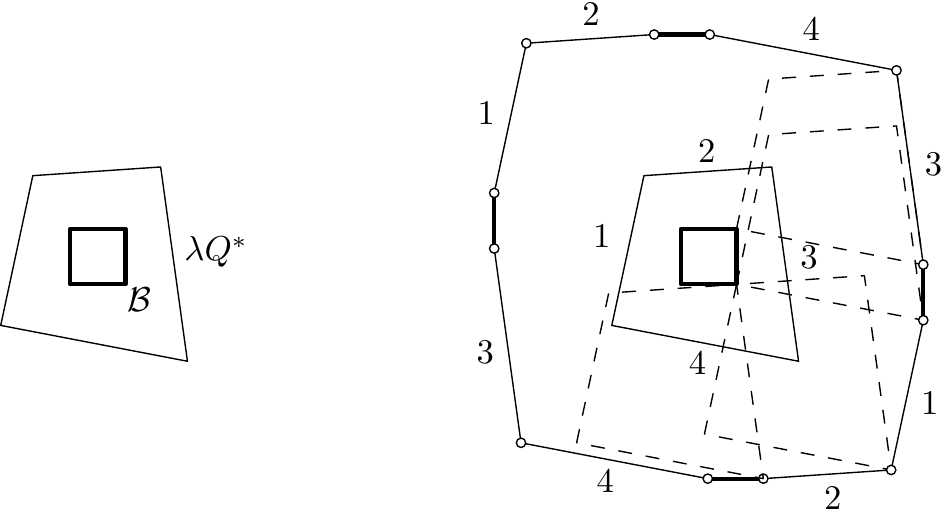}}
	\caption{The left image shows the bounding square $\cal B$ and the large enclosing $\lambda Q^*$.  In the right image, we slide a copy of $\lambda Q^*$ around $\cal B$ to generate the outer convex polygon.  The dashed polygon demonstrates the sliding of $\lambda Q^*$ around $\cal B$.  The bold edges on this convex polygon are translates of the boundary edges of $\cal B$.   Every edge of $\lambda Q^*$ has two translational copies too as labelled.}
	\label{fg:dummy}
\end{figure}

%These general position assumptions can be simulated by symbolic perturbation.

%One can make $Q$ a triangle and place three points such that their Voronoi dual consists of only two segments connecting the three points.

\cancel{
\section{Overview of the algorithm}

To potentially beat the $\Omega(n\log n)$ lower bound in constructing a Delaunay triangulation in the Euclidean case, the idea in~\cite{ailon11} is to form a set $S$ of sample points and construct $\mathrm{Del}(S)$ in the training phase so that, by exploiting the proximity information encoded in $\mathrm{Del}(S)$, any future input instance $I$ can be merged into $\mathrm{Del}(S)$ to form $\mathrm{Del}(S \cup I)$ in time dependent on the Delaunay triangulation entropy, and then $\mathrm{Del}(I)$ can be split off in $o(n\log n)$ time.  Merging $I$ into $\mathrm{Del}(S)$ requires locating the input points in $\mathrm{Del}(S)$.  The distribution of the input points in $\vor(S)$ is gathered in the training phase so that a distribution-sensitive point location structure can be used to potentially avoid the logarithmic query time.  The efficient modification of $\mathrm{Del}(S)$ into $\mathrm{Del}(S \cup I)$ requires that only $O(1)$ input points fall into the same neighborhood in $\mathrm{Del}(S)$ in expectation.

We adopt this strategy for the Voronoi computation under $d_Q$.  Since there may be $m$ product distributions in the mixture, we will need a larger set $S$ of $O(mn)$ sample points in order to ensure that only $O(1)$ of input points fall into the same neighborhood in $\vor(S)$ in expectation.  But then we cannot merge $I$ into $\vor(S)$ and then split off $\vor(I)$ in the operation phase because scanning $\vor(S)$ already requires $\Theta(mn)$ time.  We need to quickly extract a subset $R \subseteq S$ such that $R$ has $O(n)$ size and $R$ contains all points in $S$ whose Voronoi cells conflict with the input points.

We cannot afford to construct $\vor(R)$ in $O(n\log n)$ time.  To this end, in the training phase, we form a metric $d$ using $d_{Q^*}$ and construct a \emph{net-tree} $T_S$ for $S$ under $d$~\cite{har-peled06}.  In the operation phase, after finding the Voronoi cells in $\vor(S)$ that conflict with the input points, the sites of these Voronoi cells form the subset $R \subseteq S$ that we look for.\footnote{For a technical reason, $R$ may contain a few more points that we will explain later.}  Using nearest common ancestor queries~\cite{TL88}, we compress $T_S$ in $O(n\log\log m)$ time to subtree $T_R$ for $R$ that has $O(n)$ size.  Next, we construct a well-separated pair decomposition under $d$ from $T_R$ in $O(n)$ time~\cite{har-peled06}, use the decomposition to compute the nearest neighbor graph of $R$ under $d$ in $O(n)$ time\footnote{Although $d$ is a doubling metric, it is impossible to compute the nearest neighbor graph in subquadratic time if the metric is only known to be doubling~\cite{har-peled06}.}, and then construct $\vor(R)$ from the nearest neighbor graph in $O(n)$ expected time.    The merging of $I$ into $\vor(R)$ to form $\vor(R \cup I)$ and the splitting of $\vor(R \cup I)$ into $\vor(I)$ and $\vor(R)$ are obtained by transferring their analogous results in the Euclidean case~\cite{ailon11,chazelle02}.  

We have left out the time to locate the input points in the triangulated $\vor(S)$, which is proportional to $O(1/\varepsilon)$ times the sum of the entropies of the point location outcomes.  We show that $\vor(I)$ allows us to locate the input points in the triangulated $\vor(S)$ in $O(n\log m  + n2^{O(\log^* n)})$ time. Then, a result in~\cite{ailon11} shows that the sum of the entropies of the point location outcomes is $O(n\log m + n2^{O(\log^* n)} + H)$.   The limiting complexity is thus $O(\frac{1}{\eps}n\log m + \frac{1}{\eps}n2^{O(\log^* n)} + \frac{1}{\eps}H)$.  In the Euclidean case, $\vor(I)$ allows us to locate the input points in $O(n\log m)$ time, so the limiting complexity improves to $O(\frac{1}{\eps}n\log m + \frac{1}{\eps}H)$.
}

\section{Training phase}
\label{sec:train}

\noindent {\bf Sample set $\pmb S$.}  Take $mn\ln (mn)$ instances $I_1, I_2, \ldots, I_{mn\ln(mn)}$.  Define $x_1,\ldots, x_{mn\ln(mn)}$ by taking the $p_1$'s in $I_1,\ldots,I_{m\ln(mn)}$ to be $x_1,\ldots,x_{m\ln(mn)}$,  $p_2$'s in $I_{m\ln(mn)+1},\ldots,I_{2m\ln(mn)}$ to be $x_{m\ln(mn)+1},\ldots,x_{2m\ln(mn)}$, and so on.   The set $S$ of sample points includes a $\frac{1}{mn}$-net of the $x_i$'s with respect to the family of homothetic copies of $Q^*$, as well as the $O(1)$ dummy points.  The set $S$ has $O(mn)$ points and can be constructed in $O(mn\log^{O(1)} (mn))$ time as homothetic copies of $Q^*$ are pseudo-disks~\cite{ailon11,PR08}.  

\vspace{6pt}

\noindent {\bf Point location.}  Compute $\vor(S)$ and triangulate it by connecting each $p \in S$ to $V_S \cap \partial V_p(S)$, i.e., the Voronoi edge bends and Voronoi vertices in $\partial V_p(S)$.   For unbounded Voronoi cells, we view the infinite Voronoi edges as leading to some vertices at infinity; an extra triangulation edge that goes between two infinite Voronoi edges also leads to a vertex at infinity, giving rise to unbounded triangles.  
Figure~\ref{fg:example} shows an example.

\begin{figure}
	\centerline{\includegraphics[scale=0.425]{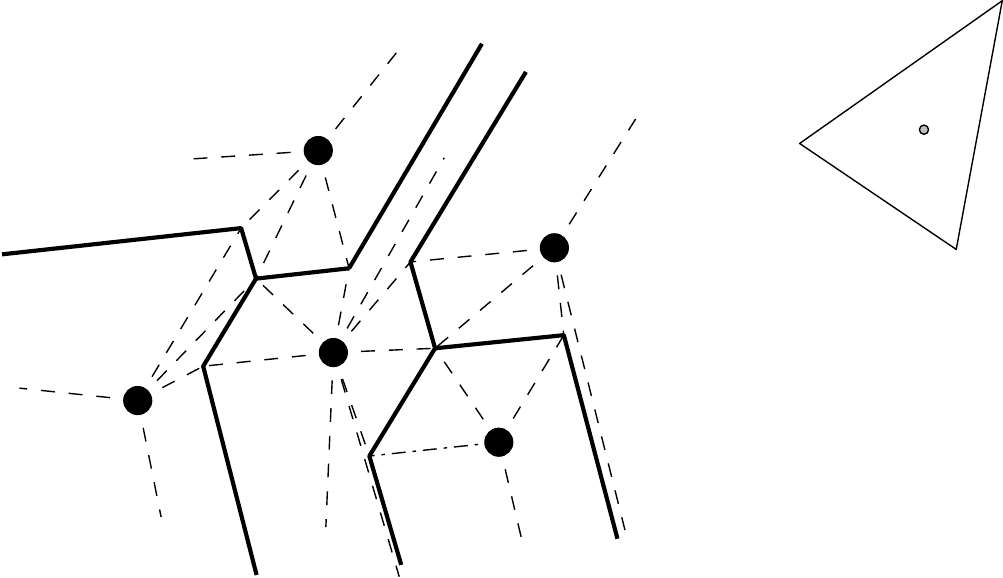}}
	\caption{Part of the triangulation of a Voronoi diagram induced by the triangle shown with a gray center.  The solid edges form the Voronoi diagram.  The dashed edges refine it into a triangulation.}
	\label{fg:example}
\end{figure}

Construct a point location structure $L_S$ for the triangulated $\vor(S)$ with $O(\log (mn))$ query time~\cite{edels86}.  Take another $m^\eps n^\eps$ input instances and use $L_S$ to locate the points in these input instances in the triangulated $\vor(S)$.  For every $i \in [n]$ and every triangle $t$, we compute $\tilde{\pi}_{i,t}$ to be the ratio of the frequency of $t$ hit by $p_i$ to $m^\eps n^\eps$, which is an estimate of $\Pr[p_i \in t]$.  For each $i \in [n]$, form a subdivision ${\cal S}_i$ that consists of triangles with positive $\tilde{\pi}_{i,t}$'s, triangulate the exterior of ${\cal S}_i$, and give these new triangles a zero estimated probability.  Set the weight of each triangle in ${\cal S}_i$ to be the maximum of $(mn)^{-\eps}$ and its estimated probability.  Construct a distribution-sensitive point location structure $L_i$ for ${\cal S}_i$ based on the triangle weights~\cite{arya07,iacono04}.  Note that $L_i$ has $O(m^\eps n^\eps)$ size, and locating a point in a triangle $t \in {\cal S}_i$ takes $O\bigl(\log \frac{W_i}{w_t} \bigr)$ time, where $w_t$ is the weight of $t$ and $W_i$ is the total weight in ${\cal S}_i$.

For any input instance $(p_1,\ldots,p_n)$ in the operation phase, we will query $L_i$ to locate $p_i$ in the triangulated $\vor(S)$, which may fail if $p_i$ falls into a triangle with zero estimated probability.  If the search fails, we query $L_S$ to locate $p_i$.  
%The total point location time is $O\bigl(\sum_{i=1}^n \min\bigl\{\log \frac{F_i}{f_{i}}, \log \frac{1}{n}\bigr\}\bigr)$, where $F_i$ is the total access frequencies in $\del(S)$ and $f_i$ is the access frequency of the triangle in $\del(S)$ that contains $p_i$.
%It is known that the expected query time of $L_i$ is $O(\frac{1}{\eps}H(P_i))$, where $P_i$ is a random variable giving the triangle in the triangulated $\vor(S)$ that contains $p_i$~\cite{ailon11}.

\vspace{6pt}

\cancel{
\noindent {\bf A metric.}   Let $\hat{Q}$ be the union of translates of $Q^*$ that contain the origin in their boundaries, i.e., $\hat{Q} = \bigcup_{x \in \partial Q^*} (Q^* - x)$.  By Lemma~\ref{lem:metric} below, $\hat{Q}$ has $O(1)$ size, and it induces a metric $d$. 

\begin{lemma}
	\label{lem:metric}
	$\hat{Q}$ is a convex polygon,  Its boundary consists of two translated copies of every edge of $Q^*$, and it is centrally symmetric at the origin.
\end{lemma}
\begin{proof}
	Let  $e$ be an edge of $Q^*$ such that $Q^*$ lies on the left of and above $e$.  Sweep the support line of $e$ over $Q^*$ until we obtain a tangent $\ell$ that sandwiches $Q^*$ with $e$.  Let $v$ be the vertex of $Q^*$ that makes the tangential contact with $\ell$.  When we translate $Q^*$ such that the origin slides from the upper endpoint of $e$ to its lower endpoint, the vertex $v$ sweeps out an edge $b$ of $\hat{Q}$ that is a translate of $e$.  Figure~\ref{fg:slide} shows an example.  If we translate $Q^*$ further such that the origin slides along the edge $e'$ of $Q^*$ that follows $e$ in the clockwise order, another vertex $v'$ of $Q^*$ sweeps out an edge $b'$ of $\hat{Q}$ that is a translate of $e'$.  Also, a copy of the chain in $\partial Q^*$ from $v$ to $v'$ in clockwise order appears in $\partial \hat{Q}$ between $b$ and $b'$.  Repeating this argument shows that $\hat{Q}$ is a convex polygon, and two translated copies of every edge of $Q^*$ appears in $\partial \hat{Q}$.  In this case, the only way that the edges of $\hat{Q}$ are in sorted order of slopes is that $\hat{Q}$ is centrally symmetric.  In the sliding process above, when the origin slides to $v$, another copy $b''$ of $e$ is generated in $\partial \hat{Q}$.   The edges $b$ and $b''$ show that the origin is the center of symmetry; refer to Figure~\ref{fg:slide}.
\end{proof}

\begin{figure}
	\centerline{\includegraphics[scale=0.7]{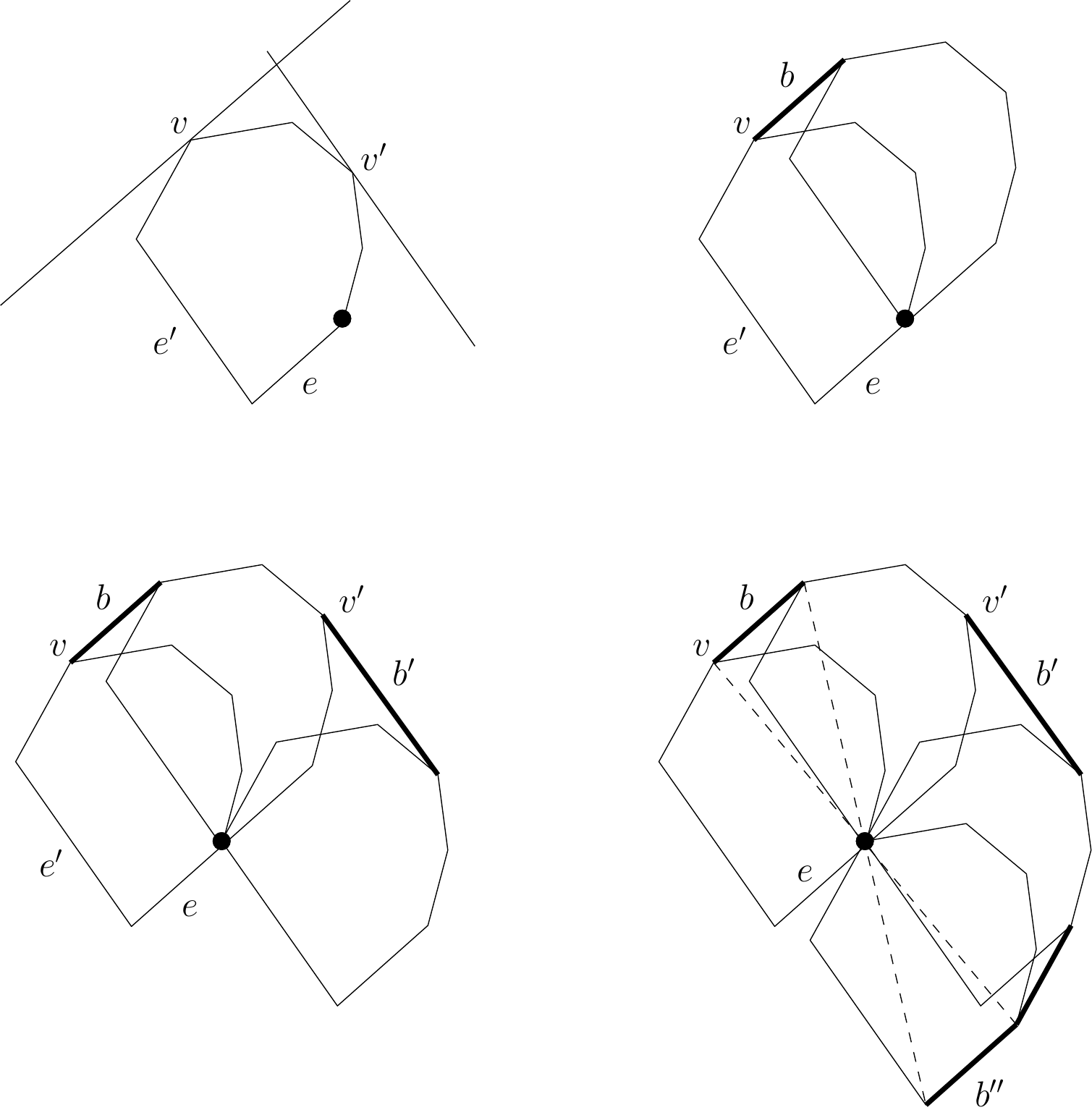}}
	\caption{The convex polygon shown is $Q^*$.  
		%The vertices $v$ and $v'$ are the tangential contacts that the lines parallel to $e$ and $e'$ make with $Q^*$, respectively.  
		The black dot denotes the origin which is stationary. }
		%In the upper right figure, when the origin slides from the upper endpoint of $e$ to its lower endpoint, the vertex $v$ sweeps out the edge $b$ in $\partial \hat{Q}$ which is a translate of $e$.  Similarly, in the lower left figure, $v'$ sweeps an edge $b'$ in $\partial \hat{Q}$ which is a translate of $e'$.  In the lower right figure, when the origin slides to $v$, a translate $b''$ of $e$ appears in $\partial \hat{Q}$.  The edges $b$ and $b''$ in $\partial \hat{Q}$ show that the origin is the center of symmetry.}
	\label{fg:slide}
\end{figure}
}

\noindent {\bf Net-tree.}   
%Let $\alpha \geq 1$ be some fixed constant.  For any real number $r > 0$, a subset $N \subseteq S$ is an $r$-\emph{net} if $d(x,y) \geq r/\alpha$ for all $x,y \in N$, and $d(s,N) \leq r$ for all $s \in S$~\cite{har-peled06,KL04}.  An $r$-net can be easily computed using a simple greedy strategy~\cite{G85}.  
%
We first define a metric that is induced by a centrally symmetric convex polygon.   Define $\hat{Q} = \{x - y : x, y \in Q^*\}$, i.e., the Minkowski sum of $Q^*$ and $-Q^*$, or equivalently $Q^*$ and $Q$.  It is centrally symmetric by definition.   It can be visualized as the region covered by all possible placements of $Q^*$ that has the origin in the polygon boundary.  Since $\hat{Q}$ is a Minkowski sum, its number of vertices is within a constant factor of the total number of vertices of $Q^*$ and $-Q^*$, which is $O(1)$.  

Let $d$ be the metric induced by the centrally symmetric convex polygon $\hat{Q}$, which is a \emph{doubling metric}---there is a constant $\lambda > 0$ such that for any point $x \in \mathbb{R}^2$ and any positive number $r$, the ball with respect to $d$ centered at $x$ with radius $r$ can be covered by $\lambda$ balls with respect to $d$ of radius $r/2$.

Given a set of points $P$, a \emph{net-tree} for $P$ with respect to $d$~\cite{har-peled06} is an analog of the well-separated pair decomposition for Euclidean spaces~\cite{CK95}.  It is a rooted tree whose leaves are the points in $P$.  For each node $v$, let $\mathit{parent}(v)$ denote its parent, and let $P_v$ denote the subset of $P$ at the leaves that descend from $v$.  Every tree node $v$ is given a representative point $p_v$ and an integer level $\ell_v$.  Let $\tau \geq 11$ be a fixed constant.  Let $B(x,h)$ denote the ball $\{y \in \mathbb{R}^2 : d(x,y) \leq h\}$.   By the results in~\cite{har-peled06} (Definition~2.1 and the remark that follows Proposition~2.2), the following properties are satisfied by a net-tree:
\begin{enumerate}[\label=(\alph{enumi})]
	\item $p_v \in P_v$.
	\item For every non-root node $v$, $\ell_v < \ell_{\mathit{parent}(v)}$, and if $v$ is a leaf, then $\ell_v = -\infty$.
	\item Every internal node has at least two and at most a constant number of children.
	\item For every node $v$, $B\bigl(p_v,\frac{2\tau}{\tau-1}\cdot\tau^{\ell_v}\bigr)$ contains $P_v$.
	\item For every non-root node $v$, $B\bigl(p_v,\frac{\tau-5}{2\tau-2} \cdot \tau^{\ell_{\mathit{parent}(v)}-1}\bigr) \cap P \subset P_v$.
	\item For every internal node $v$, there is a child $w$ of $v$ such that $p_w = p_v$.
\end{enumerate}

\cancel{
We define a variant of the net-tree that we call the \emph{relaxed net-tree}.  It satisfies the following properties:
\begin{itemize}
	\item It is a rooted tree whose leaves are the points in $P$.
	\item Each node $v$ has a representative point $p_v$, but $p_v$ may not belong to $P_v$.
	\item It satisfies properties~(b)--(e) of a net-tree.
\end{itemize}

Note that a relaxed net-tree may violate property~(f) of a net-tree.
}

\cancel{
The following result shows that for any $k > 0$, the set $\{p_v : \ell_v < k \leq \ell_{p(v)}\}$ is a $\Theta(\tau^k)$-net.  Thus, the net-tree is intuitively a hierarchy of nets with resolution increasing geometrically from the root to the leaves.

\begin{lemma}[\cite{har-peled06}]
	\label{lem:level}
	Let $N_k = \{p_v : \ell_v < k \leq \ell_{p(v)}\}$ for some $k \geq 0$.  For all $x,y \in N_k$, $d(x,y) \geq \tau^{k-1}/4$, and for all $s \in S$, $d(s,N_k) \leq 4\tau^k$.
\end{lemma}

It has been shown that if the underlying metric is doubling, which is true for $d$, then $T_S$ has the following properties.

\begin{lemma}[\cite{har-peled06}]
	\label{lem:net}
	The net-tree $T_S$ for $S$ under the metric $d$ can be constructed in $O(mn\log(mn))$ expected time.  Each internal node of $T_S$ has at least two and at most a constant number of children, and hence $T_S$ has $O(mn)$ nodes.
\end{lemma}
}

\vspace{6pt}

\noindent {\bf Clusters.}  We construct a \emph{net-tree} $T_S$ for $S$ in $O(mn\log(mn))$ expected time~\cite{har-peled06}.  We define \emph{clusters} as follows.  Label all leaves of $T_S$ as unclustered initially.  Select the leftmost $m$ unclustered leaves of $T_S$; if there are fewer than $m$ such leaves, select them all.  Find the subtree rooted at a node $v$ of $T_S$ that contains the selected unclustered leaves, but no child subtree of $v$ contains them all.  We call the subtree rooted at $v$ a cluster and label all its leaves clustered.  Then, we repeat the above until all leaves of $T_S$ are clustered.  By construction, the clusters are disjoint, each cluster has $O(m)$ nodes, and there are $O(n)$ clusters in $T_S$.  

We assign nodes in each cluster a unique cluster index in the range $[1,O(n)]$.  We also assign each node of a cluster three indices from the range $[1,O(m)]$ according to its rank in the preorder, inorder, and postorder traversals of that cluster.  The preorder and postorder indices allow us to tell in $O(1)$ time whether two nodes are an ancestor-descendant pair.  

We keep an initially empty van Emde Boas tree $\mathit{EB}_c$~\cite{boas77} with each cluster $c$.  The universe for $\mathit{EB}_c$ is the set of leaves in the cluster $c$, and the inorder of these leaves in $c$ is the total order for $\mathit{EB}_c$.  We also build a nearest common ancestor query data structure for each cluster~\cite{TL88}.  The nearest common ancestor query of any two nodes can be reported in $O(\log\log m)$ time.

\vspace{6pt}

\noindent {\bf Planar separator.}  $\vor(S)$ is a planar graph of $O(mn)$ size with all Voronoi edge bends and Voronoi vertices as graph vertices.  By a recursive application of the planar separator theorem, one can produce an \emph{$m^2$-division} of $\vor(S)$: it is divided into $O(n/m)$ regions, each region contains $O(m^2)$ vertices, and the boundary of each region contains $O(m)$ vertices~\cite{F87}.  
%Figure~\ref{fg:sep}(a) shows an example.  

Extract the subset $B \subset S$ of points whose Voronoi cell boundaries contain some region boundary vertices in the $m^2$-division.  So $|B| = O(m \cdot n/m) = O(n)$.   Compute $\vor(B)$ and triangulate it as in triangulating $\vor(S)$.  By our choice of $B$, the region boundaries in the $m^2$-division of $\vor(S)$ form a subgraph of $\vor(B)$.  Label in $O(n)$ time the Voronoi edge bends and Voronoi vertices in $\vor(B)$ whether they exist in $\vor(S)$.  

We construct point location data structures for every region $\Pi$ in the $m^2$-division as follows.  For every boundary vertex $w$ of $\Pi$, let $Q^*_w$ be the largest homothetic copy of $Q^*$ centered at $w$ such that $\Int(Q^*_w) \cap B = \emptyset$.  
%Refer to Figure~\ref{fg:sep}(b).  
These $Q^*_w$'s form an arrangement of $O(m^2)$ complexity, and we construct a point location data structure that allows a point to be located in this arrangement in $O(\log m)$ time.  We also construct a point location data structure for the portion of the triangulated $\vor(S)$ inside $\Pi$.  Since the region has $O(m^2)$ complexity, this point location data structure can return in $O(\log m)$ time the triangle in the triangulated $\vor(S)$ that contains a point inside $\Pi$.  

\vspace{6pt}

\noindent {\bf Output and performance.}  The following result summarizes the output and performance of the training phase.  Its proof is given in Appendix~\ref{app:train-0}.  The proof of Lemma~\ref{lem:train}(a) is similar to an analogous result for sorting in~\cite{cheng20b}.
%The $\Omega(1/(\sqrt{m} n))$ probability lower bound of each distribution in the mixture is used in the proof of Lemma~\ref{lem:train}(a).

\begin{lemma}
	\label{lem:train}
	Let $\mathscr{D}_a$, $a \in [m]$, be the distributions in the hidden mixture.  The training phase computes the  following structures in $O(mn\log^{O(1)}(mn) + m^\eps n^{1+\eps}\log(mn))$ time.
	%and $O(mn + m^\eps n^{1+\eps})$ space.
	\begin{enumerate}[\label=(\alph{enumi})]
		\item A set $S$ of $O(mn)$ points and $\vor(S)$.  It holds with probability at least $1-1/n$ that for any $a \in [1,m]$ and any $v \in V_S$, $\sum_{i=1}^n \mathrm{Pr}[X_{iv} \, | \, I \sim \mathscr{D}_a] = O(1/m)$, where $X_{iv} = 1$ if $p_i \in I$ conflicts with $v$ and $X_{iv} = 0$ otherwise.

		\item Point location structures $L_S$ and $L_i$ for each $i \in [n]$ that allow us to locate $p_i$ in the triangulated $\vor(S)$ in $O\bigl(\frac{1}{\varepsilon}H(t_i)\bigr)$ expected time, where $t_i$ is the random variable that represents the point location outcome, and $H(t_i)$ is the entropy of the distribution of $t_i$.

		\item A net-tree $T_S$ for $S$, the $O(n)$ clusters in $T_S$, the initially empty van Emde Boas trees  for the clusters, and the nearest common ancestor data structures for the clusters.

		\item An $m^2$-division of $\vor(S)$, the subset $B \subseteq S$ of $O(n)$ points whose Voronoi cell boundaries contain some region boundary vertices in the $m^2$-division, $\vor(B)$, and the point location data structures for the regions in the $m^2$-division.
		
	\end{enumerate}
\end{lemma}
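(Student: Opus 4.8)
The plan is to verify the four deliverables in turn; parts (b)--(d) and the running-time accounting are largely a matter of invoking known black-box results and checking sizes, while part (a) is the probabilistic heart and is where I expect the main difficulty to lie.

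For part (a) I would first condition on the training instances, so that $S$ and $\vor(S)$ are fixed. For each index $i\in[n]$ the $m\ln(mn)$ points $x_j$ assigned to slot $i$ are i.i.d.\ draws from the $i$-th marginal $D_i$ of the mixture, and $D_i=\sum_{a}\lambda_a D_{a,i}$ with $\lambda_a=\Omega(1/n)$. Because $S$ contains a $\frac{1}{mn}$-net of the $mn\ln(mn)$ points $x_j$ for the range space of homothets of $Q^*$ (a range space of bounded VC dimension, as homothets of $Q^*$ are pseudo-disks), every homothet $g$ of $Q^*$ with $g\cap S=\emptyset$ contains fewer than $\ln(mn)$ of the $x_j$'s. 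I would then argue the contrapositive of the claimed bound: if for some $a\in[m]$ and some homothet $g$ the quantity $\sum_i\Pr[p_i\in g\mid I\sim\mathscr D_a]=\sum_i D_{a,i}(g)$ exceeded $c/m$ for a suitably large constant $c$, then---passing from the components $D_{a,i}$ to the sampled marginals via $D_i\ge\lambda_a D_{a,i}$ together with $\lambda_a=\Omega(1/n)$---the expected number of $x_j$'s inside $g$ would be large enough that a Chernoff bound forces $g$ to contain at least $\ln(mn)$ of the $x_j$'s, hence a net point of $S$, contradicting $g\cap S=\emptyset$. A union bound over the $O(m)$ distributions and over the polynomially many combinatorially distinct homothets of $Q^*$ determined by $O(1)$ of the $x_j$'s---which suffices because every $Q^*_v$ with $v\in V_S$ is empty of $S$ and determined by $O(1)$ points of $S$---then gives that the bound holds simultaneously for all $a$ and all $v\in V_S$ with probability at least $1-1/n$. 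This follows the blueprint of the analogous sorting lemma in \cite{cheng20b}; the main obstacle, and the reason we sample $mn\ln(mn)$ instances rather than $O(n)$, is exactly that we must control a mixture component $\mathscr D_a$ whose weight $\lambda_a$ may be as small as $\Theta(1/n)$.

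For part (b), $L_S$ is a standard planar point-location structure on the triangulated $\vor(S)$ with $O(\log(mn))$ query time~\cite{edels86}, and for each $i$, $L_i$ is a weighted (distribution-sensitive) point-location structure~\cite{arya07,iacono04} on the subdivision ${\cal S}_i$ in which a triangle $t$ has weight $\max\{(mn)^{-\eps},\tilde{\pi}_{i,t}\}$; a query costs $O(\log(W_i/w_t))$ and, on failure, falls back to $L_S$. I would show (i) the weight floor $(mn)^{-\eps}$ keeps $W_i=O(1)$ and caps the cost of any query, including the fallback, at $O(\frac{1}{\eps}\log(mn))$; (ii) since $\tilde{\pi}_{i,t}$ is the empirical frequency of $t$ over $m^\eps n^\eps$ fresh instances, a Chernoff argument shows that, except with the stated failure probability, $\tilde{\pi}_{i,t}$ is within a constant factor of $\Pr[p_i\in t]$ whenever the latter is $\Omega((mn)^{-\eps})$, exactly as in~\cite{ailon11}; and (iii) combining (i) and (ii) with the standard comparison between the expected depth of a weighted search structure and the entropy of its leaf distribution yields the expected query time $O(\frac{1}{\eps}H(t_i))$.

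For parts (c) and (d) and the time bound, $T_S$ is the net-tree of~\cite{har-peled06} for $S$ under the doubling metric $d$ induced by the centrally symmetric polygon $\hat{Q}$, built in $O(mn\log(mn))$ expected time; given $T_S$, the clustering, the preorder/inorder/postorder labelings, the empty van Emde Boas trees~\cite{boas77}, and the nearest-common-ancestor structures~\cite{TL88} are produced in $O(mn)$ further time, and the asserted properties (disjoint clusters, each with $O(m)$ nodes, and $O(n)$ clusters) are immediate from the selection rule and the bounded degree of net-tree nodes. For (d), a recursive application of the planar separator theorem to $\vor(S)$ yields an $m^2$-division in $O(mn\log(mn))$ time~\cite{F87}; then $B$ has size $O(m\cdot n/m)=O(n)$, and computing and triangulating $\vor(B)$, labeling which bends and vertices of $\vor(B)$ survive in $\vor(S)$, and, for each of the $O(n/m)$ regions $\Pi$, building the $O(m^2)$-size arrangement of the copies $Q^*_w$ with an $O(\log m)$-time point-location structure on it as well as an $O(\log m)$-time point-location structure on the $O(m^2)$-size portion of the triangulated $\vor(S)$ inside $\Pi$, all cost $O(m^2\log m)$ per region, i.e.\ $O(nm\log m)$ in total. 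Adding the net construction ($O(mn\log^{O(1)}(mn))$, as homothets of $Q^*$ are pseudo-disks), the point-location preprocessing of part (b), and the cost of locating $m^\eps n^\eps$ instances of $n$ points each in $L_S$ at $O(\log(mn))$ per point (contributing $O(m^\eps n^{1+\eps}\log(mn))$) gives the overall bound $O(mn\log^{O(1)}(mn)+m^\eps n^{1+\eps}\log(mn))$.
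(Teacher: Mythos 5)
Your proposal tracks the paper's own proof very closely on all four parts: (a) the contrapositive argument from net--emptiness via a Chernoff bound plus a union bound over the $m$ components and the polynomially many combinatorially distinct homothets; (b) the $L_S$/$L_i$ analysis with the weight floor $(mn)^{-\eps}$, the Chernoff concentration of $\tilde{\pi}_{i,t}$, and the entropy comparison; (c)--(d) the net-tree, cluster, van Emde Boas, NCA, $m^2$-division and per-region arrangement constructions with the same references and the same size and time accounting. So at the level of approach this is essentially the paper's proof.

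The one place you should slow down is the central Chernoff step of part (a). Your sketch passes from the contradiction hypothesis $\sum_i D_{a,i}(g) > c/m$ to ``the expected number of $x_j$'s inside $g$ is large enough for a Chernoff contradiction,'' but each slot $i$ contributes only $m\ln(mn)$ training samples, so the bound $D_i(g)\ge\lambda_a D_{a,i}(g)$ with $\lambda_a=\Omega(1/n)$ gives
\[
\E{|X\cap g|}\;\ge\;\lambda_a\cdot m\ln(mn)\cdot\sum_i D_{a,i}(g)\;>\;c\,\lambda_a\ln(mn),
\]
which for $\lambda_a=\Theta(1/n)$ is $\Theta(c\ln(mn)/n)$, far below the net-emptiness cap $|X\cap g|<\ln(mn)$; a lower-tail Chernoff bound then produces no contradiction for constant $c$, and rearranging instead yields $\sum_i D_{a,i}(g)=O\bigl(1/(m\lambda_a)\bigr)=O(n/m)$. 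The paper's Appendix~A proof restricts to the subsample $\mathcal{J}_{a,\sigma}$ drawn from $\mathscr{D}_a$ and Chernoffs the count $Y_{a,\sigma}$, but the controlling expectation there, $\E{Y_{a,\sigma}}\approx\lambda_a m\ln(mn)\sum_i D_{a,i}(g)$, is exactly the quantity you bound, so the same sensitivity to $\lambda_a$ appears; the paper's displayed rearrangement $\sum_i\Pr[X_{iv}\mid I\sim\mathscr D_a]=O\bigl((\E{Y_{a,\sigma}}+3)/|\mathcal J_{a,\sigma}|\bigr)$ silently uses the full $|\mathcal J_{a,\sigma}|=\Theta(m\ln(mn))$ as if all of it were concentrated on a single slot, rather than spread over $n$ slots. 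Before relying on part (a), you should carry out the algebra concretely and verify that the argument really delivers $O(1/m)$ uniformly in $a$, rather than the weaker $O(1/(m\lambda_a))$ --- note that Lemma~\ref{lem:R} genuinely needs $O(1/m)$ for every $a$, since its cross-term bound uses $\bigl(\sum_i\Pr[X_{iv}\mid I\sim\mathscr D_a]\bigr)^2=O(1/m^2)$ per component.

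Two smaller points worth a sentence each in a full write-up: $V_S$ also contains Voronoi edge bends, which are determined by pairs rather than triples (the paper's appendix treats only triples; the commented-out earlier version handled pairs with $O(1)$ bend indices, and a complete proof should too); and the net property should be stated with respect to interiors, since a homothet $Q^*_v$ for $v\in V_S$ has $S$-points on its boundary but none in its interior.
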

\cancel{
\begin{proof}
	Let $X = \{x_1,\ldots,x_{mn\ln(mn)}\}$ be the set of points from which the $\frac{1}{mn}$-net $S$ was extracted.  Let $\sigma = \{j_1,j_2,j_3\} \subset [1,mn\ln(mn)]$ be a triple of distinct indices.
    %If $\sigma$ is a pair $\{a,b\}$, the points $x_a$ and $x_b$ define a bisector with at  most $\kappa \geq 1$ bends, and we use $Q^*_{\sigma,k}$ to denote the homothetic copy of $Q^*$ that circumscribes $x_a$ and $x_b$ and is centered at the $k$-th bend along the bisector.  If $\sigma$ is a triple $\{a,b,c\}$, 
    Let $Q^*_{\sigma}$ be the homothetic copy of $Q^*$ that circumscribes $x_{j_1}$, $x_{j_2}$ and $x_{j_3}$ if it exists; otherwise, we ignore $\sigma$.  Assume that $\sigma$ is not ignored.  We analyze the number of points in any input instance that fall inside $Q^*_\sigma$.
    
    Fix any product distribution $\mathscr{D}_a$ in the hidden mixture.  Let ${\cal J}_\sigma = [1,mn\ln(mn)]\setminus\sigma$.    For every $i \in {\cal J}_\sigma$, define $Y_{a,\sigma}(i )= 1$ if the input instance is drawn from $\mathscr{D}_a$ and $x_i \in Q^*_{\sigma}$; otherwise, $Y_{a,\sigma}(i) = 0$.  Let $Y_{a,\sigma}= \sum_{i \in {\cal J}_\sigma} Y_{a,\sigma}(i)$.  
    %Consider any distribution $\mathscr{D}_a$ from the hidden mixture.  Conditioned the input instance $I = (x_1,\ldots,x_n)$ being drawn from $\mathscr{D}_a$, 
    The variables $Y_{a,\sigma}(i)$'s are independent from each other because the $x_i$'s are drawn from independent input instances in the training phase.  Therefore, the Chernoff bound is applicable to $Y_{a,\sigma}$, and it says that for any $\lambda \in (0,1)$, 
    $\mathrm{Pr}\bigl[Y_{a,\sigma }> (1-\lambda)\mathrm{E}[Y_{a,\sigma}] \bigr] > 1 - e^{-\frac{1}{2}\lambda^2\mathrm{E}[Y_{a,\sigma}]}$.
    
    If $E[Y_{a,\sigma}] > \frac{2}{\lambda^2(1-\lambda)}\ln(mn)$, then $\Pr\bigl[Y_{a,\sigma} > \frac{2}{\lambda^2}\ln(mn) \bigr] > 1 - (mn)^{-1/(1-\lambda)}$.  Setting $\lambda = 4/5$ gives $E[Y_{a,\sigma}] > \frac{125}{8}\ln(mn) \Rightarrow \Pr \bigl[Y_{a,\sigma} > \frac{25}{8}\ln( mn) \bigr] > 1 - (mn)^{-5}$.  There are fewer than $m^3n^3\ln^3 (mn)$ triples of distinct indices.  By the union bound, it holds with probability at least $1 - \ln^3 (mn)/(m^2n^2) > 1 - 1/(mn)$ that for any triple $\sigma$ of distinct indices, if $\mathrm{E}[Y_{a,\sigma}] > \frac{125}{8}\ln (mn)$, then $Y_{a,\sigma} > \frac{25}{8}\ln (mn)$.
    
    Consider any Voronoi vertex $v \in V_S$ and its defining triple $\sigma$.  If $|Q^*_\sigma \cap X| \geq |X|/(mn) = 
    \ln(mn)$, then $Q^*_\sigma \cap S \not= \emptyset$ because $S$ is a $\frac{1}{mn}$-net of $X$.  But $Q^*_\sigma \cap S$ is empty as $v$ is a Voronoi vertex, which implies that $|Q^*_\sigma \cap X| < \ln(mn)$.  Note that $Y_{a,\sigma} \leq |Q^*_\sigma \cap X| < \ln(mn)$.  By the contrapositive of the result in the previous paragraph, we conclude that $\mathrm{E}[Y_{a,\sigma}] \leq \frac{125}{8}\ln(mn)$.  Moreover, this upper bound on $\mathrm{E}[Y_{a,\sigma}]$ hold simultaneously for all defining triples of the Voronoi vertices in $V_S$ with probability at least $1 - 1/(mn)$.
    
    For every $i \in [1,n]$, we sample $m\ln(mn)$ $x_i$'s from $m\ln(mn)$ input instances in the training phase.  Since the input distribution is oblivious of the training and operation phases, we can use these input instances to derive the following inequality: $\EE{Y_{a,\sigma}} \geq \left( mn\log(mn) \cdot \sum_{i=1}^n \pr{X_{iv} \, \wedge \, I \sim \mathscr{D}_a} \right) - 3$.
    The additive term of $-3$ stems from the fact that the indices in $\sigma$ are excluded from ${\cal J}_\sigma$ in the definition of $Y_\sigma$, but they are allowed in $mn\log(mn) \cdot \sum_{i=1}^n \pr{X_{iv} \, \wedge \, I \sim \mathscr{D}_a}$.  Rearranging terms gives
    \[
    	\sum_{i=1}^n \pr{X_{iv} \, \wedge \, I \sim \mathscr{D}_a} \leq \frac{\EE{Y_{a,\sigma}}+ 3}{mn\ln(mn)}.
    \]
    By applying the probabilistic bound of $O(\ln(mn))$ on $\EE{Y_{a,\sigma}}$ that we derived earlier as well as the assumption that $\pr{I \sim \mathscr{D}_a} = \Omega(1/n)$, we obtain
    \[
    \sum_{i=1}^n \pr{X_{iv} | I \sim \mathscr{D}_a} = \frac{1}{\pr{I \sim \mathscr{D}_a}} \cdot \sum_{i=1}^n \pr{X_{iv} \, \wedge \, I \sim \mathscr{D}_a} 
    	\leq O(n) \cdot \frac{\EE{Y_{a,\sigma}}+ 3}{mn\ln(mn)} 
    	= O(1/m).
	\]
    As mentioned before,the above result holds for $\mathscr{D}_a$ with probability at least $1 - 1/(mn)$.  Applying the union bound over all $a \in [1,m]$, we get a success probability of at least $1 - 1/n$.
    
    Consider (b).  If $p_i$ falls into a triangle $t \in {\cal S}_i$ with weight $w_t$, the distribution-sensitive point location data structure~\cite{arya07,iacono04} ensures that the query time of $L_i$ is $O(\log (W/w_t))$, where $W = \sum_{t \in {\cal S}_i} w_t$.  Since $w_t$ is defined to be $\max\bigl\{(mn)^{-\eps}, \tilde{\pi}_{i,t}\bigr\}$ and the complexity of ${\cal S}_i$ is $O(m^\eps n^\eps)$, we have $W \leq \sum_{t \in {\cal S}_i} \bigl((mn)^{-\eps}+ \tilde{\pi}_{i,t}\bigr) = O(1)$.  Let $\pi_{i,t}$ be the true probability of $p_i$ hitting a triangle $t$ in the triangulated $\vor(S)$.  Using the Chernoff bound, one can prove as in~\cite[Lemma~3.4]{ailon11} that, with probability at least $1 - O(1/(mn))$, for every $i \in [1,n]$ and every $t$, if $\pi_{i,t} > (mn)^{-\eps/3}$, then $\tilde{\pi}_{i,t} \in [0.5\pi_{i,t},1.5\pi_{i,t}]$.  As $w_t = \max\bigl\{(mn)^{-\eps},\tilde{\pi}_{i,t}\bigr\}$, if $\pi_{i,t} > (mn)^{-\eps/3}$, the query time is $O(\log 1/w_t) = O(\log (1/\pi_{i,t}))$.  If $\pi_{i,t} \leq (mn)^{-\eps/3}$, we may query $L_S$ as well, so the query time is $O(\log (1/w_t)) + O(\log(mn))= O(\eps\log(mn)) + O(\log(mn)) =  O\bigl(\frac{1}{\eps}\log(1/\pi_{i,t})\bigr)$.  Therefore, the expected query time of $L_i$ is bounded by $O\left(\sum_{t \in {\cal S}_i} \pi_{i,t} \cdot \frac{1}{\eps}\log (1/\pi_{i,t}) \right) = \frac{1}{\eps}H(t)$.   
    
    The correctness of (c) follows from~\cite{har-peled06} and our previous description.
\end{proof}
}

Lemma~\ref{lem:train}(a) leads to Lemma~\ref{lem:R} below, which implies that for any $v \in V_S$, if we feed the input points that conflict with $v$ to a procedure that runs in quadratic time in the worst case, the expected running time of this procedure over all points in $V_S$ is $O(n)$.  The proof  of Lemma~\ref{lem:R} is just an algebraic manipulation of the probabilities, and it is given in Appendix~\ref{app:train}.

\begin{lemma}
	\label{lem:R}
	For every $v \in V_S$, let $Z_v$ be the subset of input points that conflict with $v$.  It holds with probability at least $1- O(1/n)$ that $\sum_{v \in V_S} \mathrm{E}\bigl[|Z_{v}|^2\bigr] = O(n)$.  
	%Hence, $R$ has $O(n)$ expected size, and its construction takes $O(n)$ expected time.
\end{lemma}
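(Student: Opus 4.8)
The plan is to condition on the favorable training event of Lemma~\ref{lem:train}(a) and then, for each fixed $v \in V_S$, expand $\E{|Z_v|^2}$ into a diagonal part and an off-diagonal part, controlling the off-diagonal part by exploiting that every component $\mathscr{D}_a$ of the mixture is a \emph{product} distribution.

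First I would fix the sample set $S$ so that the event $\mathcal{E}$ of Lemma~\ref{lem:train}(a) holds; this happens with probability at least $1-1/n$ over the training randomness, and on $\mathcal{E}$ we have $\sum_{i=1}^n \pr{X_{iv} \mid I \sim \mathscr{D}_a} = O(1/m)$ uniformly over all $a \in [m]$ and all $v \in V_S$. Note also that $|V_S| = O(|S|) = O(mn)$, since $\vor(S)$ is a planar graph of $O(|S|)$ size and each bisector has $O(1)$ size (so the number of Voronoi edge bends is also $O(|S|)$). From here on all expectations and probabilities are over the operation-phase instance $I = (p_1,\dots,p_n)$, with $S$ and $\mathcal{E}$ fixed. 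Since $|Z_v| = \sum_{i=1}^n X_{iv}$ and $X_{iv}\in\{0,1\}$, we get
\[
\E{|Z_v|^2} \;=\; \sum_{i=1}^n \pr{X_{iv}} \;+\; \sum_{i\neq j}\pr{X_{iv}\wedge X_{jv}}.
\]
For the diagonal term, the law of total probability over which component produced $I$ gives $\sum_{i=1}^n \pr{X_{iv}} = \sum_{a=1}^m \lambda_a \sum_{i=1}^n \pr{X_{iv}\mid I \sim \mathscr{D}_a} \le \sum_{a=1}^m \lambda_a \cdot O(1/m) = O(1/m)$, using $\sum_a \lambda_a = 1$. For the off-diagonal term, the key point is that conditioned on $I \sim \mathscr{D}_a$, the points $p_i$ and $p_j$ with $i \neq j$ are independent because $\mathscr{D}_a$ is a product distribution; hence $\pr{X_{iv}\wedge X_{jv}\mid I \sim \mathscr{D}_a} = \pr{X_{iv}\mid I\sim\mathscr{D}_a}\cdot\pr{X_{jv}\mid I\sim\mathscr{D}_a}$, so that
\[
\sum_{i\neq j}\pr{X_{iv}\wedge X_{jv}} \;\le\; \sum_{a=1}^m \lambda_a\Bigl(\sum_{i=1}^n \pr{X_{iv}\mid I\sim\mathscr{D}_a}\Bigr)^2 \;=\; \sum_{a=1}^m \lambda_a\cdot O(1/m^2) \;=\; O(1/m^2).
\]
Combining, $\E{|Z_v|^2} = O(1/m)$ for every $v \in V_S$, and summing over the $O(mn)$ elements of $V_S$ yields $\sum_{v\in V_S}\E{|Z_v|^2} = O(n)$. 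The failure probability is exactly the probability that $\mathcal{E}$ does not hold, which is at most $1/n = O(1/n)$, giving the claimed bound.

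The computation is essentially routine; the only steps that need care are (i) the conditional independence step, i.e.\ making sure that within each product component $\mathscr{D}_a$ the cross terms are genuinely dominated by the square of the single-index sum, and (ii) checking that the constant hidden in the $O(1/m)$ of Lemma~\ref{lem:train}(a) is uniform over $v$ and $a$, so that it survives both the mixing sum over $a$ and the final sum over the $O(mn)$ points of $V_S$. I do not anticipate any genuine obstacle beyond bookkeeping these constants carefully.
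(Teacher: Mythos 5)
Your proof is correct and is essentially the same argument as the paper's: split $\mathrm{E}[|Z_v|^2]$ into diagonal and off-diagonal terms, decompose over the mixture components, invoke Lemma~\ref{lem:train}(a) for the diagonal, use product-distribution independence of $X_{iv}$ and $X_{jv}$ given $I \sim \mathscr{D}_a$ to bound the off-diagonal by the square of the single-index sum, and then sum over the $O(mn)$ elements of $V_S$. The only cosmetic difference is that you bound $\mathrm{E}[|Z_v|^2] = O(1/m)$ per $v$ and sum at the end, whereas the paper carries the sum over $v$ through the whole computation; the arithmetic is identical.
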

\cancel{
\begin{proof}
	For every $i \in [1,n]$ and every $v \in V_S$, define $X_{iv} = 1$ if $p_i \in Z_{v}$ and $X_{iv} = 0$ otherwise.
	\begin{align*}
		&~~~~\sum_{v \in V_S}\E{|Z_v|^2} 
		= \EE{\sum_{v\in V_S} \left(\sum_{i \in [n]} X_{iv}\right)^2} = \sum_{v \in V_S} \sum_{i,j \in [n]} \E{X_{iv}X_{jv}} \\
		%&= \sum_{a \in [1,m]} \sum_{v \in V_S} \sum_{i,j \in [n]} \pr{X_{iv} \wedge X_{jv} | I \sim {\cal D}_a} \cdot \pr{I \sim {\cal D}_a} \\
		&= \sum_{a \in [m]} \sum_{v \in V_S} \sum_{i \in [n]} \pr{X_{iv} | I \sim {\cal D}_a} \cdot \pr{I \sim {\cal D}_a} + \\
		&\quad\quad \sum_{a \in [m]} \sum_{v \in V_S} \sum_{i \not= j} \pr{X_{iv}  \wedge X_{jv}| I \sim {\cal D}_a} \cdot \pr{I \sim {\cal D}_a} \\
		&= \sum_{a \in [m]} \pr{I \sim {\cal D}_a} \sum_{v \in V_S} O(1/m) + 
		%\quad\quad\quad\quad\quad\quad\quad (\because \text{Lemma~\ref{lem:train}(a)}) \\
		\sum_{a \in [m]} \sum_{v \in V_S} \sum_{i \not= j} \pr{X_{iv}  \wedge X_{jv}| I \sim {\cal D}_a} \cdot \pr{I \sim {\cal D}_a} \\
		&= O(n) + \sum_{a \in [m]} \sum_{v \in V_S} \sum_{i \not= j} \pr{X_{iv}  \wedge X_{jv}| I \sim {\cal D}_a} \cdot \pr{I \sim {\cal D}_a}.
	\end{align*}
	Lemma~\ref{lem:train}(a) is invoked in the third step.  The last step is due to the fact that $|V_S| = O(mn)$ and $\sum_{a \in [m]} \pr{I \sim \mathscr{D}_a} = 1$.
	Under the condition that $I \sim \mathscr{D}_a$, $X_{iv}$ and $X_{jv}$ are independent.  Therefore, $\pr{X_{iv} \wedge X_{jv} | I \sim {\cal D}_a} = \pr{X_{iv} | I \sim {\cal D}_a} \cdot \pr{X_{jv} | I \sim {\cal D}_a}$.  As a result,
	\begin{align*}
		&~~~~\sum_{a \in [m]} \sum_{v \in V_S} \sum_{i \not= j} \pr{X_{iv} \wedge X_{jv} | I \sim {\cal D}_a} \cdot \pr{I \sim {\cal D}_a} \\
		&= \sum_{a \in [m]}  \pr{I \sim {\cal D}_a} \sum_{v \in V_S} \sum_{i \not= j} \pr{X_{iv} | I \sim {\cal D}_a} \cdot \pr{X_{jv} | I \sim {\cal D}_a}\\
		&\leq \sum_{a \in [m]}  \pr{I \sim {\cal D}_a} \sum_{v \in V_S} \Bigl(\sum_{i \in [n]} \pr{X_{iv} | I \sim {\cal D}_a} \Bigr)^2 \\
		&= \sum_{a \in [m]}  \pr{I \sim {\cal D}_a} \sum_{v \in V_S} O(1/m^2)  
		%\quad\quad\quad\quad\quad\quad\quad\quad\quad (\because \text{Lemma}~\ref{lem:train}(a)) 
		~=~O(n/m).
	\end{align*}
	In the last step, we use Lemma~\ref{lem:train}(a) and the relations that $|V_S| = O(mn)$ and $\sum_{a \in [m]} \pr{I \sim \mathscr{D}_a} = 1$.
\end{proof}
}

We state two technical results.  Their proofs are in Appendix~\ref{app:train}.  Figure~\ref{fg:tech}(a) and~(b) illustrate these two lemmas.

\begin{lemma}
	\label{lem:enclose}
	Consider $\vor(Y)$ for some point set $Y$.  For any point $x \in \mathbb{R}^2$, let $Q^*_x$ be the largest homothetic copy of $Q^*$ centered at $x$ such that $\Int(Q^*_x) \cap Y = \emptyset$.  Let $w_1$ and $w_2$ be two adjacent Voronoi edge bends or Voronoi vertices in $\vor(Y)$.   For any point $x \in w_1w_2$, $Q^*_x \subseteq Q^*_{w_1} \cup Q^*_{w_2}$.  The same property holds if $w_1$ and $w_2$ are Voronoi vertices connected by a Voronoi edge, and $x$ lies on that Voronoi edge.
\end{lemma}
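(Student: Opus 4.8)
The plan is to treat the first assertion — where $w_1w_2$ is a single straight segment of the Voronoi diagram — in detail, and to obtain the second assertion by concatenation over the straight pieces of a Voronoi edge. Let $p,q\in Y$ be the two sites whose cells meet along $w_1w_2$. By the general-position assumption that $pq$ is not parallel to any side of $Q$, there are a fixed edge $e_i$ of $Q^*$ and a fixed, distinct edge $e_j$ such that, throughout $w_1w_2$, the point $p$ lies on the $e_i$-edge of the homothet $Q^*_x$ and $q$ lies on its $e_j$-edge. Writing $Q^*_x=\mu Q^*+x$ with $\mu=\mu_x$, the two requirements ``$p$ on the line of $e_i$ of $Q^*_x$'' and ``$q$ on the line of $e_j$ of $Q^*_x$'' are independent linear equations in $x$ for each fixed $\mu$, so $x=x(\mu)=x_0-\mu d$ is affine in $\mu$, where $x_0$ is the intersection of the line through $p$ parallel to $e_i$ with the line through $q$ parallel to $e_j$, and $d$ is the intersection of the support lines of $e_i$ and $e_j$ of $Q^*$. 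Hence $Q^*_{x(\mu)}=x_0+\mu R$ for the fixed translate $R:=Q^*-d$ of $Q^*$, and the support lines of the two edges of $R$ corresponding to $e_i,e_j$ pass through $x_0$. Translating so that $x_0$ is the origin, a point $z$ lies in $Q^*_{x(\mu)}$ iff $\langle n_k,z\rangle\le\mu\eta_k$ for every outward edge normal $n_k$ of $R$, where $\eta_k$ is the corresponding support number; crucially $\eta_i=\eta_j=0$, so the $e_i$- and $e_j$-constraints do not depend on $\mu$. Let $[\mu_1,\mu_2]$ with $\mu_1\le\mu_2$ be the range of $\mu$ over $w_1w_2$, so $Q^*_{w_1}=\mu_1R$ and $Q^*_{w_2}=\mu_2R$.

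If $e_i,e_j$ are adjacent edges of $Q^*$ then $d$ is their common vertex, so $0\in R$; as $R$ is convex the family $\{\mu R\}_{\mu\ge 0}$ is nested and $Q^*_{x(\mu)}\subseteq Q^*_{w_2}$ for every $\mu\in[\mu_1,\mu_2]$, which finishes this case. Otherwise $0\notin R$, and $R$ has ``far'' edges ($\eta_k>0$; these exist since $R$ is bounded) and ``near'' edges ($\eta_k<0$; these exist since $0\notin R$). Suppose, for contradiction, $z\in Q^*_{x(\mu^*)}$ for some $\mu^*\in(\mu_1,\mu_2)$ but $z\notin Q^*_{w_1}\cup Q^*_{w_2}$. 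From $z\in\mu^*R$ and $\eta_i=\eta_j=0$ we get $\langle n_i,z\rangle\le0$ and $\langle n_j,z\rangle\le0$, so the constraints of $\mu_1R$ and $\mu_2R$ that $z$ violates come from edges with $\eta_k\ne0$; a short sign check then forces $\max_{\eta_k>0}\langle n_k,z\rangle/\eta_k>\mu_1$ and $\min_{\eta_k<0}\langle n_k,z\rangle/\eta_k<\mu_2$. On the other hand $p,q\in Q^*_{w_1}\cap Q^*_{w_2}$ (each lies on both boundaries), so $\langle n_k,p\rangle/\eta_k$ and $\langle n_k,q\rangle/\eta_k$ are $\le\mu_1$ on every far edge and $\ge\mu_2$ on every near edge. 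Combining, $z$ lies strictly beyond both $p$ and $q$ along some far-edge normal and along some near-edge normal, while also staying in the wedge $\{\langle n_i,\cdot\rangle\le0\}\cap\{\langle n_j,\cdot\rangle\le0\}$ and in $Q^*_{x(\mu^*)}$. This is where the hypothesis that $w_1w_2$ is a genuine Voronoi-edge piece is used: $\mu_1$ is the parameter at which either a defining site first reaches a vertex of $Q^*_{x(\mu_1)}$ (if $w_1$ is a bend) or a third site of $Y$ first touches $\partial Q^*_{x(\mu_1)}$ (if $w_1$ is a Voronoi vertex); one checks that in the first case the vertex lies on a far edge, so $\max_{\eta_k>0}\langle n_k,\cdot\rangle/\eta_k=\mu_1$ for that site, and in the second case the touched edge is near, so $\min_{\eta_k<0}\langle n_k,\cdot\rangle/\eta_k=\mu_1$ for that site (here one additionally invokes the emptiness of $Q^*_{x(\mu^*)}$ with respect to that third site); symmetric statements hold at $\mu_2$. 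Substituting these equalities turns the accumulated linear conditions on $z$ into an infeasible system, which is the desired contradiction.

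The infeasibility step — tracking exactly which edges of $R$ the relevant sites reach at $\mu_1$ and $\mu_2$, and how the normals of those edges interleave with $n_i$ and $n_j$ around $\partial R$, so as to conclude that the conditions on $z$ cannot all hold — is the main technical obstacle. It is genuinely necessary: for an arbitrary sub-interval of the $\mu$-values the containment $x_0+\mu R\subseteq(x_0+\mu_1R)\cup(x_0+\mu_2R)$ can fail, so the bend/vertex structure of the endpoints really must be exploited. For the second assertion I would either argue the whole-edge case directly or glue the single-piece result over the straight pieces of the edge; the latter appears to require that $\mu_x=d_{Q^*}(x,p)$ restricted to a Voronoi edge be unimodal (hence maximized at $w_1$ or $w_2$), and establishing this is a secondary obstacle.
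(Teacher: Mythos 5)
Your plan is a genuinely different route from the paper's: you parametrize the family $Q^*_x$ as $x_0+\mu R$ with $R$ a translate of $Q^*$ and reduce the lemma to a support-function statement, whereas the paper never reparametrizes — it picks witness points $q_1\in\partial Q^*_{w_1}\setminus Q^*_{w_2}$ and $q_2\in\partial Q^*_{w_2}\setminus Q^*_{w_1}$, shows neither lies in $\Int(Q^*_x)$ by an intermediate-value argument along the edge (if, say, $q_1\in\Int(Q^*_x)$, then sliding $x$ toward $w_2$ produces a $y\neq w_1$ with $q,q',q_1\in\partial Q^*_y\cap\partial Q^*_{w_1}$, contradicting the $\le 2$-intersection property of homothets of $Q^*$), and then reads off the covering from the fact that $\partial Q^*_x$ meets each of $\partial Q^*_{w_1},\partial Q^*_{w_2}$ transversally in exactly $\{q,q'\}$.

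However, your proposal is not a proof: the ``infeasibility step,'' which is the entire content of the lemma in your framework, is left unproven, and I am not convinced that the route you sketch for it is the right one. You propose extracting equalities like ``$\max_{\eta_k>0}\langle n_k,p\rangle/\eta_k=\mu_1$'' from the bend/vertex structure at the endpoints and then claiming the conditions on $z$ become an infeasible linear system. But those equalities are statements about the sites $p,q,s$, while the conditions to be refuted are about an unrelated point $z$; nothing in the sketch links them, and I do not see a Farkas-type certificate falling out of the normal-interleaving bookkeeping you allude to. What actually closes the argument in your framework is something you never invoke: the much weaker observation that $p/\mu$ and $q/\mu$ stay inside the zero-edges $e_i,e_j$ of $R$ for every $\mu\in[\mu_1,\mu_2]$ already forces $\mu_2/\mu_1\le\min(b_i/a_i,b_j/a_j)$ (writing $e_i=[a_i,b_i]u_i$, etc.), and then concavity of $h(z)=b(z)-c\,a(z)$ (with $c=\min(b_i/a_i,b_j/a_j)$, $b$ concave, $a$ convex) together with $h(u_i),h(u_j)\ge 0$ and $h(0)=0$ gives $h\ge 0$ on the whole wedge, so $b(z)/a(z)\ge c\ge\mu_2/\mu_1$ whenever $a(z)>0$ — which contradicts your derived chain $\mu_1<a(z)\le b(z)<\mu_2$. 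In other words the bend/vertex case analysis you flag as ``the main technical obstacle'' is a detour; the needed inequality comes from the defining sites alone. Finally, the ``secondary obstacle'' (unimodality of $\mu_x$ along the edge, needed to glue straight pieces) is a self-inflicted cost of your piecewise parametrization: the paper's argument uses only that $p,q\in\partial Q^*_{w_1}\cap\partial Q^*_{w_2}\cap\partial Q^*_x$, which is true for \emph{any} $w_1,w_2,x$ on the same Voronoi edge, so the second assertion is literally the same proof and no gluing is required.
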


\begin{figure}
	\centering
	\begin{tabular}{ccc}
		\includegraphics[scale=0.45]{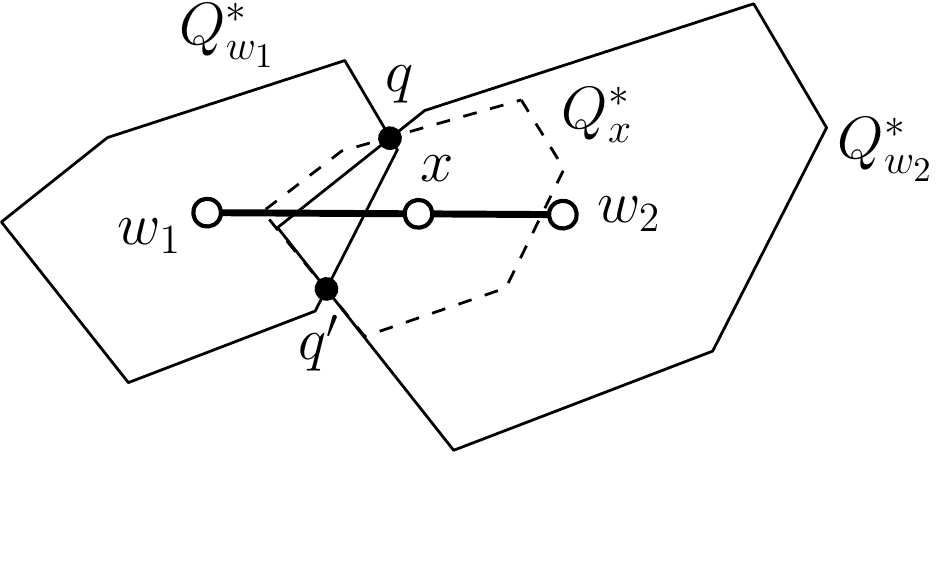} & & 
		\includegraphics[scale=0.5]{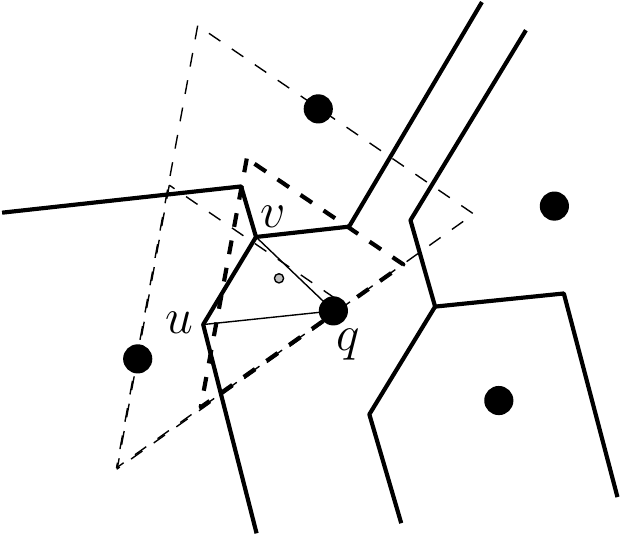} \\
		(a) & \hspace{.25in} & (b) \\
	\end{tabular}
	\caption{(a)~The points $q$ and $q'$ define a Voronoi edge, and $w_1$ and $w_2$ are two adjacent Voronoi edge bends or Voronoi vertices on this edge.  At any point $x$ between $w_1$ and $w_2$, the polygon $Q^*_x$ (shown dashed) is a subset of $Q^*_{w_1} \cup Q^*_{w_2}$. (b)~A triangle $quv$ in the triangulated Voronoi diagram in Figure~\ref{fg:example} is shown.  If a point $p$ conflicts with the white dot (i.e., lies inside the bold dashed triangle), then $p$ conflicts with $u$ or $v$ (i.e., lies inside one of the two light dashed circles.)}
	\label{fg:tech}
\end{figure}

\begin{lemma}
	\label{lem:conflict-tech}
	Let $q$ be a point in some point set $Y$.  Let $quv$ be a triangle in the triangulated $\vor(Y)$.  If a point $p \not\in Y$ conflicts with a point in $quv$, then $p$ conflicts with $u$ or $v$.  Hence, if $p$ conflicts with $V_q(Y)$, $p$ conflicts with a Voronoi edge bend or Voronoi vertex in $\partial V_q(Y)$.
\end{lemma}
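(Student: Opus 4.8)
The plan is to reduce the whole statement to a single homothety identity and then quote Lemma~\ref{lem:enclose}. First I would unwind the definitions: $p$ conflicts with a point $x$ exactly when $p \in Q^*_x$, and the scaling factor of $Q^*_x$ equals $\min_{s\in Y} d_{Q^*}(x,s)$. Since the triangle $quv$ of the triangulated $\vor(Y)$ lies inside $V_q(Y)$, for every $x \in quv$ this minimum is attained by $q$, so $Q^*_x = d_{Q^*}(x,q)\,Q^* + x$ and $q \in \partial Q^*_x$; likewise the side $uv$ is a single straight segment of a Voronoi edge joining the adjacent elements $u,v \in V_Y$, so the same description holds along $uv$. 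I would also dispose of the trivial case $x=q$ at once, since $Q^*_q=\{q\}$ while $p \ne q$ (because $p \notin Y \ni q$).

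The core step is the following. Fix a point $x \in quv$ with $x \ne q$ at which $p$ conflicts; it suffices to show $Q^*_x \subseteq Q^*_u \cup Q^*_v$. Let $y$ be the point where the ray from $q$ through $x$ exits the triangle through the side $uv$, and write $x = (1-t)q + ty$ with $t \in (0,1]$. Since $q-x = t(q-y)$ and the quantity $d_{Q^*}(x,q) = \min\{\lambda : q-x \in \lambda Q^*\}$ is positively homogeneous in the displacement $q-x$, we get $d_{Q^*}(x,q) = t\,d_{Q^*}(y,q)$. Substituting into $Q^*_x = d_{Q^*}(x,q)\,Q^* + x$ and regrouping gives
\[
Q^*_x \;=\; t\bigl(d_{Q^*}(y,q)\,Q^* + y\bigr) + (1-t)\,q \;=\; t\,Q^*_y + (1-t)\,q ,
\]
i.e.\ $Q^*_x$ is the image of the convex set $Q^*_y$ under the homothety about $q$ with ratio $t \le 1$. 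Since $q \in Q^*_y$, that homothety carries $Q^*_y$ into itself, so $Q^*_x \subseteq Q^*_y$. Because $y$ lies on the segment $uv$, which is part of a Voronoi edge between the adjacent Voronoi edge bends or vertices $u$ and $v$, Lemma~\ref{lem:enclose} yields $Q^*_y \subseteq Q^*_u \cup Q^*_v$. Chaining the inclusions, $p \in Q^*_x \subseteq Q^*_u \cup Q^*_v$, so $p$ conflicts with $u$ or with $v$. For the ``hence'' clause, the triangulation partitions $V_q(Y)$ into triangles of the form $quv$; if $p$ conflicts with some point of $V_q(Y)$, that point lies in one such triangle, and the part just proved shows $p$ conflicts with one of its two non-$q$ vertices, which is a Voronoi edge bend or Voronoi vertex in $\partial V_q(Y)$.

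I do not expect a genuine obstacle here: once the identity $Q^*_x = t\,Q^*_y + (1-t)q$ is isolated, everything is convexity plus Lemma~\ref{lem:enclose}, and the rest is bookkeeping. The one place that needs care is unbounded cells, where a triangle $quv$ may have $u$ or $v$ — or even the whole side $uv$ — at infinity, so the ``exit point'' $y$ need not be finite. There I would either appeal to the dummy-point construction from the Preliminaries to reduce to the case in which the relevant cells are bounded, or take $u$ and $v$ to be finite points receding along the two infinite Voronoi edges and pass to the limit in the inclusion $Q^*_x \subseteq Q^*_u \cup Q^*_v$.
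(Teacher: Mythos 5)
Your proof is correct and takes a genuinely different route from the paper's. The paper splits into two cases: if $p$ conflicts with a point of $uv$, it applies Lemma~\ref{lem:enclose} directly; if not, it observes that $uv \subseteq V_q(Y \cup \{p\})$ and then derives a contradiction from the star-shapedness of $V_q(Y \cup \{p\})$ with respect to $q$, since the segment from $q$ to a conflicting point of $quv$ would have to cross $V_p(Y \cup \{p\})$. Your argument instead eliminates the case split entirely: the identity $Q^*_x = t\,Q^*_y + (1-t)q$ (with $q \in Q^*_y$ and $Q^*_y$ convex) shows that the map $x \mapsto Q^*_x$ is ``monotone'' along rays from $q$ inside $V_q(Y)$, so every $Q^*_x$ with $x \in quv$ is nested inside some $Q^*_y$ with $y \in uv$, and Lemma~\ref{lem:enclose} finishes. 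This buys a slightly stronger fact ($Q^*_x \subseteq Q^*_u \cup Q^*_v$ for all $x$ in the triangle, not just that $p \in Q^*_u \cup Q^*_v$), it never needs to reason about the perturbed diagram $\vor(Y \cup \{p\})$, and it isolates the one computation (positive homogeneity of $d_{Q^*}$) that makes the reduction to the edge case work. Your handling of the degenerate case $x=q$ and your flagged concern about unbounded triangles are both appropriate; in the paper's applications $Y$ includes the dummy points so the cells are bounded, and your suggested limiting argument would also work.
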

\cancel{
\begin{proof}
	For any point $y \in \mathbb{R}^2$, let $Q^*_y$ the largest homothetic copy of $Q^*$ centered at $y$ such that $\Int(Q^*_y) \cap Y = \emptyset$.  It suffices to show that $p \in Q^*_u$ or $p \in Q^*_v$.  If $p$ conflicts with any point in $uv$, Lemma~\ref{lem:enclose} implies that $p \in Q^*_u$ or $p \in Q^*_v$.  Suppose that $p$ does not conflict with any point in $uv$.  So $V_p(Y \cup \{p\})$ does not intersect $uv$.  Since $V_q(Y \cup \{p\})$ is star-shaped with respect to $q$, $V_p(Y \cup \{p\})$ cannot lie strictly inside $quv$ because $quv \setminus V_p(Y \cup \{p\})$ is not star-shaped with respect to $q$ otherwise.  So $V_p(Y \cup \{p\})$ must cross $qu$ or $qv$.  Without loss of generality, assume that $V_p(Y \cup \{p\})$ intersects $qu$ at some point $x$.   As $x \in V_q(Y)$, $q \in \partial Q^*_x$.  As $x \in V_p(Y \cup \{p\})$,  $d_{Q^*}(x,p) = d_Q(p,x) \leq d_q(q,x) = d_{Q^*}(x,q)$, which implies that $p \in Q^*_x$.  Since $q$, $x$ and $u$ are collinear, $Q^*_x \subseteq Q^*_{u}$.   
	%Hence, $p \in Q^*_{v_i}$.
\end{proof}
}

\section{Operation phase}
	 
	Given an instance $I = (p_1,\cdots,p_n)$, we construct $\vor(I)$ using the pseudocode below.
	 \begin{quote}
	 \noindent {\sc Operation Phase}
	 \begin{enumerate}
	 	
	 	\item For each $i \in [n]$, query $L_i$ to find the triangle $t_i$ in  the triangulated $\vor(S)$ that contains $p_i$, and if the search fails, query $L_S$ to find $t_i$.
	 	
	 	\item For each $i \in [n]$, search $\vor(S)$ from $t_i$ to find $V_S|_{p_i}$, i.e., the subset of $V_S$ that conflict with $p_i$.  This also gives the subset of $S$ whose Voronoi cells conflict with the input points.  Let $R$ be the union of this subset of $S$ and the set of representative points of all cluster roots in $T_S$.
	
		\item Compute the compression $T_R$ of $T_S$ to $R$.
		
		\item Construct the nearest neighbor graph 1-$\NN_R$ under the metric $d$ from $T_R$.
		
		\item Compute $\vor(R)$ from 1-$\NN_R$.
		
	 	\item Modify $\vor(R)$ to produce $\vor(R \cup I)$.
	 	
	 	\item Split $\vor(R \cup I)$ to produce $\vor(I)$ and $\vor(R)$.  Return $\vor(I)$.
	 	
	 \end{enumerate}
 	\end{quote}

 %As $\vor(S)$ has	$O(mn)$ size, if we generate $\vor(S \cup I)$ and then obtain $\vor(I)$ by splitting $\vor(S \cup I)$, it might take $\Omega(mn)$ time.   Steps~2--5 are designed for selecting a subset $R \subset S$ of $O(n)$ size and working with $\vor(R)$ instead.   The details are explained in Sections~\ref{sec:construct} and~\ref{sec:relax}.  Steps~6 and 7 follow the corresponding steps in the self-improving Euclidean Delaunay algorithm in~\cite{ailon11}; we show how to adapt them for a Voronoi diagram under $d_Q$.  The details are in Section~\ref{sec:I}.  
% We relate the running time of step~1 to the entropy $H$ as in~\cite{ailon11}, but we need to come up with a different proof strategy in order to avoid inflating the limiting complexity to $\Omega\bigl(\frac{1}{\eps}mn + \frac{1}{\eps}H\bigr)$ due to the $\Theta(mn)$ size of $S$.  This proof will be 

We analyze step~1 in Section~\ref{sec:locate}, steps~2~and~3 in Section~\ref{sec:construct}, steps~4~and~5 in Section~\ref{sec:relax}, and steps~6~and~7 in Section~\ref{sec:I}.  Step~1 is the most time-consuming; all other steps run in $O(n)$ expected time or $O(n\log\log m)$ expected time.
 
%Although we also locate the input points in $\del(S)$ in step~1 as in~\cite{ailon11}, there are two main differences in our analysis of the total point location time.   The analysis hinges on showing that given $\del(I)$, one can compute the triangles in $\del(S)$ that contain the input points in $C$ expected comparisons for some suitable $C$.  In~\cite{ailon11}, $C = O(n)$, and a key step is the computation of the Euclidean Voronoi diagram $V$ of $S \cup I$ in $O(n)$ time from the Euclidean Voronoi diagrams $V_1$ and $V_2$ of $S$ and $I$, respectively.  Via lifting, $V_1$ and $V_2$ are the vertical projections of two convex polytopes $C_1$ and $C_2$ in 3D, and $V$ is the vertical projection of the intersections of $C_1$ and $C_2$, which can be computed in $O(n)$ time~\cite{chan16,chazelle92}.   The lifting transform is no longer applicable for the distance function $d_Q$.  Instead, we interpret Voronoi diagrams under $d_Q$ as the lower envelope of vertical convex cones in 3D that open upward.  Then, we apply convexification to flip the convex cones so that the lower envelope becomes a convex polytope, which means that we can merge two Voronoi diagrams under $d_Q$ in linear time via 3D convex polytope intersection.  Still $\vor(S)$ is too big, so we use planar separator to break the location of the input points into two stages.  The details are explained in Section~\ref{sec:analysis}.

\subsection{Point location}
\label{sec:locate}

By Lemma~\ref{lem:train}(b), step~1 runs in $O\bigl(\sum_{i=1}^n \frac{1}{\eps}H(t_i)\bigr)$ expected time, which is $O\bigl(\frac{1}{\eps}n\log m + \frac{1}{\eps}H(t_1,\ldots,t_n)\bigr)$ as we will show later.  By Lemma~\ref{lem:convert} below, if there is an algorithm that can use $\vor(I)$ to determine $t_1,\ldots,t_n$ in $c(n)$ expected time, then $H(t_1,\ldots,t_n) = O(c(n) + H)$, implying that step~1 takes $O\big(\frac{1}{\eps}(n\log m + c(n) + H)\bigr)$ expected time.  Any preprocessing cost of $S$ is excluded from $c(n)$.  We present such an algorithm.   
%Note that we cannot afford to perform point location in $\vor(S)$ as it would take $O(\log m + \log n)$ time per input point.

\begin{lemma}[Lemma~2.3~in~\cite{ailon11}]
	\label{lem:convert}
	Let $\mathscr{D}$ be a distribution on a universe $\cal U$.  Let $X : {\cal U} \rightarrow {\cal X}$, and let $Y : {\cal U} \rightarrow {\cal Y}$ be two random variables.  Suppose that there is a comparison-based algorithm that computes a function $f : (I,X(I)) \rightarrow Y(I)$ in $C$ expected comparisons over $\mathscr{D}$ for every $I \in {\cal U}$.  Then $H(Y) = C + O(H(X))$.
\end{lemma}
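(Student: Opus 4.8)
The plan is the standard source-coding argument; I would prove (and use) the lemma in the form $H(Y)\le C+O(H(X))$, the operative inequality being $H(Y)\le H(X)+C$. First I would separate off the auxiliary variable: by the chain rule $H(X,Y)=H(X)+H(Y\mid X)$ together with $H(Y)\le H(X,Y)$, it suffices to prove $H(Y\mid X)\le C$. Writing $H(Y\mid X)=\sum_x\pr{X=x}\,H(Y\mid X=x)$ and $C=\sum_x\pr{X=x}\,C_x$, where $C_x:=\E{\text{number of comparisons the algorithm makes on }(I,X(I))\mid X=x}$, it is then enough to establish $H(Y\mid X=x)\le C_x$ for every fixed value $x$ of $X$.

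I would fix such an $x$ and work under the conditional law $\mathscr{D}\mid(X=x)$. Running the comparison-based algorithm on $(I,X(I))$, its control flow traces a root-to-leaf path in a fixed binary comparison tree; take the codeword $\mathrm{code}(I)$ to be the string of comparison outcomes along that path. I claim this is a prefix-free code for $Y$ of expected length $C_x$, for three reasons. First, over all instances with $X(\cdot)=x$ the codewords form a prefix-free set, since a codeword ends exactly when the algorithm halts, i.e.\ at a leaf of the tree, and no leaf is an ancestor of another. Second, a decoder that knows $x$ can recover $Y(I)$ from $\mathrm{code}(I)$: it replays the control flow, at each step taking the branch the current state dictates and reading the next bit of the codeword as that comparison's outcome, until the algorithm halts, whereupon it outputs $f(I,X(I))=Y(I)$, which in the comparison model is determined by $x$ and the recorded outcomes alone --- in particular $I$ itself is never consulted, so $\mathrm{code}$ is a genuine code for the random variable $Y$. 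Third, $\E{|\mathrm{code}(I)|\mid X=x}=C_x$ by construction. The entropy lower bound on expected prefix-code length (a consequence of Kraft's inequality) then gives $H(Y\mid X=x)\le C_x$; averaging over $x$ yields $H(Y\mid X)\le C$, hence $H(Y)\le H(X)+C=C+O(H(X))$.

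For completeness I would also record the matching estimate on the comparison count, $C\ge H(Y\mid X)\ge H(Y)-H(X)$: the same comparison tree restricted to inputs with $X(\cdot)=x$ identifies $Y$ given $x$, so its conditional expected depth is at least $H(Y\mid X=x)$, and averaging gives the bound. If the algorithm is randomized, I would first fix its coin tosses to a value that does not increase the expected comparison count, reducing to the deterministic case. I expect the one delicate point to be the faithful modelling of a ``comparison-based algorithm'': one must argue that every branch of the computation is governed by the Boolean outcome of a comparison among the items of $I$ and $X(I)$, and that the output written at a halting state depends on the input only through those outcomes and through $X(I)$ --- which is exactly what makes the codeword together with $x$ sufficient for decoding in the second step above.
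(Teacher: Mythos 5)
Your proposal is correct and is essentially the same source-coding argument used in Ailon et al.~\cite{ailon11} to prove their Lemma~2.3 (this paper merely cites that lemma without reproving it): condition on $X=x$, regard the sequence of comparison outcomes along the root-to-leaf path of the comparison tree as a prefix-free codeword from which $Y$ is recoverable, apply Kraft's inequality to get $H(Y\mid X=x)\le C_x$, and finish with $H(Y)\le H(X)+H(Y\mid X)$. Your handling of the subtle points --- that the codeword set is prefix-free, that $Y$ is a function of $(x,\mathrm{code}(I))$ so $H(Y\mid X=x)\le H(\mathrm{code}(I)\mid X=x)\le \E{|\mathrm{code}(I)|\mid X=x}$, and the reduction from randomized to deterministic algorithms by fixing favorable coins --- is all as it should be.
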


Recall that we have computed in the training phase the subset $B \subseteq S$ whose Voronoi cell boundaries contain some region boundary vertices in the $m^2$-division of $\vor(S)$.   Note that $|B| = O(n)$.   We have also computed $\vor(B)$ and point location data structures associated with the regions in the $m^2$-division.  We use $\vor(B)$ and these point location data structures determines $t_1, \ldots, t_n$ as follows.  

\begin{itemize}
	\item Task~1: Merge $\vor(B)$ with $\vor(I)$ to form the triangulated $\vor(B \cup I)$.   
	\item Task~2: \parbox[t]{4.8in}{Use $\vor(S)$, $\vor(B)$, and $\vor(B\cup I)$ to find the triangles $t_1,\ldots,t_n$.}
%Third, observe that each $t_i$ lies inside some region $\Gamma_i$ in the $m^2$-division, which has $O(m^2)$ size.  So a point location of $p_i$ in the triangulated $\vor(S)$ within $\Gamma_i$ gives $P_i$ in $O(\log m)$ time.
\end{itemize}

\noindent We discuss these two tasks in the following.

\cancel{
\begin{figure}
	\centering
	\begin{tabular}{ccc}
		\includegraphics[scale=0.35]{sep-1} & & 
		\includegraphics[scale=0.35]{sep-2} \\
		(a) & \hspace*{.2in} & (b) \\
	\end{tabular}
\caption{(a)~A schematic drawing of a Euclidean Voronoi diagram.  The white dots are the region boundary vertices, and the edges between them form the region boundaries.  (b)~If a point $p$ conflicts with a point in $\Pi$, but $p$ does not in any of the largest empty circles that are centered at the boundary vertices of $\Pi$, then $p$ lies in $\Pi$.}
\label{fg:sep}
\end{figure}
}

%The third step clearly takes $O(n\log m)$ time.  

\vspace{6pt}

\noindent {\bf Task~1.}
For every point $p \in B$, define a polygonal cone surface $C_p = \bigl\{(a,b,d_Q(p,(a,b)) : (a,b) \in \mathbb{R}^2\bigr\}$.  Each horizontal cross-section of $C_p$ is a scaled copy of $Q$ centered at $p$.  The triangulated $\vor(B)$ is the vertical projection of the lower envelope of $\{C_p : p \in B\}$, denoted by ${\cal L}(B)$.  Similarly, ${\cal L}(I)$ projects to $\vor(I)$.  We take the lower envelope of ${\cal L}(B)$ and ${\cal L}(I)$ to form ${\cal L}(B \cup I)$ which projects to $\vor(B \cup I)$.  We do so in $O(n2^{O(\log^* n)})$ expected time with a randomized algorithm that is based on an approach proposed and analyzed by Chan~\cite[Section~4]{chan16}.   More details are given in Appendix~\ref{app:step1-1}.

\vspace{6pt}

\noindent {\bf Task~2.}  
%We generalize a procedure in~\cite{ailon11} in the Euclidean case that uses $\mathrm{Del}(B)$ and $\mathrm{Del}(B \cup I)$ to find the triangles in $\mathrm{Del}(B)$ that contain the points in $I$.   
%In~\cite{ailon11}, $B$ is the sample set $S$, which means that the number of input points that conflict with a Delaunay triangle is $O(1)$.  This does not work in our case because the training phase did not gather any information about $\vor(B)$.  
Suppose that for an input point $p_i \in I$, we have determined some subset $B_i$ that satisfies $B \subseteq B_i \subseteq S$, and we have computed a Voronoi edge bend or Voronoi vertex $v_i$ in $\vor(B_i)$ that conflicts with $p_i$ and is known to be in $V_S$ or not.

If $v_i \in V_S$, we search $\vor(S)$ from $v_i$ to find $V_S|_{p_i}$ (i.e., the subset of $V_S$ that conflict with $p_i$), which by Lemma~\ref{lem:enclose} also gives the triangle $t_i$ in the triangulated $\vor(S)$ that contains $p_i$.  By Lemma~\ref{lem:R}, the expected total running time of this procedure over all input points is $O(n)$.  

Suppose that $v_i \not\in V_S$.  So $v_i$ is not a region boundary vertex in the $m^2$-division of $\vor(S)$, i.e., $v_i$ lies inside a region in the $m^2$-division of $\vor(S)$, say $\Pi$.   For each boundary vertex $w$ of $\Pi$, let $Q^*_w$ be the largest homothetic copy of $Q^*$ centered at $w$ such that $\Int(Q^*_w) \cap B = \emptyset$.  
%Refer to Figure~\ref{fg:sep}(b).  
These $Q^*_w$'s form an arrangement of $O(m^2)$ complexity, and we locate $p_i$ in this arrangement in $O(\log m)$ time.  It tells us whether $p_i \in Q^*_w$ for some boundary vertex $w$ of $\Pi$.  If so, then $p_i$ conflicts with $w$, which belongs to $V_S$, and we search $\vor(S)$ from $w$ to find $V_S|_{p_i}$ and hence the triangle $t_i$ in the triangulated $\vor(S)$ that contains $p_i$.  Otherwise, $p_i$ must lie inside $\Pi$ in order to conflict with $v_i$ inside $\Pi$ without conflicting with any boundary vertex of $\Pi$.  So we do a point location in $O(\log m)$ time to locate $p_i$ in the portion of the triangulated $\vor(S)$ inside $\Pi$.  This gives $t_i$.  
%By Lemma~\ref{lem:enclose}, $p_i$ must conflict with a Voronoi edge bend or Voronoi vertex that is a vertex of $t_i$.  Therefore, $v'_i$ can be found in $O(1)$ time.

%The last case is that we have directly computed the region $\Pi$ in the $m^2$-division of $\vor(S)$ that contains $p_i$.   So we just do a point location in $O(\log m)$ time to locate $p_i$ in the portion of the triangulated $\vor(S)$ inside $\Pi$.

How do we compute $v_i$ for $p_i$?   We discuss this computation and provide more details of Step~2 in Appendix~\ref{app:step1-2}.  The following lemma summarizes the result that follows from the discussion above.

\begin{lemma}
	\label{lem:alg}
	Given $\vor(I)$, the triangles $t_1,\ldots, t_n$ in the triangulated $\vor(S)$ that contain $p_1,\ldots, p_n \in I$ can be computed in $O\left(n\log m + n2^{O(\log^* n)}\right)$ expected time.
\end{lemma}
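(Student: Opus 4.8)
The plan is to implement the two tasks described in the text and bound their costs separately, then combine. For Task~1, I would treat the triangulated Voronoi diagram under $d_Q$ as the vertical projection of the lower envelope of the cone surfaces $C_p$, so that merging $\vor(B)$ with $\vor(I)$ becomes the problem of overlaying two lower envelopes ${\cal L}(B)$ and ${\cal L}(I)$ of $O(n)$ total complexity. Since homothetic copies of $Q$ behave like pseudo-disks (each pair of cone surfaces intersects in a single connected curve of $O(1)$ complexity), this is exactly the setting of Chan's randomized "$2^{O(\log^* n)}$" technique for merging planar subdivisions of pseudo-disks~\cite{chan16}; invoking it gives the triangulated $\vor(B\cup I)$ in $O\bigl(n2^{O(\log^* n)}\bigr)$ expected time. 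The bookkeeping I must be careful about is that $\vor(B\cup I)$ should carry, for each cell, whether its site is in $B$ or in $I$, and that the planarization/triangulation is produced in the same normalized form as the triangulated $\vor(S)$ so that later point-location queries match up.

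For Task~2, the key structural input is that $B$ was chosen in the training phase to contain every site of $\vor(S)$ whose cell meets a region boundary of the $m^2$-division, so the region boundaries form a subgraph of $\vor(B)$ and, a fortiori, of $\vor(B\cup I)$. Given an input point $p_i$, I first extract from $\vor(B\cup I)$ a witness $v_i$: a Voronoi edge bend or vertex of $\vor(B_i)$ (for some $B\subseteq B_i\subseteq S$) that conflicts with $p_i$, together with the flag saying whether $v_i\in V_S$ — this is the computation deferred to Appendix~\ref{app:step1-2}, which I would simply cite. If $v_i\in V_S$, I walk $\vor(S)$ starting at $v_i$ to collect $V_S|_{p_i}$; by Lemma~\ref{lem:enclose} this walk also reveals the triangle $t_i$ containing $p_i$, and by Lemma~\ref{lem:R} the total cost of all such walks over $i\in[n]$ is $O(n)$ in expectation. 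If $v_i\notin V_S$, then $v_i$ lies strictly inside some region $\Pi$; I query the $O(m^2)$-complexity arrangement of the $Q^*_w$'s (for boundary vertices $w$ of $\Pi$) in $O(\log m)$ time, and either find a boundary vertex $w$ with $p_i\in Q^*_w$ — in which case $w\in V_S$ and I restart the $\vor(S)$-walk from $w$ as above — or else conclude (by the star-shapedness argument, essentially Lemma~\ref{lem:conflict-tech}) that $p_i\in\Pi$, whereupon a single $O(\log m)$ point-location query in the portion of the triangulated $\vor(S)$ inside $\Pi$ yields $t_i$. Summing, Task~2 costs $O(n\log m)$ for the per-point queries plus $O(n)$ expected for the walks.

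Adding the two tasks gives the claimed $O\bigl(n\log m + n2^{O(\log^* n)}\bigr)$ expected time. The main obstacle is Task~1: I need the pseudo-disk property of the level sets of $d_Q$ and the general-position assumptions to guarantee that each pair of cone surfaces crosses in a single arc, so that Chan's merging framework applies and its $2^{O(\log^* n)}$ analysis goes through verbatim; getting the triangulation produced by that framework to be consistent with the fixed triangulation of $\vor(S)$ (so that "the triangle $t_i$ in the triangulated $\vor(S)$" is unambiguous) is the fiddly part, and is what the appendix handles. The subtler point in Task~2 is justifying that a witness $v_i$ always exists and that the $\vor(S)$-walks are charged correctly — but existence follows from Lemma~\ref{lem:conflict-tech} (a conflicting point conflicts with a bend or vertex on the relevant cell boundary), and the charging is exactly Lemma~\ref{lem:R}.
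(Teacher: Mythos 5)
Your proposal matches the paper's proof: the same Task~1/Task~2 decomposition, the same reduction of Task~1 to merging two lower envelopes of cone surfaces via Chan's randomized $O(n2^{O(\log^*n)})$ technique, and the same Task~2 case analysis on whether the witness $v_i$ lies in $V_S$ (walking $\vor(S)$ with cost charged via Lemma~\ref{lem:R}, or falling back to $O(\log m)$ point-location in the per-region structures). The part you defer --- the inductive computation of the witnesses $v_i$ and the verification that their total cost is $O(n\log m)$ in expectation --- is exactly what the paper also delegates to Appendix~\ref{app:step1-2}, so the two arguments sit at the same level of detail.
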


\cancel{
Recall that we have computed $\vor(B \cup I)$ in the first step.  Suppose that $p \in B$.  For every $p_i \in N_p(B \cup I) \cap I$, $p_i$ must conflict with a Voronoi edge bend or Voronoi vertex in $\partial V_p(B)$.   We can check $N_p(B)$ and $N_p(B \cup I)$ in a synchronized cyclic scan to find $v_i$ for all $p_i  \in N_p(B \cup I) \cap I$.   The running time is $O\bigl(|N_p(B)| + |N_p(B \cup I)|\bigr)$.  It is more involved to handle a point $p_i \in I$ such that $N_{p_i}(B \cup I) \subseteq I$.  Suppose that we have previously determined $v_j$ for a point $p_j \in N_{p_i}(B \cup I) \cap I$.   Also, suppose that we inductively guarantee that $v_j \in V_S$.  We search $\vor(S)$ from $v_j$ to find $V_S|_{p_j}$,  which allows us to construct $B_{p_j} = B \cup S_{p_j} \cup S'_{p_j}$, where $S_{p_j}$ consists of the defining points of the elements of $V_S|_{p_j}$,  and $S'_{p_j}$ consists of the defining points of Voronoi edge bends and Voronoi vertices in $\vor(S)$ that are adjacent to the elements of $V_S|_{p_j}$.
In the same search of $\vor(S)$, we also generate $V_{p_j}\bigl(B_{p_j} \cup \{p_j\}\bigr)$.  All of these are done in $O\bigl(\bigl|V_S|_{p_j}\bigr|\bigr)$ time.   By the definition of $B_{p_j}$, $V_{p_j}\bigl(B_{p_j} \cup \{p_j\}\bigr) = V_{p_j}\bigl(S \cup \{p_j\}\bigr)$.   Next, we merge $V_{p_j}\bigl(B_{p_j} \cup \{p_j\}\bigr)$ and $V_{p_j}(B \cup I)$ in linear time to obtain $V_{p_j}(B_{p_j} \cup I)$.  As before, we do a synchronized cyclic scan of $\partial V_{p_j}(B_{p_j} \cup I)$ and $\partial V_{p_j}\bigl(B_{p_j} \cup \{p_j\}\bigr)$ to determine a Voronoi vertex $w_i$ in $\partial V_{p_j}\bigl(B_{p_j} \cup \{p_j\}\bigr)$ that conflicts with $p_i$ for each $p_i \in N_{p_j}(B_{p_j} \cup I) \cap I$.  Using $w_i$ and the definition of $B_{p_j}$, we can find $v_i \in V_S$ in $O(1)$ time.  The running time is $O\bigl(\bigl|V_S|_{p_j}\bigr| + |N_{p_j}(B \cup I)|\bigr)$.  By Lemma~\ref{lem:R},  $\sum_{p_j \in I} \EE{\bigl|V_S|_{p_j}\bigr|} = O(n)$.  As a result, the total expected running time to find the $v_i$'s for all input points $p_i \in I$ is $O(n)$.  More details are given in Appendix~\ref{app:step1-2}.  
In summary, given $\vor(I)$, we can determine $P_1, \ldots, P_n$ in $O(n\log m + n2^{O(\log^* n)})$ expected time.
}

\begin{lemma}
	\label{lem:locate}
	Step~$1$ of the operation phase takes $O\bigl(\frac{1}{\eps}(n\log m \! + \! n2^{O(\log^* n)} \! + \!\! H)\bigr)$ expected time, where $H$ is the entropy of the distribution of $\vor(I)$.
\end{lemma}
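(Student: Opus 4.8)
The plan is to chain three facts already established: Lemma~\ref{lem:train}(b), which bounds the cost of Step~1 by $O(1/\eps)$ times a sum of marginal entropies of the point-location outcomes; a short information-theoretic estimate replacing this sum by the joint entropy $H(t_1,\ldots,t_n)$ at an additive cost of $O(n\log m)$, using that the $t_i$'s become mutually independent once we condition on which product distribution produced $I$; and Lemma~\ref{lem:alg} combined with the conversion Lemma~\ref{lem:convert}, which bound $H(t_1,\ldots,t_n)$ in terms of $H$.

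Concretely: by Lemma~\ref{lem:train}(b) the expected running time of Step~1 is $O\bigl(\frac{1}{\eps}\sum_{i=1}^n H(t_i)\bigr)$, so it is enough to prove $\sum_{i=1}^n H(t_i) = O\bigl(n\log m + n2^{O(\log^* n)} + H\bigr)$. Let $A \in [m]$ be the random index of the distribution $\mathscr{D}_A$ from which $I$ is drawn. For each $i$ I would write $H(t_i) \le H(t_i, A) = H(A) + H(t_i \mid A) \le \log m + H(t_i \mid A)$, and sum to get $\sum_i H(t_i) \le n\log m + \sum_i H(t_i \mid A)$. Conditioned on $A = a$, the coordinates $p_1,\ldots,p_n$ are independent (they follow the product distribution $\mathscr{D}_a$) and $t_i$ is a function of $p_i$ alone, so $t_1,\ldots,t_n$ are conditionally independent given $A$; hence $\sum_i H(t_i \mid A) = H(t_1,\ldots,t_n \mid A) \le H(t_1,\ldots,t_n)$, the last step being that conditioning cannot increase entropy. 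This gives $\sum_i H(t_i) \le n\log m + H(t_1,\ldots,t_n)$, i.e.\ Step~1 runs in $O\bigl(\frac{1}{\eps}n\log m + \frac{1}{\eps}H(t_1,\ldots,t_n)\bigr)$ expected time.

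Finally I would bound $H(t_1,\ldots,t_n)$ by applying Lemma~\ref{lem:convert} with $X(I) = \vor(I)$ and $Y(I) = (t_1,\ldots,t_n)$: Lemma~\ref{lem:alg} provides a comparison-based algorithm that computes $(t_1,\ldots,t_n)$ from $\bigl(I, \vor(I)\bigr)$ in $C = O\bigl(n\log m + n2^{O(\log^* n)}\bigr)$ expected comparisons (the number of comparisons is at most the running time, and the one-time preprocessing of $S$, $B$, $\vor(S)$, $\vor(B)$, and the associated point-location structures is not charged), whence $H(t_1,\ldots,t_n) = C + O\bigl(H(\vor(I))\bigr) = O\bigl(n\log m + n2^{O(\log^* n)} + H\bigr)$. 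Substituting this into the bound of the previous paragraph yields the claimed $O\bigl(\frac{1}{\eps}(n\log m + n2^{O(\log^* n)} + H)\bigr)$ expected time for Step~1. The only delicate point is the middle step: a sum of marginal entropies is in general only a lower bound for the joint entropy via subadditivity, so one genuinely needs the mixture structure — conditional independence of the $t_i$'s given the component $A$, together with the at-most-$\log m$ cost of revealing $A$ — to turn it into an upper bound; the rest is bookkeeping with the earlier lemmas.
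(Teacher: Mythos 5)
Your proof is correct and matches the paper's argument essentially step for step: both bound Step~1 by $O\bigl(\frac{1}{\eps}\sum_i H(t_i)\bigr)$ via Lemma~\ref{lem:train}(b), introduce the mixture index $A$ to write $H(t_i)\le\log m + H(t_i\mid A)$, exploit conditional independence given $A$ to pass to $H(t_1,\ldots,t_n\mid A)\le H(t_1,\ldots,t_n)$, and then apply Lemma~\ref{lem:alg} with Lemma~\ref{lem:convert} to bound $H(t_1,\ldots,t_n)$ by $O(n\log m + n2^{O(\log^* n)} + H)$. Nothing to add.
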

\begin{proof}
	Let $A \in [1,m]$ be a random variable that indicates which distribution in the mixture generates the input instance.  By the chain rule for conditional entropy~\cite[Proposition~2.23]{ray}, $H(t_i) \leq H(t_i) + H(A|t_i) = H(t_i,A) = H(A) + H(t_i|A)$.  It is known that $H(A) \leq \log_2 (\text{domain size of $A$}) = \log_2 m$~\cite[Theorem~2.43]{ray}.  Thus, $\sum_{i=1}^n H(t_i) \leq n\log_2 m + \sum_{i=1}^n H(t_i |A)$.  The variables $t_1|A, \ldots, t_n|A$ are mutually independent.  So $\sum_{i=1}^n H(t_i|A) = H(t_1,\ldots,t_n|A)$.  Since entropy is not increased by conditioning~\cite[Theorem~2.38]{ray}, we get $\sum_{i=1}^n H(t_i|A) = H(t_1,\ldots,t_n|A) \leq H(t_1,\ldots,t_n)$.   By Lemma~\ref{lem:alg}, we can determine $t_1,\ldots, t_n$ using $\vor(I)$ in $O(n\log m + n2^{O(\log^* n)})$ expected time.  So $H(t_1,\ldots,t_n) = O(n\log m + n2^{O(\log^* n)} + H)$ by Lemma~\ref{lem:convert}, where $H$ is the entropy of the distribution of $\vor(I)$. %Hence, $\sum_{i=1}^n H(P_i) = O(n\log m + n2^{O(\log^* n)} + H)$.
\end{proof}

In the Euclidean metric, merging $\mathrm{Vor}(B)$ and $\mathrm{Vor}(I)$ into $\mathrm{Vor}(B \cup I)$ can be reduced to finding the intersection of two convex polyhedra of $O(n)$ size in $\mathbb{R}^3$, which can be solved in $O(n)$ time~\cite{chan16}.  So the expected running time of step~1 improves to $O\bigl(\frac{1}{\eps}(n\log m + H)\bigr)$.

\subsection{Construction of $\pmb R$}
\label{sec:construct}

Step~1 determines the triangle $t_i$ in the triangulated $\vor(S)$ that contains $p_i \in I$.  We search $\vor(S)$ from $t_i$ to find $V_S|_{p_i}$, which takes $O\bigl(\bigl|V_S|_{p_i}\bigr|\bigr)$ time~\cite{KMM93}.  This search also gives the Voronoi cells that conflict with $p_i$.   The total time over all $i \in [n]$ is $O\bigl(\sum_{v \in V_S} |Z_v|\bigr)$, where $Z_v$ is the subset of input points that conflict with $v$.   Since $R$ includes all sites whose cells conflict with the input points and the representative points of all cluster roots in $T_S$, we have $|R| \leq \sum_{v \in V_S} |Z_v| + O(n)$.  The following result follows from Lemma~\ref{lem:R}.

\begin{lemma}
	The set $R$ has $O(n)$ expected size.  Step~2 of the operation phase constructs $R$ in $O(n)$ expected time.
\end{lemma}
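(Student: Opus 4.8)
The statement to prove is: \emph{The set $R$ has $O(n)$ expected size, and step~2 of the operation phase constructs $R$ in $O(n)$ expected time.}

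The plan is to derive both claims directly from Lemma~\ref{lem:R} together with the bookkeeping already described in the paragraph preceding the statement. First I would recall the definition of $R$: it is the union of (i) the subset of $S$ whose Voronoi cells in $\vor(S)$ conflict with some input point, and (ii) the $O(n)$ representative points of the cluster roots in $T_S$. By Lemma~\ref{lem:conflict-tech}, a site $q \in S$ has its cell $V_q(S)$ conflicting with $p_i$ only if $p_i$ conflicts with some element of $V_S \cap \partial V_q(S)$; since each element of $V_S$ lies on the boundary of $O(1)$ Voronoi cells (every Voronoi vertex has degree three and Voronoi edge bends have degree two), the number of distinct sites collected in part~(i) is at most a constant times $\sum_{v \in V_S} |Z_v|$, where $Z_v$ is the set of input points conflicting with $v$. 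Hence $|R| \le c\sum_{v \in V_S} |Z_v| + O(n)$ for some constant $c$.

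Next I would bound the expectation. Since $|Z_v| \le |Z_v|^2$ (as $|Z_v|$ is a nonnegative integer), linearity of expectation gives $\E{\sum_{v \in V_S} |Z_v|} \le \E{\sum_{v \in V_S} |Z_v|^2} = \sum_{v \in V_S}\E{|Z_v|^2}$, and Lemma~\ref{lem:R} states that this last quantity is $O(n)$ with probability at least $1 - O(1/n)$ over the training-phase randomness; conditioned on that event, $\E{|R|} = O(n)$, where this expectation is over the input distribution and the algorithm's random choices in the operation phase.

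For the running-time claim I would argue as follows. Step~1 has already delivered, for each $i$, the triangle $t_i$ of the triangulated $\vor(S)$ containing $p_i$. Starting the search of $\vor(S)$ at $t_i$ and walking outward, one discovers all of $V_S|_{p_i}$ in time proportional to $|V_S|_{p_i}| + O(1)$ by the standard incremental-search argument for Voronoi diagrams under a convex distance function~\cite{KMM93}; the $O(1)$ accounts for starting the walk. Summing over $i$, the total search time is $O\!\bigl(n + \sum_{i \in [n]} |V_S|_{p_i}|\bigr) = O\!\bigl(n + \sum_{v \in V_S} |Z_v|\bigr)$, since $\sum_{i} |V_S|_{p_i}| = \sum_{v \in V_S} |Z_v|$ by double counting of incidences between input points and conflicting elements of $V_S$. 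Collecting the conflicting Voronoi cells and taking the union with the $O(n)$ cluster-root representatives (using the cluster indices stored in the training phase to deduplicate in $O(1)$ time per site) adds only $O(n + \sum_{v \in V_S}|Z_v|)$ more. Taking expectations and invoking Lemma~\ref{lem:R} exactly as above yields $O(n)$ expected time. The one point requiring a little care is the deduplication so that $R$ is a set rather than a multiset---each site can be reached from several distinct $v \in V_S|_{p_i}$ and from several distinct input points---but marking each encountered site and skipping already-marked ones handles this in $O(1)$ amortized time per incidence, which does not change the asymptotics; this is the only mild obstacle, and it is routine.
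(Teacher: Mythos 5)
Your proposal is correct and follows essentially the same route as the paper: bound $|R|$ and the total search time by $\sum_{v\in V_S}|Z_v| + O(n)$, then invoke Lemma~\ref{lem:R} via the elementary inequality $|Z_v|\le|Z_v|^2$ for nonnegative integers. The extra detail you supply (charging conflicting cells to elements of $V_S|_{p_i}$ via Lemma~\ref{lem:conflict-tech} and the $O(1)$ degrees, plus the explicit deduplication) fills in steps the paper leaves implicit, but the underlying argument is the same.
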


\subsection{Extraction of $\pmb{\vor(R)}$}
\label{sec:relax}

\subsubsection{Construction of $\pmb{T_R}$}
\label{sec:relax-extract}

We define a \emph{compression} of a net-tree $T$.   Select a subset $U$ of leaves in $T$.  Let $T' \subseteq T$ be the minimal subtree that spans $U$.  Bypass all internal nodes in $T'$ that have only one child. \emph{The resulting tree is the compression of $T$ to $U$.}   The following result is an easy observation.

\begin{lemma}
	\label{lem:easy}
	Let $T$ be a net-tree.  Let $T_1$ be the compression of $T$ to a subset $U_1$ of leaves.  The compression of $T_1$ to any subset $U_2$ of leaves in $T_1$ can also be obtained by a compression of $T$ to $U_2$.
\end{lemma}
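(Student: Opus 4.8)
The plan is to show that the two-step compression $T \to T_1 \to (\text{compression to } U_2)$ produces exactly the same rooted tree as the single compression $T \to (\text{compression to } U_2)$, by tracking what a compression does to ancestor–descendant relationships. First I would recall the two operations that a compression performs: (i) restrict to the minimal subtree spanning the chosen leaf set, and (ii) suppress (bypass) every internal node with a single child. The key structural fact is that, in any rooted tree, the minimal subtree spanning a leaf set $U$ consists precisely of the leaves in $U$ together with the pairwise nearest common ancestors of leaves in $U$; after suppressing degree-one internal nodes, the surviving internal nodes are exactly the \emph{branching} NCAs, i.e.\ those nodes that are the NCA of two leaves of $U$ and have at least two children each containing a leaf of $U$. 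So a compression is completely determined by the leaf set and by the NCA (equivalently, ancestor–descendant) structure of the original tree restricted to that leaf set.

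The main step is then to verify that compression preserves the relevant ancestor–descendant structure: for leaves $x,y,z \in U_1$, the NCA of $x$ and $y$ in $T_1$ is an ancestor of $z$ in $T_1$ if and only if the same holds in $T$. This follows because in passing from $T$ to $T_1$ we only delete nodes off the spanning subtree of $U_1$ and contract chains, neither of which changes which node is the NCA of two retained leaves nor which retained leaves lie below it. Granting this, since $U_2 \subseteq U_1$, the NCA structure of $T_1$ restricted to $U_2$ equals the NCA structure of $T$ restricted to $U_2$. By the characterization in the previous paragraph, the compression of $T_1$ to $U_2$ and the compression of $T$ to $U_2$ have the same leaves (namely $U_2$) and the same set of branching NCAs arranged in the same ancestor–descendant pattern, hence are the same rooted tree. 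One small point to address is the handling of representative points and levels: since Lemma~\ref{lem:easy} only asserts equality of the trees as rooted trees (the tree-theoretic structure), and a retained node keeps its representative point and level under compression in both the one-step and two-step processes, these labels also agree; I would state this explicitly but it requires no real argument.

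I do not expect a genuine obstacle here — the lemma is called an "easy observation" and the content is purely about composition of tree-contraction operations. The one place to be careful is making precise the claim that "a compression is determined by the leaf set and the restricted NCA structure," which amounts to checking that no two distinct nodes of the spanning subtree of $U$ become identified and that no node with two distinct $U$-leaves below it gets suppressed; both are immediate from the definition of minimal spanning subtree and of degree-one suppression. A fully rigorous write-up would proceed by this route, but for the paper a two- or three-sentence argument along the lines above should suffice.
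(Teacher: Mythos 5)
The paper gives no proof of this lemma; it is presented as an ``easy observation'' with no supporting argument, so there is no official proof to compare against. Your argument is correct and supplies the right justification: you identify that the compression of a tree to a leaf set $U$ is completely determined by $U$ together with the nearest-common-ancestor (equivalently, ancestor--descendant) structure of the tree restricted to $U$ --- the surviving internal nodes are precisely the branching NCAs of pairs in $U$ --- and you then check that first compressing to a superset $U_1 \supseteq U_2$ leaves this restricted structure intact, because deleting off-subtree nodes and contracting degree-one chains does not change which node is the NCA of two retained leaves nor which retained leaves lie below it. The one point I would state explicitly, which you pass over quickly, is that every node $v = \mathrm{NCA}_T(x,y)$ for $x,y \in U_2 \subseteq U_1$ has at least two children in the minimal spanning subtree of $U_1$ (one on the path to $x$, one on the path to $y$), so $v$ is not suppressed when forming $T_1$; this is exactly what guarantees $\mathrm{NCA}_{T_1}(x,y) = \mathrm{NCA}_T(x,y)$ and makes the two-step and one-step compressions agree. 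Your remark about representatives and levels being carried along unchanged is also correct and is the right thing to record, since the net-tree structure used downstream (e.g., in the WSPD construction) depends on those labels.
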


Conceptually, $T_R$ is defined as follows.  Select all leaves of $T_S$ that are points in $R$, and $T_R$ is the compression of $T_S$ to these selected leaves.  Since $R$ includes the representative points of all cluster roots, all ancestors of the cluster roots in $T_S$ will survive the compression and exist as nodes in $T_R$.  The compression affects the clusters only.  More precisely, for each cluster $c$ in $T_S$, we select its leaves that are points in $R$ and compute the compression $T_c$ of the cluster $c$ to these selected leaves.  Substituting every cluster $c$ in $T_S$ by $T_c$ gives the desired $T_R$.  It remains to discuss how to compute the $T_c$'s.

We divide $R$ in $O(n)$ expected time into sublists $R_1, R_2 ,\ldots$ such that $R_c$ consists of the points that are leaves in cluster $c$.  Recall that  every cluster $c$ has an initially empty van Emde Boas tree $\mathit{EB}_c$ for its leaves in left-to-right order.  For each $R_c$, we insert all leaves in $R_c$ into $\mathit{EB}_c$ and then repeatedly perform extract-min on $\mathit{EB}_c$.  This gives in $O(|R_c|\log\log m)$ time a sorted list $R'_c$ of the leaves in $R_c$ according to their left-to-right order in the cluster $c$.  

If $|R'_c| = 1$, then $T_c$ consists of the single leaf in $R_c$.  Suppose that $|R'_c| \geq 2$.  We construct $T_c$ using a stack.  Initially, $T_c$ is a single node which is the first leaf in  $R'_c$.  The stack stores the nodes on the rightmost root-to-leaf path in the current $T_c$, with the root at the stack bottom and the leaf at the stack top.  When we scan the next leaf $q$ in $R'_c$, we find in cluster $c$ the nearest common ancestor $x$ of $q$ and $q$'s predecessor in $R'_c$.  This takes $O(\log\log m)$ time~\cite{TL88}.   If we see $x$ at the stack top, we add $q$ as a new leaf to $T_c$ with $x$ as its parent, and then we push $q$ onto the stack.  Refer to the left image in Figure~\ref{fg:stack}.   If we see an ancestor $z$ of $x$ at the stack top, let $y$ be the node that was immediately above $z$ in the stack and was just popped, we make $x$ the rightmost child of $z$ in $T_c$ (which was $y$ previously), we also make $y$ and $q$ the left and right children of $x$ respectively, and then we push $x$ and $q$ in this order onto the stack.  Refer to the middle image in Figure~\ref{fg:stack}.  If neither of the two conditions above happens and the stack is not empty, we pop the stack and repeat.  Refer to the right image in Figure~\ref{fg:stack}.  If the stack becomes empty, we make $x$ the new root of $T_c$,  we also make the old root of $T_c$ and $q$ the left and right children of $x$ respectively, and then we push $x$ and $q$ in this order onto the stack.  The construction of $T_c$ takes $O(|R_c|\log\log m)$ time.  

\begin{figure}
	\centerline{\includegraphics[scale=0.6]{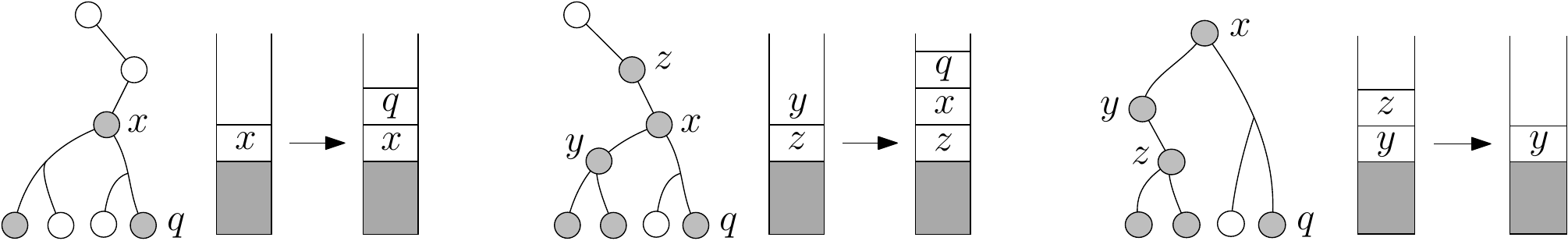}}
	\caption{Three different cases in the manipulation of the stack.  The tree shown is a part of $T_S$.  The gray nodes are nodes in $T_c$.  The gray leaves are leaves in $R_c$.}
	\label{fg:stack}
\end{figure}

%We replace each cluster $c$ in $T_S$ by $T_c$.  The result is the tree $T_R$, which has $O(n)$ size.   For every node $v$ in $T_R$, $P_v$ now stands for the set of points at the leaves that descend from $v$ in $T_R$.  Properties~(a)--(d) of a net-tree in Section~\ref{sec:train} are inherited from $T_S$ to $T_R$.  So $T_R$ is a relaxed net-tree.

\begin{lemma}
	\label{lem:relaxed}
	The compression $T_R$ of $T_S$ to $R$ can be computed in $O(n\log\log m)$ time.
\end{lemma}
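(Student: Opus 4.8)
The plan is to bound the cost of each phase of the construction described immediately before the statement, and then sum. First I would account for the preprocessing that produces the per-cluster point lists: by Lemma~\ref{lem:R} (or rather by the preceding lemma that $R$ has $O(n)$ expected size), $|R| = O(n)$ in expectation, and bucketing the points of $R$ by their cluster index — which was assigned in the training phase and stored with each node of $T_S$ — takes $O(n)$ expected time using a radix-style distribution into the $O(n)$ buckets $R_1, R_2, \ldots$ Thus $\sum_c |R_c| = |R| = O(n)$.

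Next I would handle the sorting step inside each cluster. For each $c$, we insert the $|R_c|$ leaves of $R_c$ into the van Emde Boas tree $\mathit{EB}_c$ associated with $c$ (whose universe is the set of leaves of $c$, of size $O(m)$, ordered by their inorder rank) and then extract them in sorted order; each vEB operation on a universe of size $O(m)$ costs $O(\log\log m)$, so this phase costs $O(|R_c| \log\log m)$ for cluster $c$. Similarly, the stack-based construction of $T_c$ scans the sorted list $R'_c$ once, performing one nearest-common-ancestor query per consecutive pair (at $O(\log\log m)$ each by the NCA structure of~\cite{TL88}) and an amortized-$O(1)$ number of stack pushes and pops per scanned leaf — the standard incremental Cartesian-tree / net-tree compression argument, since every node ever pushed onto the stack is pushed at most once and popped at most once, and each push is charged to the leaf being scanned. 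Hence building $T_c$ also costs $O(|R_c| \log\log m)$. Summing over all clusters gives $\sum_c O(|R_c|\log\log m) = O\bigl((\sum_c |R_c|)\log\log m\bigr) = O(n\log\log m)$.

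Finally I would argue correctness of the assembly: substituting each cluster $c$ of $T_S$ by its compression $T_c$ yields the compression of $T_S$ to $R$. The ancestors of all cluster roots survive because $R$ was defined to contain the representative points of all cluster roots, so the compression touches only the interiors of clusters; within a cluster, the stack procedure is exactly the standard algorithm that, given the leaves of $R_c$ in inorder and an NCA oracle, outputs the subtree of $c$ spanning those leaves with degree-one internal nodes bypassed — that is, the compression of $c$ to $R_c$. Lemma~\ref{lem:easy} then certifies that gluing these per-cluster compressions back into $T_S$ produces a genuine compression of $T_S$ (to the union $\bigcup_c R_c = R$), so the output is $T_R$ as defined.

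The main obstacle I anticipate is not any single step but the bookkeeping in the stack invariant: one must check that the three cases (seeing $x$ at the top, seeing a strict ancestor $z$ of $x$ at the top, seeing neither and popping) are exhaustive and that each correctly maintains "the stack holds exactly the rightmost root-to-leaf path of the current $T_c$," and that the total number of pops is $O(|R_c|)$ rather than something larger — the usual amortization, but it needs to be stated cleanly. Everything else (the $O(n)$ bucketing, the $O(\log\log m)$ per vEB and NCA operation, and the telescoping sum $\sum_c |R_c| = O(n)$) is routine.
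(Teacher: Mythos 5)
Your proof is correct and follows the same approach the paper takes: bucket $R$ by cluster in $O(n)$ expected time, sort each $R_c$ via the precomputed van Emde Boas tree at $O(\log\log m)$ per operation, build each $T_c$ with a single left-to-right scan using NCA queries at $O(\log\log m)$ each and amortized $O(1)$ stack work per leaf, and substitute the $T_c$'s back into $T_S$. One small imprecision: Lemma~\ref{lem:easy} is about iterated compressions being a single compression (the paper uses it later, in step~4 of VorNN), not about decomposing a compression across cluster boundaries; the fact that gluing the per-cluster compressions back yields $T_R$ is instead the direct observation that because $R$ contains every cluster root's representative leaf, all strict ancestors of cluster roots have at least two surviving child subtrees and hence are never bypassed, so the compression acts independently within each cluster.
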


\subsubsection{Construction of the $\pmb k$-nearest neighbor graph}
\label{sec:kNN}

Let $X$ be any subset of $S$.  Assume that the compression $T_X$ of $T_S$ to $X$ is available.  We show how to use $T_X$ to construct in $O(k|X|)$ time the $k$-nearest neighbor graph of $X$ under the metric $d$.  We denote this graph by $k$-$\NN_X$.  We will use the \emph{well-separated pair decomposition} or WSPD for short.  For any $c \geq 1$, a set $\bigl\{\{A_1,B_1\},\ldots, \{A_s,B_s\}\bigr\}$ is a $c$-WSPD of $X$ under $d$ if the following properties are satisfied:
\begin{itemize}
	\item $\forall \, i, \,\,\, A_i, B_i \subseteq X$.
	\item $\forall\, \text{distinct} \, x,y \in X$, $\exists \, i$ such that $\bigl\{x,y\} \in \bigl\{\{a,b\} : a \in A_i \wedge b \in B_i\bigr\}$.
	\item $\forall \, i$, the maximum of the diameters of $A_i$ and $B_i$ under $d$ is less than $\frac{1}{c} \cdot d(A_i,B_i)$.  It implies that $A_i \cap B_i = \emptyset$.
\end{itemize}
It is known that a $c$-WSPD has $O(c^{O(1)}|X|)$ size and can be constructed in $O(c^{(O(1)}|X|)$ time from a net-tree for $X$~\cite{har-peled06}.  The same method works for a compression $T_X$ of $T_S$ to $X$, giving a $c$-WPSD of $O((c+1)^{O(1)}|X|)$ size in $O((c+1)^{O(1)}|X|)$ time.  The details of the WSPD construction are given in Appendix~\ref{app:NN-1}.
%
%\begin{lemma}
	%\label{lem:WSPD}
	%A $c$-WSPD of the points at the leaves of $T_X$ has $O(c^{O(1)}|X|)$ size, and it can be constructed from $T_X$ in $O(c^{O(1)}|X|)$ time.
%\end{lemma}
%
\cancel{

To compute the $k$-nearest neighbor graph under $d$, we transfer a strategy in~\cite{CK95} for constructing it in the Euclidean case from a fair split tree.

Compute a subset $C_v \subseteq X$ for every leaf $v$ of $T_X$ such that $|C_v| = O(k)$ and $C_v$ contains the subset $\bigl\{p \in X : \text{the point in $P_v$ is a $k$-nearest neighbor of $p$}\bigr\}$.  The containment may be strict, so the point in $P_v$ may not be a $k$-nearest neighbor of some point $p \in C_v$.  We will discuss shortly how to compute such $C_v$'s.  After obtaining all $C_v$'s, for each point $p \in X$, construct $L_p = \bigcup \bigl\{ P_v : \text{$v$ is a leaf of $T_X$} \wedge p \in C_v \bigr\}$.   By definition, all $k$-nearest neighbors of $p$ are included in $L_p$, although $L_p$ may contain more points.    We select in $O(|L_p|)$ time the $k$-th farthest point $p'$ in $L_p$ from $p$ under $d$.  Then, we scan in $O(|L_p|)$ time using $d(p,p')$ to find the $k$-nearest neighbors of $p$.  Hence, as $|C_v| = O(k)$, the total running time is $O(\sum_{p \in X} |L_p|) = O(\sum_{\text{leaf $v$}} |C_v|) = O(k|X|)$ plus the time to compute the $C_v$'s.  It remains to discuss the computation of the $C_v$'s.

Compute a 4-WSPD $\Delta$ of $X$ which takes $O(|X|)$ time by Lemma~\ref{lem:WSPD}.  We define a subset $C_u \subseteq X$ for every node $u$ of $T_X$ with a generalized requirement.   For every node $u$ of $T_X$, we require that $|C_u| = O(k)$ and $C_u$ contains the subset $\bigl\{p : \exists \, \{P_w,P_{w'}\} \in \Delta \, \text{s.t.} \, p \in P_{w'}$, $w$ is $u$ or an ancestor of $u$, and $P_u$ contains a $k$-nearest neighbor of $p$ $\bigr\}$.  The containment may be strict, i.e., some $p \in C_u$ may violate the property above.

This generalization is consistent with the requirement for $C_v$ at a leaf $v$ because if the single point $q$ in $P_v$ is a $k$-nearest neighbor of some $p$, then as $\Delta$ is a WSPD, there exists $\{P_w,P_{w'}\} \in \Delta$ such that $q \in P_w$ and $p \in P_{w'}$; $w$ is clearly either $v$ or an ancestor of $v$.

The $C_u$'s are generated in a preorder traversal of $T_X$.   The computation of $C_u$ is complete after visiting $u$.  When visiting a node $u$, we initialize a set $C$ and prune $C$ later to obtain $C_u$.  The initial $C$ is $C_{\mathit{parent}(u)} \cup \bigl\{p : \exists \, \{P_u,P_{w'}\} \in \Delta \; \text{s.t.} \; p \in P_{w'} \wedge |P_{w'}| \leq k \bigr\}$.  (If $u$ is the root of $T$, take $C_{\mathit{parent}(u)}$ to be $\emptyset$.)   We prove in Lemma~\ref{lem:initial} in Appendix~\ref{app:NN-2} that the initial $C$ satisfies the requirement for $C_u$ except that $|C_u|$ may not be $O(k)$.   As $|C_{\mathit{parent}(u)}| = O(k)$ inductively, the initialization of $C$ takes $O\bigl(k + \bigl|\bigl\{p : \exists \, \{P_u,P_{w'}\} \in \Delta \; \text{s.t.} \; p \in P_{w'} \wedge |P_{w'}| \leq k\bigr\}\bigr|\bigr)$ time.

We prune $C$ as follows.  Recall that $\hat{Q}$ is the polygon of $O(1)$ size that induces the metric $d$.   Let $\Xi = (\xi_1,\xi_2,\ldots)$ be a maximal set of points in $\partial \hat{Q}$ in clockwise order such that for any $\xi_i,\xi_j \in \Xi$, $d(\xi_i,\xi_j) \geq 1/8$, and for any $\xi_i \in \Xi$, $d(\xi_i,\xi_{i+1}) \leq 1/4$.   The set $\Xi$ has $O(1)$ size and can be computed in $O(1)$ time by placing points greedily in $\partial \hat{Q}$.  Let $\gamma_i$ be the ray from the origin through $\xi_i$.  These rays divide $\mathbb{R}^2$ into cones.  Fix an arbitrary point $q_0 \in P_u$.   Compute in $O(|C|)$ time the subset $C_i$ of $C$ in the cone bounded by $\gamma_i + q_0$ and $\gamma_{i+1} + q_0$.  Determine the $k$-th nearest point in $C_i$ from $q_0$ in $O(|C_i|)$ time.  Then, scan $C_i$ in $O(|C_i|)$ time to retain only the $k$ nearest points in $C_i$ from $q_0$.  Repeat the same for every $C_i$.   The union of the pruned $C_i$'s is $C_u$ which clearly has $O(k)$ size.  We prove in Lemma~\ref{lem:kn2} in Appendix~\ref{app:NN-2} that the union of the pruned $C_i$'s satisfies the requirement for $C_u$.  The running time over all $C_i$'s is $O\bigl(|C|) = O\bigl(k + \bigl|\bigl\{p : \exists \, \{P_u,P_{w'}\} \in \Delta \; \text{s.t.} \; p \in P_{w'} \wedge |P_{w'}| \leq k\bigr\}\bigr|\bigr)$ time.

In summary, the computation of all $C_u$'s takes $O(k|\Delta|) = O(k|X|)$ time.
}
To compute $k$-$\NN_X$, we transfer a strategy in~\cite{CK95} for constructing a Euclidean $k$-nearest neighbor graph using a WSPD.
%This allows us to construct a $k$-nearest neighbor graph under $d$ using a $T_X$ (and a WSPD obtained from $T_X$ in $O(|X|)$ time).  
The details are given in Appendix~\ref{app:NN-2}.

\begin{lemma}
	\label{lem:knn}
	Given the compression $T_X$ of $T_S$ to any subset $X \subseteq S$, the $k$-$\NN_X$ can be constructed in $O(k|X|)$ time.
\end{lemma}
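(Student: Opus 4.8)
The plan is to transfer the Callahan--Kosaraju construction of a Euclidean $k$-nearest neighbor graph from a well-separated pair decomposition~\cite{CK95} to the norm $d$ (recall $d$ is induced by the centrally symmetric polygon $\hat Q$, hence satisfies the triangle inequality, and is doubling). First I would build a $c$-WSPD $\Delta$ of $X$ under $d$ for a constant $c$ to be fixed later directly from the compression $T_X$, using the construction cited above (Appendix~\ref{app:NN-1}); since a compression of a net-tree still has bounded out-degree and has $|X|$ leaves, it has $O(|X|)$ nodes, so $\Delta$ has $O(|X|)$ size and is built in $O(|X|)$ time. The target is, for each leaf $v$ of $T_X$ (i.e.\ each point of $X$), a set $C_v\subseteq X$ of size $O(k)$ containing every $p$ for which the point stored at $v$ is a $k$-nearest neighbor of $p$. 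Given all the $C_v$, I would set $L_p=\bigcup\{P_v: v\text{ a leaf of }T_X,\ p\in C_v\}$, which then contains all $k$ nearest neighbors of $p$ (if $q$ is a $k$-nearest neighbor of $p$ and $v_q$ is its leaf, the defining property of $C_{v_q}$ puts $p\in C_{v_q}$). Selecting the $k$-th nearest point of $L_p$ from $p$ and rescanning $L_p$ yields $p$'s $k$ nearest neighbors in $O(|L_p|)$ time; since $\sum_p|L_p|=\sum_{v\text{ leaf}}|C_v|=O(k|X|)$, this stage costs $O(k|X|)$.

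To compute the $C_v$ I would generalize the requirement to every node: for a node $u$, $C_u$ should contain $\{p: \exists\,\{P_w,P_{w'}\}\in\Delta$ with $p\in P_{w'}$, $w=u$ or $w$ an ancestor of $u$, and $P_u$ holding a $k$-nearest neighbor of $p\}$, with $|C_u|=O(k)$. These are produced in a preorder traversal. At $u$ I initialize $C:=C_{\mathit{parent}(u)}\cup\{p:\exists\,\{P_u,P_{w'}\}\in\Delta,\ p\in P_{w'},\ |P_{w'}|\le k\}$ (with $C_{\mathit{parent}(u)}=\emptyset$ at the root); the restriction $|P_{w'}|\le k$ is harmless because if $|P_{w'}|>k$ then well-separation already gives a point of $P_{w'}$ at least $k$ strictly closer neighbors inside $P_{w'}$, so $P_u$ can hold none of its $k$ nearest neighbors. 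Then I prune $C$ to $O(k)$ points: fix an arbitrary $q_0\in P_u$, take a fixed $O(1)$-size net $\Xi$ of directions on $\partial\hat Q$ (greedily placed with consecutive net points at $d$-distance $\le 1/4$ and all pairs at $d$-distance $\ge 1/8$, so $|\Xi|=O(1)$ because $\hat Q$ has $O(1)$ complexity), which carves $\R^2$ into a constant number of narrow cones with apex $q_0$, and in each cone keep only the $k$ points of $C$ nearest $q_0$. The work at $u$ is $O(k)$ plus the number of incidences $p\in P_{w'}$ over pairs $\{P_u,P_{w'}\}\in\Delta$ with $|P_{w'}|\le k$; summed over all nodes this is $\sum_{\{A,B\}\in\Delta}\bigl(\min(|A|,k)+\min(|B|,k)\bigr)=O(k|X|)$ by the standard counting bound for WSPDs drawn from bounded-degree net-trees, and the $O(k)$-per-node terms also sum to $O(k|X|)$ since $T_X$ has $O(|X|)$ nodes.

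Two correctness claims remain. The first, that the initial $C$ already contains the required set, is easy: $P_u\subseteq P_{\mathit{parent}(u)}$, so any $k$-nearest neighbor of $p$ inside $P_u$ is one inside $P_{\mathit{parent}(u)}$, whence pairs with $w$ a proper ancestor of $u$ are handled inductively by $C_{\mathit{parent}(u)}$, and $w=u$ is handled by the added term; at a leaf this specializes to exactly the set wanted for $C_v$. The second claim --- that cone-pruning discards nothing it should keep --- is the main obstacle, and is the geometric heart of the lemma. One must show: if $P_u$ holds a $k$-nearest neighbor $q$ of $p$ and $p$ lies in cone $C_i$, then $p$ is among the $k$ points of $C_i$ closest to $q_0$. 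The key ingredients are (i) a packing estimate for the norm $d$: the cones are narrow enough (this is where the $1/8$--$1/4$ spacing of $\Xi$ enters) that for $x,y$ in one cone about $q_0$ with $d(q_0,x)\le d(q_0,y)$ one has $d(x,y)\le d(q_0,y)-\varepsilon\,d(q_0,x)$ for an absolute constant $\varepsilon>0$; and (ii) the observation that, by well-separation together with the nesting of WSPD pairs, every point $p'$ that ever enters $C$ at $u$ satisfies $d(q_0,p')>c\cdot\operatorname{diam}(P_u)\ge c\cdot d(q_0,q)$. Choosing $c>1/\varepsilon$, these combine: for any $p'\in C_i$ with $d(q_0,p')\le d(q_0,p)$, $d(p,p')\le d(q_0,p)-\varepsilon\,d(q_0,p')<d(q_0,p)-d(q_0,q)\le d(p,q)$, so fewer than $k$ such $p'$ can exist, and $p$ survives. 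The Euclidean special case of this estimate is exactly~\cite{CK95}; the delicate part here is making ``narrow enough'' quantitative for the polygonal norm $d$ and threading the well-separation bound through candidates inherited from ancestors, and I would relegate those estimates to the appendix.
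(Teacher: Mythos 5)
Your proposal is correct and follows essentially the same route as the paper's Appendix~\ref{app:NN-2}: build a constant-separation WSPD from $T_X$ (the paper fixes $c=4$), compute candidate sets $C_u$ at every node by a preorder traversal with exactly the same initialization $C_{\mathit{parent}(u)}\cup\{p:\exists\{P_u,P_{w'}\}\in\Delta,\ p\in P_{w'},\ |P_{w'}|\le k\}$, and prune by splitting into $O(1)$ narrow cones determined by a $1/8$--$1/4$ net $\Xi$ on $\partial\hat Q$ and keeping the $k$ points of each cone nearest a representative $q_0\in P_u$. The geometric estimate you defer to the appendix --- $d(x,y)\le d(q_0,y)-\varepsilon\,d(q_0,x)$ within a narrow cone --- is the paper's Lemma~\ref{lem:kn2} with $\varepsilon=1/2$ (derived by intersecting the ray $q_0x$ with the level set $\partial(\lambda_0\hat Q+q_0)$ through $y$ and applying the triangle inequality), and your well-separation bound $d(q_0,p')>c\cdot\operatorname{diam}(P_u)$ is how the paper gets $d(p',q_0)\ge 4\delta_u$, so both reach $d(p,p')<d(p,q)$ for the posited $k$-nearest neighbor $q$; the only cosmetic difference is that the paper phrases the conclusion as $d(p,p')<d(p,y)$ for every $y\in P_u$ rather than for the specific $q$, and fixes the separation constant up front rather than at the end.
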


The next result shows that the vertex degree of $1$-$\NN_X$ is $O(1)$.  Its proof is given in Appendix~\ref{app:NN-3} which is adapted from an analogous result in the Euclidean case~\cite{miller97}.

\begin{lemma}
	\label{lem:deg}
	For any subset $X \subseteq S$, every vertex in $1$-$\NN_X$ has $O(1)$ degree, and adjacent vertices in $1$-$\NN_X$ are Voronoi neighbors in $\vor(X)$.
\end{lemma}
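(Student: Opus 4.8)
The plan is to treat the two assertions in turn: first the ``Voronoi neighbor'' property, then a packing argument for the degree bound.

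\emph{Adjacent vertices are Voronoi neighbors.} Let $\{p,q\}$ be an edge of $1$-$\NN_X$. Since adjacency means one of the two points is a $d$-nearest neighbor of the other, I may assume $q$ is a $d$-nearest neighbor of $p$; set $r = d(p,q)$, so $q \in \partial(p + r\hat{Q})$ and $\Int(p + r\hat{Q}) \cap X = \emptyset$. Because $\hat{Q}$ is the Minkowski sum of $Q^*$ and $Q$ and $q - p \in r\,\partial\hat{Q}$, I would write $q - p = r(a+b)$ with $a \in Q^*$ and $b \in Q$; here $a \in \partial Q^*$ and $b \in \partial Q$ automatically, since an interior summand would force $q-p \in \Int(r\hat{Q})$. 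Put $x = q - ra = p + rb$. Then $q = x + ra \in \partial(x + rQ^*)$, and $p = x + r(-b)$ with $-b \in \partial Q^*$, so $p \in \partial(x + rQ^*)$ too; moreover $x + rQ^* = p + rb + rQ^* \subseteq p + (rQ \oplus rQ^*) = p + r\hat{Q}$, so $\Int(x + rQ^*) \subseteq \Int(p + r\hat{Q})$ is disjoint from $X$. Thus $x + rQ^*$ is a homothetic copy of $Q^*$ with $p$ and $q$ on its boundary and $X$-free interior. Reading the computation of Section~2 in reverse, this puts $x$ on $\partial V_p(X) \cap \partial V_q(X)$, and with the general-position degree-three property of Voronoi vertices it follows that $p$ and $q$ share a Voronoi edge of $\vor(X)$.

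\emph{Bounded degree.} Since each vertex of $1$-$\NN_X$ has a single out-edge, it suffices to bound, for a fixed $p$, the number of points $q$ having $p$ as a $d$-nearest neighbor. Adapting the Euclidean argument of~\cite{miller97}, I would first prove a ``generalized law of cosines'' for the fixed centrally symmetric polygon $\hat{Q}$: there is a constant $\phi_0 = \phi_0(\hat{Q}) \in (0,\pi]$ such that for any point $p$ and any two points $y, y'$ distinct from $p$ with $d(p,y) \leq d(p,y')$ whose directions $y - p$ and $y' - p$ subtend a Euclidean angle smaller than $\phi_0$, one has $d(y,y') < d(p,y')$. After translating $p$ to the origin and scaling by $1/d(p,y')$, this reduces to the purely geometric claim that $\hat{Q} \cap \Gamma(\hat v,\phi_0) \setminus \{0\} \subseteq \Int(\hat v + \hat{Q})$ for every $\hat v \in \partial\hat{Q}$, where $\Gamma(\hat v,\phi_0)$ is the cone of directions within Euclidean angle $\phi_0$ of $\hat v$; central symmetry of $\hat{Q}$ shows the whole segment from the origin to $\hat v$ lies in $\Int(\hat v + \hat{Q})$, and a compactness argument --- using that a fixed polygon has $O(1)$ vertices and hence interior angles bounded away from $0$ --- upgrades this to a full cone of positive opening $\phi_0$. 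Granting the claim, if $q_1,\dots,q_k$ each have $p$ as a $d$-nearest neighbor and $d(p,q_i) \leq d(p,q_j)$, then $d(q_i,q_j) \geq d(q_j,p) = d(p,q_j)$ (as $p$ is a nearest neighbor of $q_j$), so the claim forces the angle between $q_i - p$ and $q_j - p$ to be at least $\phi_0$; hence $k \leq \lceil 2\pi/\phi_0 \rceil = O(1)$, and $\deg_{1\text{-}\NN_X}(p) \leq 1 + k = O(1)$.

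The hard part is the generalized law of cosines, and specifically the uniformity of $\phi_0$ over the possibly very different distances $d(p,y)$ and $d(p,y')$: in the Euclidean case it is immediate from $\|y - y'\|^2 = \|p - y\|^2 + \|p - y'\|^2 - 2\,\|p-y\|\,\|p-y'\|\cos\phi$, but for a polygonal norm it needs the convexity/compactness argument sketched above, which is exactly the ingredient borrowed from~\cite{miller97}. Everything else --- the Minkowski-sum identity used for the Voronoi-neighbor claim and the angular packing bound --- is routine.
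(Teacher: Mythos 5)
Your Voronoi-neighbor claim uses essentially the paper's argument --- both recover, from the largest empty $\hat{Q}$-ball realizing $d(p,q)$, a homothetic copy of $Q^*$ with $p,q$ on its boundary and $X$-free interior, which certifies a shared Voronoi edge. Your explicit Minkowski-sum decomposition $q-p = r(a+b)$ with $a\in\partial Q^*$, $b\in\partial Q$ is a welcome concretization of the paper's terse ``there exists $x$'' and is correct.

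For the degree bound you take a genuinely different route. The paper picks a point $y$ maximizing the number of empty balls $\hat{Q}_p$ covering it, projects each $p_i$ to the point $q_i$ where the segment $p_iy$ meets $\partial(\lambda\hat{Q}+y)$ (with $\lambda$ the largest radius avoiding the $p_i$'s), proves $d(q_i,q_j)\geq\lambda$ by a scaled-wedge/triangle-inequality manipulation, and finishes with an area-packing count of $\frac{\lambda}{2}\hat{Q}$-copies. Because all the $q_i$'s are placed on the single sphere $\partial(\lambda\hat{Q}+y)$, the disparity between $d(p,y)$ and $d(p,y')$ --- which you rightly flag as the hard step of the angular approach --- never arises. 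Your ``generalized law of cosines'' is a valid alternative, but its sketch is imprecise in two places. After reducing to $\hat{Q}\cap\Gamma(\hat v,\phi_0)\setminus\{0\}\subseteq\Int(\hat v+\hat Q)$, the constant $\phi_0$ is controlled near the apex $0$ by the angular distance from the inward direction $\hat v$ (which points from $-\hat v\in\partial\hat Q$ towards the origin) to the edge(s) of $\hat Q$ incident to $-\hat v$; this quantity is positive and has a uniform lower bound because $\partial\hat Q$ has $O(1)$ pieces and $0\in\Int\hat Q$, but it is not ``the interior angles of $\hat Q$'', which is the wrong invariant. Moreover a naive compactness argument does not close the case $\hat y\to 0$, since the Euclidean margin of $\hat y-\hat v$ inside $\hat Q$ shrinks to $0$ there; one must either rescale that margin by $|\hat y|$ or argue locally at the apex using the angular bound just described. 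These gaps are fixable, but they are precisely the subtleties the paper's projection-and-pack argument avoids by normalizing everything onto a single sphere.
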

\cancel{
\begin{proof}
	For every point $p \in R$, let $\hat{Q}_p$ be the largest scaled copy of $\hat{Q}$ centered at $p$ that does not contain any point in $R \setminus \{p\}$ in its interior.  Observe that the vertex degree of $\mathit{NN}_R$ is bounded from above by one plus the maximum number of polygons in $\{\hat{Q}_p : p \in R\}$ that are intersected by a point in $\mathbb{R}^2$.  It is because a point $q \in R$ is connected to its nearest neighbor, and if any other point $p \in R$ is connected to $q$ in $\mathit{NN}_R$, then $q \in \hat{Q}_p$.   
	%Therefore, if $q$ has degree $r$ in $\mathit{NN}_R$, then $q$ intersects $r-1$ polygons in $\{\hat{Q}_p : p \in R\}$.
	
	Let $x$ be a point in $\mathbb{R}^2$ that intersects the maximum number of polygons in $\{\hat{Q}_p : p \in R\}$.  Let $\{\hat{Q}_{p_1}, \ldots, \hat{Q}_{p_s}\}$ denote the polygons intersected by $x$.  We shrink each $\hat{Q}_{p_i}$ towards $p_i$ to a polygon $\hat{Q}_i$ that just contains $x$ in its boundary.  So $\hat{Q}_i \subseteq \hat{Q}_{p_i}$, meaning that the interior of $\hat{Q}_i$ is empty of points in $R$.  Take the largest $\lambda > 0$ such that $\hat{Q}_x = \lambda\hat{Q} + x$ does not contain $p_1, \ldots, p_s$ in its interior.  
	%So $C_x$ contains some $p_i$ in its boundary.  
	For $i \in [1,s]$, let $q_i$ be the intersection between segment $xp_i$ and $\partial \hat{Q}_x$.  We claim that $d(q_i,q_j) \geq \lambda$ for all $i \not= j$.
	
	Without loss of generality, assume that $d(x,p_i) \geq d(x,p_j)$.  Let $y$ be the point in the segment $xp_i$ such that $d(x,y) = d(x,p_j)$.  Since $x$, $y$ and $p_i$ are collinear, we have $d(x,y) = d(x,p_i) - d(p_i,y)$.   Since $p_j$ does not lie inside $\hat{Q}_i$, we have $d(p_i,p_j) \geq d(x,p_i)$, which implies that $d(x,y) \leq d(p_i,p_j) - d(p_i,y) \leq d(p_i,y) + d(p_j,y) - d(p_i,y) = d(p_j,y)$.  Refer to Figure~\ref{}.  Since the wedge $xq_iq_j$ is a scaled copy of $xyp_j$, the inequality $d(p_j,y) \geq d(x,y)$ implies that $d(q_i,q_j) \geq d(x,q_i)$ which is equal to $\lambda$.

   Our claim implies that we can place non-overlapping copies of $\lambda\hat{Q}$ centered at the $q_i$'s.  Each copy has the same area as $\hat{Q}_x$, and all these copies are contained inside $2\hat{Q}_x$.  A packing argument shows that there are $O(1)$ such copies.
\end{proof}
}

\subsubsection{$\pmb{\vor(R)}$ from the nearest neighbor graph}
\label{sec:vorNN}

We show how to construct $\vor(R)$ in $O(n)$ expected time using 1-$\NN_R$.  We use the following recursive routine which is similar to the one in~\cite{BM11} for constructing an Euclidean Delaunay triangulation from the Euclidean nearest neighbor graph.  The top-level call is VorNN$(R,T_R)$.  

\begin{quote}
\noindent VorNN$(Y,T_Y)$
\begin{enumerate}
	\item If $|Y| = O(1)$, compute $\vor(Y)$ directly and return.
	\item Compute 1-$\NN_Y$ under the metric $d$ using $T_Y$.
	\item Let $X \subseteq Y$ be a random sample such that $X$ meets every connected component of 1-$\NN_Y$, and $\mathrm{Pr}[p \in X] = 1/2$ for every $p \in Y$.
	\item Compute the compression $T_X$ of $T_Y$ to $X$.
	\item Call VorNN$(X,T_X)$ to compute $\vor(X)$.
	\item Using 1-$\NN_Y$ as a guide, insert the points in $Y \setminus X$ into $\vor(X)$ to form $\vor(Y)$.
\end{enumerate}
\end{quote}

There are two differences from~\cite{BM11}.  First, we use a compression $T_Y$ of $T_S$ to compute 1-$\NN_Y$ in step~2, which takes $O(|Y|)$ time by Lemma~\ref{lem:knn}.  Second, we need to compress $T_Y$ to $T_X$ in step~4.  This compression works in almost the same way as described in Section~\ref{sec:relax-extract} except that we can afford to traverse $T_Y$ in $O(|Y|)$ time to answer all nearest common ancestor queries required for constructing $T_X$.  Thus, step~4 runs in $O(|Y|)$ time.

Step~3 is implemented as follows~\cite{BM11}.  Form an arbitrary maximal matching of 1-$\NN_Y$.  By the definition of 1-$\NN_Y$, each connected component of 1-$\NN_Y$ contains at least one matched pair.  Randomly select one point from every matched pair.    Then, among those unmatched points in 1-$\NN_Y$, select each one with probability  1/2 uniformly at random.  The selected points form the subset $X$ required in step~3.  The time needed is $O(|Y|)$.  

In step~6, for each $p \in Y \setminus X$ that is connected to some point $q \in X$ in 1-$\NN_Y$, $p$ and $q$ are Voronoi neighbors in $\vor(Y)$ by Lemma~\ref{lem:deg}.  So $p$ conflicts with a point in $V_q(X)$.
%$V_p\bigl(X \cup \{p\}\bigr)$ must intersect $\partial V_q(X)$.  
By Lemma~\ref{lem:conflict-tech}, $p$ conflicts with a Voronoi edge bend or Voronoi vertex in $\partial V_q(X)$, which can be found in $O\bigl(\bigl|\partial V_q(X)\bigr|\bigr)$ time.  
\cancel{

The proof of Lemma~\ref{lem:conflict} is in Appendix~\ref{app:vorNN}.

\begin{lemma}
	\label{lem:conflict}
	If $p$ conflicts with $V_q(X)$, then $p$ conflicts with a Voronoi edge bend or Voronoi vertex in $\partial V_q(X)$.
\end{lemma}
\cancel{
\begin{proof}
	Since $p$ conflicts with $q$, $p$ conflicts with a point $x$ in the boundary of $V_q$.  Let $v_iv_{i+1}$ be the boundary edge of $V_q$ that contains $x$, where $v_i$ and $v_{i+1}$ are successive Voronoi edge bend or Voronoi vertex in clockwise order.  If $x \in \{v_i,v_{i+1}\}$, we are done.  So assume that $x$ lies in the interior of $v_iv_{i+1}$.  Let $Q^*_x$, $Q^*_{v_i}$, and $Q^*{v_{i+1}}$ be the largest homothetic copies of $Q^*$ that are centered at $x$, $v_i$ and $v_{i+1}$, respectively, and do not contain any point of $X$ in their interior.  Since $p$ conflicts with $x$, $p \in Q^*_x$.  We show that $Q^*_x \subseteq Q^*_{v_i} \cup Q^*_{v_{i+1}}$ which implies that $p$ conflicts with $v_i$ or $v_{i+1}$.
	
    Let $q'$ be the Voronoi neighbor of $q$ that define $v_i$ and $v_{i+1}$ together with $q$ and possibly another point in $X$.  We place an imaginary point $q_i$ on $\partial Q^*_{v_i} \setminus Q^*_{v_{i+1}}$ and an imaginary point $q_{i+1}$ on $\partial Q^*_{v_{i+1}} \setminus Q^*_{v_i}$.  Observe that $v_iv_{i+1}$ is also part of a Voronoi edge in $\vor(\{q,q',q_i,q_{i+1}\})$, meaning that $q_i \not\in Q^*_x \cup Q^*_{v_{i+1}}$ and $q_{i+1} \not\in Q^*_x \cup Q^*_{v_i}$.    Since $\partial Q^*_x$ and $\partial Q^*_{v_i}$ intersect at $q$ and $q'$, which cannot lie on the same side of $Q^*_x$ or $Q^*_{v_i}$ by the general position assumption, $\partial Q^*_x$ and $\partial Q^*_{v_i}$ intersects transversally at $q$ and $q'$.   The portion of $\partial Q^*_x$ that lie on the same of $qq'$ as $q_i$ must be contained in $Q^*_{v_i}$ in order that $q_i \not\in Q^*_x$.  Similarly, the portion $\partial Q^*_x$ that lie on the same side of $qq'$ as $q_{i+1}$ must be contained in $Q^*_{v_{i+1}}$.  Hence, $Q^*_x \subset Q^*_{v_i} \cup Q^*_{v_{i+1}}$.
\end{proof}
}
}
After finding a Voronoi edge bend or Voronoi vertex $v$ in $\partial V_q(X)$ that conflicts with $p$, we search $\vor(X)$ from $v$ to find all Voronoi edge bends and Voronoi vertices that conflict with $p$.  In the same search of $\vor(X)$ , we modify $\vor(X)$ into $\vor\bigl(X \cup \{p\}\bigr)$ as in a randomized incremental construction~\cite{KMM93}.  By the Clarkson-Shor analysis~\cite{CS89}, the expected running time of the search of $\vor(X)$ and the Voronoi diagram modification over the insertions of all points in $Y \setminus X$ is $O(|Y|)$.  We spend $O\bigl(\bigl|\partial V_q(X)\bigr|\bigr)$ time to find $v$.  It translates to an $O(1)$ charge at each vertex of $V_q(X)$.  This charging happens only for $q$'s neighbors in 1-$\NN_Y$.  By Lemma~\ref{lem:deg}, there are $O(1)$ such neighbors of $q$, so the charge at each vertex of $V_q(X)$ is $O(1)$.  Moreover, if a vertex of $V_q(X)$ is destroyed by the insertion of a point from $Y \setminus X$, that vertex will not reappear.  So the $O\bigl(\bigl|\partial V_q(X)\bigr|\bigr)$ cost is absorbed by the structural changes which is already taken care of by the Clarkson-Shor analysis.  Unwinding the recursion gives a total expected running time of $O(|R| + |R|/2 + |R|/4 + \cdots) = O(|R|)$.  For completeness, more details of the analysis given in~\cite{BM11} can be found in Appendix~\ref{app:vorNN}.

\begin{lemma}
	\label{lem:delNN}
	\emph{VorNN}$(R,T_R)$ computes $\vor(R)$ in $O(|R|)$ expected time.
\end{lemma}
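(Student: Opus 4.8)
The plan is to establish both claims by induction on $|R|$, transferring the Euclidean argument of~\cite{BM11} and checking that every combinatorial ingredient it uses survives the passage to the convex distance function $d_Q$; it does, because homothetic copies of $Q^*$ are pseudo-disks.

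\textbf{Correctness.} I would show by induction on $|Y|$ that \emph{VorNN}$(Y,T_Y)$ returns $\vor(Y)$; the base case $|Y|=O(1)$ is trivial. For the inductive step, step~5 returns $\vor(X)$ by the inductive hypothesis, so it suffices to check that step~6 turns $\vor(X)$ into $\vor(Y)$. When $|Y|\ge 2$ every vertex of 1-$\NN_Y$ has out-degree at least one, so 1-$\NN_Y$ has no isolated vertex, each connected component contains a matched pair, and hence the sample $X$ from step~3 meets every connected component of 1-$\NN_Y$. Process the points of $Y\setminus X$ in the order of a traversal of 1-$\NN_Y$ started from the vertices of $X$: each $p\in Y\setminus X$ is first reached along an edge $\{p,q\}$ with $q$ already in the current diagram. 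Since $p$ and $q$ are 1-$\NN_Y$-adjacent, Lemma~\ref{lem:deg} makes them Voronoi neighbors of $\vor(Y)$; the empty homothet of $Q^*$ witnessing this is also empty of the current point set (a subset of $Y$) with $p$ on its boundary, so $p$ and $q$ become Voronoi neighbors once $p$ is inserted, i.e.\ $p$ conflicts with the current cell of $q$. By Lemma~\ref{lem:conflict-tech}, $p$ then conflicts with some Voronoi edge bend or Voronoi vertex of $\partial V_q$, found by a cyclic scan of $\partial V_q$; from there we flood the conflict region of $p$ and perform the usual incremental Voronoi update~\cite{KMM93}. Once all of $Y\setminus X$ is inserted, the current point set is $Y$, so the result is $\vor(Y)$.

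\textbf{Cost of one call.} Fix a call \emph{VorNN}$(Y,T_Y)$. Step~2 takes $O(|Y|)$ time by Lemma~\ref{lem:knn} (with $k=1$); step~3 (a maximal matching of 1-$\NN_Y$, one endpoint kept per matched pair, an independent coin per unmatched vertex) takes $O(|Y|)$ time; step~4 takes $O(|Y|)$ time because we may traverse the $O(|Y|)$-node tree $T_Y$ to answer every nearest-common-ancestor query used by the compression procedure of Section~\ref{sec:relax-extract}. For step~6 the work splits into (i) the cyclic scans of cells $\partial V_q$ that locate a starting conflict and (ii) the incremental updates. For (i), each guide edge $\{p,q\}$ costs $O(|\partial V_q|)$; since $q\in X$ has $O(1)$ neighbors in 1-$\NN_Y$ by Lemma~\ref{lem:deg}, this is an $O(1)$ charge to each Voronoi vertex of $\vor(X)$, and a vertex of $\vor(X)$ destroyed during the insertions never reappears, so (i) is dominated by (ii). For (ii) I would run a backwards-analysis / Clarkson--Shor argument on the random sample $X$: each point of $Y$ lies in $X$ with probability $1/2$, and the standard bounded-level complexity estimates for planar Voronoi diagrams apply to $d_Q$ because $Q^*$-homothets are pseudo-disks, so the expected number of features of $\vor(X)$ weighted by the number of points of $Y$ conflicting with them is $O(|Y|)$; charging each such feature at its (unique) destruction and charging each feature newly created during an insertion at its later destruction~\cite{CS89} gives that the expected cost of step~6, given $Y$, is $O(|Y|)$. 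Hence one call costs $O(|Y|)$ in expectation plus the cost of the recursive call.

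\textbf{Summing over the recursion.} When $|Y|\ge 2$ the maximal matching is non-empty, so step~3 discards at least one point and $|X|\le |Y|-1$; thus the recursion terminates within at most $|R|$ levels. Moreover $E[\,|X|\mid Y\,]=|Y|/2$, since exactly one of each matched pair survives and each unmatched point survives with probability $1/2$; iterating, the expected total size of the subproblems at recursion depth $i$ is at most $|R|/2^{i}$. Combining with the per-call bound $O(|Y|)$ and linearity of expectation, the expected running time of \emph{VorNN}$(R,T_R)$ is $\sum_{i\ge 0} O(|R|/2^{i}) = O(|R|)$. The step I expect to be the main obstacle is making the Clarkson--Shor analysis of step~6 airtight, namely verifying that the 1-$\NN_Y$-guided insertion order does not spoil the backwards-analysis accounting and that the linear bounds on bounded-level Voronoi features genuinely carry over from disks to $Q^*$-homothets; both points are dealt with in~\cite{BM11,CS89} and in Appendix~\ref{app:vorNN}.
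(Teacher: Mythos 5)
Your overall plan matches the paper's: use Lemma~\ref{lem:deg} and Lemma~\ref{lem:conflict-tech} to find, for each $p\in Y\setminus X$, a conflicting feature of $\partial V_q(X)$ via a cyclic scan guided by 1-$\NN_Y$, then flood and update incrementally, then appeal to a Clarkson--Shor-style bound and the geometric shrinkage $\mathrm{E}[|X|] = |Y|/2$ to telescope the recursion. Your correctness argument is somewhat more explicit than what the paper writes out, and your accounting for the cyclic-scan cost (item~(i)) coincides with the paper's ``$O(1)$ charge per feature of $V_q(X)$; destroyed features never reappear.''

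The gap, which you yourself flag, is in item~(ii). You appeal to ``backwards analysis'' on the sample $X$, which presupposes a random insertion order for $Y\setminus X$; the order here is dictated by 1-$\NN_Y$, not random, so backwards analysis is not available. You also write ``each point of $Y$ lies in $X$ with probability $1/2$,'' but the matched-pair rule in step~3 introduces dependence among these events that has to be controlled, not waved through. The paper closes both holes with one clean observation that does not use insertion order at all: a feature $w$ can be created at any time during step~6 only if $X \cap Q^*_w = \emptyset$, where $Q^*_w$ circumscribes the defining points of $w$. Writing $s = |Y \cap Q^*_w|$: if $Y\cap Q^*_w$ contains a matched pair, this probability is $0$; otherwise the $s$ points are sampled independently and the probability is at most $2^{-s}$. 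Clarkson--Shor bounds the number of features $w$ of level at most $s$ by $O(|Y|s^2)$, so the expected number of features created in step~6 is $O\bigl(\sum_{s\ge 0} |Y|s^2/2^s\bigr)=O(|Y|)$, and the total work (creations plus destructions, destructions charged to the earlier creations) is $O(|Y|)$ in expectation. This per-feature probability bound is the missing ingredient that makes the analysis order-independent; with it, the rest of your argument goes through as written.
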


\subsection{Computing $\pmb{\vor(I)}$ from $\pmb{\vor(R)}$ and $\pmb I$}
\label{sec:I}

%As in~\cite{ailon11}, for each $p \in R$, we decompose the Voronoi cell $V_p$ into several pieces and prove that the portion of $\vor(R \cup I)$ within each piece is defined by a few points.  This allows us to locally modify each cell of $\vor(R)$to obtain $\vor(R \cup I)$.  Since $V_p$ is not a convex polygon, we cannot triangulate $V_p$ by adding chords.  Instead, $V_p$ is star-shaped with respect to $p$~\cite{}, so we add edges from $p$ to the vertices of $V_p$ to divide $V_p$ into \emph{sectors}.  Let $v_1, v_2, \ldots$ denote the vertices of $V_p$ in clockwise order.  We denote each sector as $pv_iv_{i+1}$, the boundary curve segment of $pv_iv_{i+1}$ joining $v_i$ and $v_{i+1}$ by $\xi_i$, and Voronoi neighbor of $p$ that induces $\xi_i$ by $p_i$.   For every point $x$ in the boundary of $V_p$, let $Q^*_x$ denote the largest empty copy of $Q^*$ centerd at $x$.  We first prove a technical result.

Let $q$ be a point in $R$.  Let $v_1, v_2, \ldots$ be the vertices of $V_q(R)$, in clockwise order, which may be Voronoi edge bends or Voronoi vertices.   Let $Q^*_{v_i}$ denote the largest homothetic copy of $Q^*$ centered at $v_i$ such that $\Int(Q^*_{v_i}) \cap R = \emptyset$.  Let $Z_{v_i} = Q^*_{v_i} \cap I$ where $I$ is an input instance.

\begin{lemma}
	\label{lem:limit}
   The portions of $\vor(R \cup I)$ and $\vor\bigl(\{q\} \cup Z_{v_i}\cup Z_{v_{i+1}}\bigr)$ inside the triangle $qv_iv_{i+1}$ are identical.
\end{lemma}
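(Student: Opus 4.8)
The plan is to compare the two diagrams through their nearest‑site maps. For any finite $\Sigma$ and any $p\in\Sigma$ we have $V_p(\Sigma)=\{x:p\in\arg\min_{s\in\Sigma}d_Q(s,x)\}$, so the subdivision that $\vor(\Sigma)$ induces inside a region $\Omega$ is completely determined by the map $x\mapsto\arg\min_{s\in\Sigma}d_Q(s,x)$ on $\Omega$. Put $W=\{q\}\cup Z_{v_i}\cup Z_{v_{i+1}}$, and note $W\subseteq R\cup I$, which gives $\min_{s\in W}d_Q(s,x)\ge\min_{s\in R\cup I}d_Q(s,x)$ for free. It therefore suffices to show that for every point $x$ in the (relatively open) triangle $qv_iv_{i+1}$, every nearest site of $R\cup I$ to $x$ already lies in $W$: this forces the two minima and their $\arg\min$'s to coincide at $x$, hence the two subdivisions agree inside the triangle. (I would phrase the claim for the open triangle: on the segment $v_iv_{i+1}$, which is part of a Voronoi edge of $\vor(R)$, a second site of $R$ ties with $q$ and the two diagrams genuinely differ there, which is exactly why the statement speaks of points inside the triangle.)

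Next I would record two facts about such an $x$. Because $qv_iv_{i+1}$ is one of the triangles in the fan triangulation of the star‑shaped cell $V_q(R)$ from its site $q$, these triangles partition $\overline{V_q(R)}$ and the relative interior of $qv_iv_{i+1}$ lies in $\Int V_q(R)$; since bisectors under $d_Q$ are $1$‑dimensional curves, $\Int V_q(R)$ is exactly the set of points for which $q$ is the strictly nearest site of $R$, so $q$ is the unique nearest site of $R$ to $x$ and $d_Q(q,x)=\min_{s\in R}d_Q(s,x)$ is precisely the scaling factor of $Q^*_x$, the largest homothet of $Q^*$ centered at $x$ whose interior avoids $R$ (using $d_{Q^*}(x,s)=d_Q(s,x)$). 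Now take any $s\in R\cup I$ with $d_Q(s,x)=\min_{t\in R\cup I}d_Q(t,x)\le d_Q(q,x)$. If $s\in R$, then also $d_Q(s,x)\ge\min_{t\in R}d_Q(t,x)=d_Q(q,x)$, so $d_Q(s,x)=d_Q(q,x)$ and, by uniqueness, $s=q\in W$. If $s\in I\setminus R$, then $d_{Q^*}(x,s)=d_Q(s,x)\le d_Q(q,x)$ does not exceed the scaling factor of $Q^*_x$, i.e.\ $s\in Q^*_x$, so $s$ conflicts with $x$; applying Lemma~\ref{lem:conflict-tech} to the point set $R$ and the triangle $qv_iv_{i+1}$ of its triangulated Voronoi diagram, $s$ conflicts with $v_i$ or with $v_{i+1}$, i.e.\ $s\in Q^*_{v_i}\cup Q^*_{v_{i+1}}$, hence $s\in(Q^*_{v_i}\cup Q^*_{v_{i+1}})\cap I=Z_{v_i}\cup Z_{v_{i+1}}\subseteq W$. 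In all cases $s\in W$, which proves the claim and therefore the lemma.

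The one nonroutine ingredient is Lemma~\ref{lem:conflict-tech}, which does the real geometric work of transferring a conflict with an interior point of the triangle to a conflict with $v_i$ or $v_{i+1}$; everything else is the bookkeeping of $\arg\min$'s plus the (standard) observation that the open triangle sits inside $\Int V_q(R)$. The only residual care I anticipate is handling an unbounded cell $V_q(R)$, where $v_i$ or $v_{i+1}$ may be an endpoint of an infinite Voronoi edge (a ``point at infinity''); there the triangle $qv_iv_{i+1}$ is an unbounded region, but the same conflict argument applied along that infinite Voronoi edge goes through unchanged, so this is not a genuine obstacle.
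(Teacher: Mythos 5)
Your proof is correct and follows essentially the same route as the paper: you reduce the claim to showing that on (the interior of) $qv_iv_{i+1}$ the nearest sites of $R\cup I$ all lie in $\{q\}\cup Z_{v_i}\cup Z_{v_{i+1}}$, split on whether a nearest site $s$ is in $R$ or in $I\setminus R$, and in the latter case invoke Lemma~\ref{lem:conflict-tech} to place $s$ in $Q^*_{v_i}\cup Q^*_{v_{i+1}}$. The paper's own proof is a condensed version of exactly this argument, and your explicit caveat about the segment $v_iv_{i+1}$ (where a second site of $R$ ties with $q$, so ``inside'' must mean the open triangle) is a correct reading of the intended statement.
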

\begin{proof}
   Let $p$ be a point in $(R \cup I) \setminus \{q\}$ that contributes to $\vor(R \cup I)$ inside $qv_iv_{i+1}$.  As $qv_iv_{i+1} \subseteq V_q(R)$, $p \not\in R$.  So $p \in I$.  By Lemma~\ref{lem:conflict-tech}, $p$ conflicts with $v_i$ or $v_{i+1}$.
   \cancel{
   It suffices to show that $p \in Q^*_{v_i}$ or $p \in Q^*_{v_{i+!}}$.  
   %If $v_i \in V_p(R \cup I)$, then $p \in Q^*_{v_i}$.  Similarly, if $v_{i+1} \in V_p(R \cup I)$, then $p \in Q^*_{v_{i+1}}$.  Suppose that neither $v_i$ nor $v_{i+1}$ lies in $V_p(R \cup I)$.  
   If $p$ conflicts with any point in $v_iv_{i+1}$, Lemma~\ref{lem:enclose} implies that $p \in Q^*_{v_i}$ or $p \in Q^*_{v_{i+1}}$.  Suppose that $p$ does not conflict with any point in $v_iv_{i+1}$.  So $V_p(R \cup \{p\})$ does not intersect $v_iv_{i+1}$.  Since $p$ conflicts with $V_q(R)$, $p$ must conflict with a Voronoi edge bend or Voronoi vertex in $\partial V_q(R)$ by Lemma~\ref{lem:conflict}.   So $V_p(R \cup \{p\})$ must cross $qv_i$ or $qv_{i+1}$ in order to get out of $qv_iv_{i+1}$.  Without loss of generality, assume that $V_p(R \cup \{p\})$ intersects $qv_i$ at some point $x$.   Let $Q^*_x$ be the largest homothetic copy of $Q^*$ centered at $x$ such that $\Int(Q^*_x) \cap R = \emptyset$.   As $x \in V_q(R)$, $q \in \partial Q^*_x$.  As $x \in V_p(R \cup \{p\})$,  $d_{Q^*}(x,p) = d_Q(p,x) \leq d_q(q,x) = d_{Q^*}(x,q)$, which implies that $p \in Q^*_x$.  Since $q$, $x$ and $v_i$ are collinear, $Q^*_x \subseteq Q^*_{v_i}$.   
   %Hence, $p \in Q^*_{v_i}$.
}
\end{proof}

Step~2 of the operation phase has found $V_S|_{p_i}$ for each $p_i \in I$.  $V_S|_{p_i}$ and the portions of the Voronoi edges of $\vor(S)$ among the points in $V_S|_{p_i}$ are preserved in $\vor(R)$ because $R$ includes the subset of $S$ whose  Voronoi cells conflict with the input points.  Hence, $\bigcup_{i=1}^n V_S|_{p_i}$ is  the set $U_R$ of Voronoi edge bends and Voronoi vertices in $\vor(R)$ that conflict with the input points (refer to Appendix~\ref{app:I} for a proof).  By Lemma~\ref{lem:limit}, we locally compute pieces of $\vor(R \cup I)$ and stitch them together.  The running time is  $O\bigl(\sum_{u,v} (|Z_u| + |Z_v|)\log (|Z_u| + |Z_v|)\bigr)$, where the sum is over all pairs $\{u,v\}$ of adjacent Voronoi edge bends and Voronoi vertices in $\vor(R)$ such that $\{u,v\} \cap U_R \not= \emptyset$.  Since the degrees of Voronoi edge bends and Voronoi vertices are two and three respectively, this running time can be bounded by $O\big(\sum_{v \in U_R} |Z_v|\log |Z_v|\bigr)$.  Since $U_R \subseteq V_S$, by Lemma~\ref{lem:R},  step~6 of the operation phase computes $\vor(R \cup I)$ in $O(n)$ expected time.

\cancel{
\begin{lemma}
	\label{lem:step5}
	Step~6 of the operation phase computes $\vor(R \cup I)$ in $O(n)$ expected time.
\end{lemma}
}

In step~7, the splitting of $\vor(R \cup I)$ into $\vor(R) $ and $\vor(I)$ can be performed in $O(n)$ expected time by using the algorithm in~\cite{chazelle02} for splitting a Euclidean Delaunay triangulation.  That algorithm is combinatorial in nature.  It relies on the Voronoi diagram being planar and of $O(n)$ size,  all points having $O(1)$ degrees in the nearest neighbor graph, and that one can delete a site from a Voronoi diagram in time proportional to its number of Voronoi neighbors.  The first two properties hold in our case, and it is known how to delete a site from an abstract Voronoi diagram so that the expected running time is proportional to its number of Voronoi neighbors~\cite{JP18}.

\begin{lemma}
	Step~6 of the operation phase computes $\vor(R \cup I)$ in $O(n)$ expected time, and step~7  splits $\vor(R \cup I)$ into $\vor(I)$ and $\vor(R)$ in $O(n)$ expected time.
\end{lemma}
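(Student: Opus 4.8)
The plan is to bound the two running times separately. For step~6 I would combine the locality statement of Lemma~\ref{lem:limit} with the second-moment estimate of Lemma~\ref{lem:R}. Recall that Step~2 of the operation phase has already produced $V_S|_{p_i}$ for every $p_i\in I$, and that $U_R:=\bigcup_{i=1}^n V_S|_{p_i}$ is exactly the set of Voronoi edge bends and Voronoi vertices of $\vor(R)$ that conflict with some input point (the nontrivial inclusion being the one referred to the appendix); equivalently, a vertex $v$ of $\vor(R)$ lies in $U_R$ iff $Z_v\neq\emptyset$. By Lemma~\ref{lem:limit}, for each triangle $qv_iv_{i+1}$ of the triangulated $\vor(R)$, the restriction of $\vor(R\cup I)$ to that triangle coincides with the restriction of $\vor(\{q\}\cup Z_{v_i}\cup Z_{v_{i+1}})$. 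I would compute each of these local diagrams in $O\bigl((|Z_{v_i}|+|Z_{v_{i+1}}|)\log(|Z_{v_i}|+|Z_{v_{i+1}}|)\bigr)$ time and glue the pieces along the shared triangle edges; the gluing is automatically consistent since, by Lemma~\ref{lem:limit}, each local piece agrees with $\vor(R\cup I)$ on the closure of its triangle, so two pieces sharing an edge $qv_i$ agree there. A triangle incident to no vertex of $U_R$ has $Z_{v_i}=Z_{v_{i+1}}=\emptyset$, hence coincides with $\vor(R)$ and needs no work, and since each vertex has degree at most three the total time is $O\bigl(\sum_{v\in U_R}|Z_v|\log|Z_v|\bigr)$. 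I would finish by noting $|Z_v|\log|Z_v|=O(|Z_v|^2)$ and $U_R\subseteq V_S$, so Lemma~\ref{lem:R} (on its event of probability $\ge 1-O(1/n)$) bounds the expected total by $O(n)$.

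For step~7, I would run the splitting algorithm of Chazelle et al.~\cite{chazelle02} on $\vor(R\cup I)$ to obtain $\vor(R)$ and $\vor(I)$; since $|R|=O(n)$ in expectation we have $|R\cup I|=O(n)$ in expectation, so it suffices to check that the algorithm's three structural prerequisites hold in our setting. These are: (i) the diagram is a planar subdivision of linear size, which is true for Voronoi diagrams under $d_Q$; (ii) every site has $O(1)$ neighbors in the $1$-nearest-neighbor graph, which is Lemma~\ref{lem:deg}, whose packing argument applies equally to the point set $R\cup I$ under the metric $d$; and (iii) a single site can be deleted from the diagram in time proportional to its number of Voronoi neighbors, which for abstract Voronoi diagrams---hence for $d_Q$---is the randomized deletion result of~\cite{JP18}. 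Plugging (i)--(iii) into the analysis of~\cite{chazelle02} gives the $O(n)$ expected bound.

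The delicate part is the bookkeeping for step~6: pinning down that $U_R=\bigcup_{i=1}^n V_S|_{p_i}$, so that restricting attention to triangles incident to $U_R$ loses nothing, and that the locally computed pieces of $\vor(R\cup I)$ are mutually consistent along the shared triangle edges. With those in place, the time bound is an immediate consequence of Lemma~\ref{lem:R}. Step~7 is comparatively routine, the only thing to verify being that each hypothesis required by~\cite{chazelle02} has the counterpart supplied by Lemma~\ref{lem:deg} and~\cite{JP18} in the $d_Q$/abstract-Voronoi setting.
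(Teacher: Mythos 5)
Your proposal is correct and follows essentially the same route as the paper: Lemma~\ref{lem:limit} for the per-triangle locality, the identification $U_R=\bigcup_i V_S|_{p_i}$, the $O\bigl(\sum_{v\in U_R}|Z_v|\log|Z_v|\bigr)$ stitching bound absorbed by Lemma~\ref{lem:R}, and for step~7 the Chazelle et al.\ splitting algorithm together with Lemma~\ref{lem:deg} and the abstract-Voronoi deletion of~\cite{JP18}. The only small polish you add beyond the paper's prose is the explicit remark that the packing argument in Lemma~\ref{lem:deg} applies to $R\cup I$ even though $R\cup I\not\subseteq S$, which is indeed needed and worth stating.
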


In summary, since steps~2-7 of the operation phase take $O(n)$ expected time, the limiting complexity is dominated by the $O\bigl(\frac{1}{\eps}n\log m + \frac{1}{\eps}n2^{O(\log^* n)} + \frac{1}{\eps}H\big)$ expected running time of step~1.  In the Euclidean case, step~1 runs faster in $O\bigl(\frac{1}{\eps}n\log m + \frac{1}{\eps}H\big)$ time.  
%We have assumed that $\alpha = 1/2$ and $|S| = mn$ so far.  For general $\alpha > 0$, we need a set $S$ of $m^{2\alpha} n$ points.  One can verify that the training phase takes $O(m^{2\alpha}n\log^2 (mn) + m^{2\alpha\eps} n^{1+\eps}\log(mn))$ time.

\begin{theorem}
	\label{thm:main}
	Let $Q$ be a convex polygon with $O(1)$ complexity.  Let $n$ be the input size.  For any $\eps \in (0,1)$ and any hidden mixture of at most $m = o(\sqrt{n})$ product distributions such that each distribution contributes an instance with a probability of $\Omega(1/n)$, there is a self-improving algorithm for constructing a Voronoi diagram under $d_Q$ with a limiting complexity of $O(\frac{1}{\eps}n\log m + \frac{1}{\eps}n2^{O(\log^* n)} + \frac{1}{\eps}H)$.   For the Euclidean metric, the limiting complexity is $O(\frac{1}{\eps}n\log m + \frac{1}{\eps}H)$.  The training phase runs in $O(mn\log^2 (mn) + m^{\eps} n^{1+\eps}\log(mn))$ time.  The success probability is at least $1-O(1/n)$.
\end{theorem}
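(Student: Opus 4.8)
The plan is to assemble the training phase of Section~\ref{sec:train} with the step-by-step operation-phase analysis. First I would invoke Lemma~\ref{lem:train} for the training side: it builds $S$ and $\vor(S)$ with the conflict-sparsity guarantee of part~(a), the point-location structures $L_S$ and the per-index $L_i$ of part~(b), the net-tree $T_S$ with its $O(n)$ clusters and the associated van Emde Boas and nearest-common-ancestor structures of part~(c), and the $m^2$-division of $\vor(S)$ together with $B$, $\vor(B)$ and the per-region point-location structures of part~(d), all within $O(mn\log^{O(1)}(mn)+m^\eps n^{1+\eps}\log(mn))$ time; pinning down the polylog factor contributed by the net-tree construction and the $\frac{1}{mn}$-net construction gives the stated $\log^2(mn)$.

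Next I would verify the operation-phase pseudocode. Step~1 locates each $p_i$ in the triangulated $\vor(S)$ through $L_i$, falling back to $L_S$ on failure. Step~2 searches $\vor(S)$ exhaustively from $t_i$ to recover $V_S|_{p_i}$ and hence every conflicting cell, so that $R$ (together with the cluster-root representatives) contains every site of $S$ whose cell conflicts with an input point. Step~3 produces the net-tree compression $T_R$ of $R$ via Lemma~\ref{lem:relaxed}, using Lemma~\ref{lem:easy} so the compression of $T_S$ to $R$ can be done one cluster at a time. Step~4 builds $1\text{-}\NN_R$ under $d$ from $T_R$ by Lemma~\ref{lem:knn}; step~5 turns it into $\vor(R)$ by Lemma~\ref{lem:delNN} (which in turn uses Lemmas~\ref{lem:deg} and~\ref{lem:conflict-tech}); step~6 builds each piece of $\vor(R\cup I)$ inside a triangle $qv_iv_{i+1}$ locally and stitches the pieces by Lemma~\ref{lem:limit} --- legitimate because $V_S|_{p_i}$ and the relevant Voronoi-edge portions of $\vor(S)$ survive into $\vor(R)$ by the choice of $R$ --- and step~7 splits off $\vor(I)$ by transferring the combinatorial Delaunay-split of~\cite{chazelle02}, applicable since $\vor(R\cup I)$ is planar of $O(n)$ size, all nearest-neighbor-graph degrees are $O(1)$, and a site can be deleted from an abstract Voronoi diagram in time proportional to its neighbor count~\cite{JP18}.

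Then I would add up the costs. Lemma~\ref{lem:locate} gives step~1 an expected time of $O\bigl(\frac1\eps(n\log m + n2^{O(\log^* n)} + H)\bigr)$, where $H$ is the entropy of $\vor(I)$; step~3 takes $O(n\log\log m)$ time; and steps~2, 4, 5, 6 and~7 each take $O(n)$ expected time, the key quantitative input being Lemma~\ref{lem:R}, under which $\sum_{v\in V_S}\E{|Z_v|^2}=O(n)$, so that the per-vertex searches in step~2, the cell modifications inside \emph{VorNN}, and the $O\bigl((|Z_u|+|Z_v|)\log(|Z_u|+|Z_v|)\bigr)$ local merges of step~6 all sum to $O(n)$. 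Since $m=o(\sqrt n)$, the $\log\log m$ factor is dominated by $\log m$ (and is $O(1)$ when $m$ is bounded), so step~3 is absorbed and the limiting complexity is $O\bigl(\frac1\eps n\log m + \frac1\eps n2^{O(\log^* n)} + \frac1\eps H\bigr)$. For the Euclidean metric, step~1 instead merges $\mathrm{Vor}(B)$ and $\mathrm{Vor}(I)$ by intersecting two $O(n)$-size convex polyhedra in $\R^3$ in $O(n)$ time~\cite{chan16}, which removes the $2^{O(\log^* n)}$ term and gives $O\bigl(\frac1\eps n\log m + \frac1\eps H\bigr)$. Finally a union bound over the failure events behind Lemma~\ref{lem:train} (probability $O(1/n)$ in total, from part~(a) and from the estimates underlying part~(b)) and behind Lemma~\ref{lem:R} (probability $O(1/n)$) shows that all structural guarantees the analysis relies on hold simultaneously with probability at least $1-O(1/n)$.

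The hard part is not any single step but the bookkeeping: making sure every $O(n)$-expected bound in steps~2--7 descends cleanly from the single second-moment estimate of Lemma~\ref{lem:R} --- in particular that the charging arguments inside \emph{VorNN} and in the step-6 stitching remain $O(1)$ per Voronoi edge bend or Voronoi vertex, which requires the constant-degree statements of Lemmas~\ref{lem:deg} and~\ref{lem:conflict-tech} --- and that the set $R$ extracted in step~2, with the cluster-root representatives adjoined, is exactly the input the compression in step~3 and the \emph{VorNN} recursion expect, so that no net-tree property used downstream is broken.
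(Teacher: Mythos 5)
Your proposal is correct and mirrors the paper's own (implicit) proof: assemble the training‐phase guarantees from Lemma~\ref{lem:train}, walk through operation‐phase steps~1--7 citing Lemmas~\ref{lem:locate}, \ref{lem:easy}, \ref{lem:relaxed}, \ref{lem:knn}, \ref{lem:deg}, \ref{lem:conflict-tech}, \ref{lem:delNN}, and~\ref{lem:limit} together with the split of~\cite{chazelle02,JP18}, observe that everything after step~1 is $O(n)$ or $O(n\log\log m)$ expected and hence dominated by step~1, and take a union bound over the $O(1/n)$ failure events of Lemmas~\ref{lem:train} and~\ref{lem:R}. The Euclidean improvement via the $O(n)$-time convex-polyhedron intersection of~\cite{chan16} is also exactly the paper's argument.
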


\section{Conclusion}

It is open whether one can get rid of the requirement that each distribution in the mixture contributes an instance with a probability of $\Omega(1/n)$, which is not needed for self-improving sorting~\cite{cheng20b}.  Eliminating the $n2^{O(\log^* n)}$ term from the limiting complexity might require solving the question raised in~\cite{chan16} that whether there is an $O(n)$-time algorithm for computing the lower envelope of pseudo-planes.  
As a Voronoi diagram can be interpreted as the lower envelope of some appropriate surfaces, a natural question is what surfaces admit a self-improving lower envelope algorithm.

\newpage

\bibliographystyle{plainurl}

%\newpage

\appendix

\cancel{
\section{Figures}

\begin{figure}[h]
	\centerline{\includegraphics[scale=0.7]{figures/slide-1}}
	\caption{The convex polygon shown is $Q^*$.  The vertices $v$ and $v'$ are the tangential contacts that the lines parallel to $e$ and $e'$ make with $Q^*$, respectively.  The black dot denotes the origin which is stationary.  In the upper right figure, when the origin slides from the upper endpoint of $e$ to its lower endpoint, the vertex $v$ sweeps out the edge $b$ in $\partial \hat{Q}$ which is a translate of $e$.  Similarly, in the lower left figure, $v'$ sweeps an edge $b'$ in $\partial \hat{Q}$ which is a translate of $e'$.  In the lower right figure, when the origin slides to $v$, a translate $b''$ of $e$ appears in $\partial \hat{Q}$.  The edges $b$ and $b''$ in $\partial \hat{Q}$ show that the origin is the center of symmetry.}
	\label{fg:slide}
\end{figure}
}

\section{Proof of Lemma~\ref{lem:train}}
\label{app:train-0}

We restate Lemma~\ref{lem:train} and then give its proof.

\vspace{8pt}

\noindent {\bfseries\sffamily Statement of Lemma~\ref{lem:train}:}\hspace{4pt}	\emph{Let $\mathscr{D}_a$, $a \in [m]$, be the distributions in the hidden mixture.  The training phase computes the  following structures in $O(mn\log^{O(1)}(mn) + m^\eps n^{1+\eps}\log(mn))$ time.}
%and $O(mn + m^\eps n^{1+\eps})$ space.
\begin{enumerate}[\label=(\alph{enumi})]
	
	\item \emph{A set $S$ of $O(mn)$ points and $\vor(S)$.  It holds with probability at least $1-1/n$ that for any $a \in [1,m]$ and any $v \in V_S$, $\sum_{i=1}^n \mathrm{Pr}[X_{iv} \, | \, I \sim \mathscr{D}_a] = O(1/m)$, where $X_{iv} = 1$ if $p_i \in I$ conflicts with $v$ and $X_{iv} = 0$ otherwise.}
	
	\item \emph{Point location structures $L_S$ and $L_i$ for each $i \in [n]$ that allow us to locate $p_i$ in the triangulated $\vor(S)$ in $O\bigl(\frac{1}{\varepsilon}H(t_i)\bigr)$ expected time, where $t_i$ is the random variable that represents the point location outcome, and $H(t_i)$ is the entropy of the distribution of $t_i$.}
	
	\item \emph{A net-tree $T_S$ for $S$, the $O(n)$ clusters in $T_S$, the initially empty van Emde Boas trees  for the clusters, and the nearest common ancestor data structures for the clusters.}
	
	\item \emph{An $m^2$-division of $\vor(S)$, the subset $B \subseteq S$ of $O(n)$ points whose Voronoi cell boundaries contain some region boundary vertices in the $m^2$-division, $\vor(B)$, and the point location data structures for the regions in the $m^2$-division.}
	
\end{enumerate}
\begin{proof}
		Let $X = \{x_1,\ldots,x_{mn\ln(mn)}\}$ be the set of points from which the $\frac{1}{mn}$-net was extracted.  The set $S$ consists of this $\frac{1}{mn}$-net and the $O(1)$ dummy points.  Let $\sigma = \{j_1,j_2,j_3\} \subset [1,mn\ln(mn)]$ be a triple of distinct indices.
		%If $\sigma$ is a pair $\{a,b\}$, the points $x_a$ and $x_b$ define a bisector with at  most $\kappa \geq 1$ bends, and we use $Q^*_{\sigma,k}$ to denote the homothetic copy of $Q^*$ that circumscribes $x_a$ and $x_b$ and is centered at the $k$-th bend along the bisector.  If $\sigma$ is a triple $\{a,b,c\}$, 
		Let $Q^*_{\sigma}$ be the homothetic copy of $Q^*$ that circumscribes $x_{j_1}$, $x_{j_2}$ and $x_{j_3}$ if it exists; otherwise, we ignore $\sigma$.  Assume that $\sigma$ is not ignored.  We analyze the number of points in any input instance that fall inside $Q^*_\sigma$.
		
		Fix any product distribution $\mathscr{D}_a$ in the hidden mixture.  Let ${\cal J}_{a,\sigma} = \{i \in [1,mn\ln(mn)]\setminus\sigma : \text{$x_i$ is drawn $\mathscr{D}_a$}\}$.   How large is ${\cal J}_{a,\sigma}$?  Since $\Pr[I \sim \mathscr{D}_a] = \Omega(1/n)$, the expected size of  ${\cal J}_{a,\sigma}$ is $\Omega(m\ln(mn))$.   Then, Chernoff bound implies that $|{\cal J}_{a,\sigma}| = \Omega(m\ln(mn))$ with probability at least $1 - (mn)^{-\Omega(m)}$.  
		
		For every $i \in {\cal J}_{a,\sigma}$, define $Y_{a,\sigma}(i )= 1$ if $x_i \in Q^*_{\sigma}$; otherwise, $Y_{a,\sigma}(i) = 0$.  Let $Y_{a,\sigma}= \sum_{i \in {\cal J}_{a,\sigma}} Y_{a,\sigma}(i)$.  
		%Consider any distribution $\mathscr{D}_a$ from the hidden mixture.  Conditioned the input instance $I = (x_1,\ldots,x_n)$ being drawn from $\mathscr{D}_a$, 
		The variables $Y_{a,\sigma}(i)$'s are independent from each other, so the Chernoff bound is applicable to $Y_{a,\sigma}$.  It says that for any $\lambda \in (0,1)$, 
		$\mathrm{Pr}\bigl[Y_{a,\sigma }> (1-\lambda)\mathrm{E}[Y_{a,\sigma}] \bigr] > 1 - e^{-\frac{1}{2}\lambda^2\mathrm{E}[Y_{a,\sigma}]}$.
		
		If $\mathrm{E}[Y_{a,\sigma}] > \frac{2}{\lambda^2(1-\lambda)}\ln(mn)$, then $\Pr\bigl[Y_{a,\sigma} > \frac{2}{\lambda^2}\ln(mn) \bigr] > 1 - (mn)^{-1/(1-\lambda)}$.  Setting $\lambda = 4/5$ gives $\mathrm{E}[Y_{a,\sigma}] > \frac{125}{8}\ln(mn) \Rightarrow \Pr \bigl[Y_{a,\sigma} > \frac{25}{8}\ln( mn) \bigr] > 1 - (mn)^{-5}$.  There are fewer than $m^3n^3\ln^3 (mn)$ triples of distinct indices.  By the union bound, it holds with probability at least $1 - \ln^3 (mn)/(m^2n^2) > 1 - 1/(mn)$ that for any triple $\sigma$ of distinct indices, if $\mathrm{E}[Y_{a,\sigma}] > \frac{125}{8}\ln (mn)$, then $Y_{a,\sigma} > \frac{25}{8}\ln (mn)$.
		
		Consider any Voronoi vertex $v \in V_S$ and its defining triple $\sigma$.  If $|Q^*_\sigma \cap X| \geq |X|/(mn) = 
		\ln(mn)$, then $Q^*_\sigma \cap S \not= \emptyset$ because $S$ is a $\frac{1}{mn}$-net of $X$.  But $Q^*_\sigma \cap S$ is empty as $v$ is a Voronoi vertex, which implies that $|Q^*_\sigma \cap X| < \ln(mn)$.  If we restrict our attention to instances in ${\cal J}_{a,\sigma}$ that  contribute to $Q^*_\sigma \cap X$, the count does not get bigger.  That is, $Y_{a,\sigma} \leq |Q^*_\sigma \cap X| < \ln(mn)$.  By the contrapositive of the result that we obtained earlier on the relation between $\mathrm{E}[Y_{a,\sigma}]$ and $Y_{a,\sigma}$, we conclude that $\mathrm{E}[Y_{a,\sigma}] \leq \frac{125}{8}\ln(mn)$.  Moreover, this upper bound on $\mathrm{E}[Y_{a,\sigma}]$ hold simultaneously for all defining triples of the Voronoi vertices in $V_S$ with probability at least $1 - 1/(mn)$.
		
		Since the input distribution is oblivious of the training and operation phases, we can use the instances in ${\cal J}_{a,\sigma}$ to derive the following inequality: $\EE{Y_{a,\sigma}} \geq  |{\cal J}_{a,\sigma}| \cdot \left(\sum_{i=1}^n \pr{X_{iv} \, | \, I \sim \mathscr{D}_a} \right) - 3$.
		The additive term of $-3$ stems from the fact that the indices in $\sigma$ are excluded from ${\cal J}_{a,\sigma}$ in the definition of $Y_{a,\sigma}$, but they are allowed in $|{\cal J}_{a,\sigma}|  \cdot \sum_{i=1}^n \pr{X_{iv} \, | \, I \sim \mathscr{D}_a}$.  Rearranging terms gives
		\[
		\sum_{i=1}^n \pr{X_{iv} \, | \, I \sim \mathscr{D}_a} = O\left(\frac{\EE{Y_{a,\sigma}}+ 3}{|{\cal J}_{a,\sigma}|} \right) = O\left(\frac{\EE{Y_{a,\sigma}} + 3}{m\ln(mn)}\right) = O\left(\frac{1}{m}\right).
		\]
		As discussed before, the above result holds for $\mathscr{D}_a$ with probability at least $1 - 1/(mn)$.  Applying the union bound over all $a \in [m]$, we get a success probability of at least $1 - 1/n$.
		
		Consider (b).  If $p_i$ falls into a triangle $t \in {\cal S}_i$ with weight $w_t$, the distribution-sensitive point location data structure~\cite{arya07,iacono04} ensures that the query time of $L_i$ is $O(\log (W/w_t))$, where $W = \sum_{t \in {\cal S}_i} w_t$.  Since $w_t$ is defined to be $\max\bigl\{(mn)^{-\eps}, \tilde{\pi}_{i,t}\bigr\}$ and the complexity of ${\cal S}_i$ is $O(m^\eps n^\eps)$, we have $W \leq \sum_{t \in {\cal S}_i} \bigl((mn)^{-\eps}+ \tilde{\pi}_{i,t}\bigr) = O(1)$.  Let $\pi_{i,t}$ be the true probability of $p_i$ hitting a triangle $t$ in the triangulated $\vor(S)$.  Using the Chernoff bound, one can prove as in~\cite[Lemma~3.4]{ailon11} that, with probability at least $1 - O(1/(mn))$, for every $i \in [n]$ and every $t$, if $\pi_{i,t} > (mn)^{-\eps/3}$, then $\tilde{\pi}_{i,t} \in [0.5\pi_{i,t},1.5\pi_{i,t}]$.  As $w_t = \max\bigl\{(mn)^{-\eps},\tilde{\pi}_{i,t}\bigr\}$, if $\pi_{i,t} > (mn)^{-\eps/3}$, the query time is $O(\log 1/w_t) = O(\log (1/\pi_{i,t}))$.  If $\pi_{i,t} \leq (mn)^{-\eps/3}$, we may query $L_S$ as well, so the query time is $O(\log (1/w_t)) + O(\log(mn))= O(\eps\log(mn)) + O(\log(mn)) =  O\bigl(\frac{1}{\eps}\log(1/\pi_{i,t})\bigr)$.  Therefore, the expected query time of $L_i$ is bounded by $O\left(\sum_{t \in {\cal S}_i} \pi_{i,t} \cdot \frac{1}{\eps}\log (1/\pi_{i,t}) \right) = \frac{1}{\eps}H(t)$.   
		
		The correctness of (c) and (d) follows from~\cite{F87,har-peled06} and our previous description.
	\end{proof}

\section{Missing details in Section~\ref{sec:train}}
\label{app:train}

We restate the results below and give their proofs.

\vspace{8pt}

\noindent {\bfseries\sffamily Statement of Lemma~\ref{lem:R}:}\hspace{4pt}\emph{For every $v \in V_S$, let $Z_v$ be the subset of input points that conflict with $v$.  It holds with probability at least $1- O(1/n)$ that $\sum_{v \in V_S} \mathrm{E}\bigl[|Z_{v}|^2\bigr] = O(n)$.}
\begin{proof}
	For every $i \in [n]$ and every $v \in V_S$, define $X_{iv} = 1$ if $p_i \in Z_{v}$ and $X_{iv} = 0$ otherwise.
	\begin{align*}
		&~~~~\sum_{v \in V_S}\E{|Z_v|^2} 
		= \EE{\sum_{v\in V_S} \left(\sum_{i \in [n]} X_{iv}\right)^2} = \sum_{v \in V_S} \sum_{i,j \in [n]} \E{X_{iv}X_{jv}} \\
		%&= \sum_{a \in [1,m]} \sum_{v \in V_S} \sum_{i,j \in [n]} \pr{X_{iv} \wedge X_{jv} | I \sim {\cal D}_a} \cdot \pr{I \sim {\cal D}_a} \\
		&= \sum_{a \in [m]} \sum_{v \in V_S} \sum_{i \in [n]} \pr{X_{iv} | I \sim {\cal D}_a} \cdot \pr{I \sim {\cal D}_a} + \\
		&\quad\quad \sum_{a \in [m]} \sum_{v \in V_S} \sum_{i \not= j} \pr{X_{iv}  \wedge X_{jv}| I \sim {\cal D}_a} \cdot \pr{I \sim {\cal D}_a} \\
		&= \sum_{a \in [m]} \pr{I \sim {\cal D}_a} \sum_{v \in V_S} O(1/m) + 
		%\quad\quad\quad\quad\quad\quad\quad (\because \text{Lemma~\ref{lem:train}(a)}) \\
		\sum_{a \in [m]} \sum_{v \in V_S} \sum_{i \not= j} \pr{X_{iv}  \wedge X_{jv}| I \sim {\cal D}_a} \cdot \pr{I \sim {\cal D}_a} \\
		&= O(n) + \sum_{a \in [m]} \sum_{v \in V_S} \sum_{i \not= j} \pr{X_{iv}  \wedge X_{jv}| I \sim {\cal D}_a} \cdot \pr{I \sim {\cal D}_a}.
	\end{align*}
	Lemma~\ref{lem:train}(a) is invoked in the third step.  The last step is due to the fact that $|V_S| = O(mn)$ and $\sum_{a \in [m]} \pr{I \sim \mathscr{D}_a} = 1$.
	Under the condition that $I \sim \mathscr{D}_a$, $X_{iv}$ and $X_{jv}$ are independent.  Therefore, $\pr{X_{iv} \wedge X_{jv} | I \sim {\cal D}_a} = \pr{X_{iv} | I \sim {\cal D}_a} \cdot \pr{X_{jv} | I \sim {\cal D}_a}$.  As a result,
	\begin{align*}
		&~~~~\sum_{a \in [m]} \sum_{v \in V_S} \sum_{i \not= j} \pr{X_{iv} \wedge X_{jv} | I \sim {\cal D}_a} \cdot \pr{I \sim {\cal D}_a} \\
		&= \sum_{a \in [m]}  \pr{I \sim {\cal D}_a} \sum_{v \in V_S} \sum_{i \not= j} \pr{X_{iv} | I \sim {\cal D}_a} \cdot \pr{X_{jv} | I \sim {\cal D}_a}\\
		&\leq \sum_{a \in [m]}  \pr{I \sim {\cal D}_a} \sum_{v \in V_S} \Bigl(\sum_{i \in [n]} \pr{X_{iv} | I \sim {\cal D}_a} \Bigr)^2 \\
		&= \sum_{a \in [m]}  \pr{I \sim {\cal D}_a} \sum_{v \in V_S} O(1/m^2)  
		%\quad\quad\quad\quad\quad\quad\quad\quad\quad (\because \text{Lemma}~\ref{lem:train}(a)) 
		~=~O(n/m).
	\end{align*}
	In the last step, we use Lemma~\ref{lem:train}(a) and the relations that $|V_S| = O(mn)$ and $\sum_{a \in [m]} \pr{I \sim \mathscr{D}_a} = 1$.
\end{proof}

%We restate Lemma~\ref{lem:enclose} below and give its proof.  

\vspace{8pt}

\noindent {\bfseries\sffamily Statement of Lemma~\ref{lem:enclose}:}\hspace{4pt}\emph{Consider $\vor(Y)$ for some point set $Y$.  For any point $x \in \mathbb{R}^2$, let $Q^*_x$ be the largest homothetic copy of $Q^*$ centered at $x$ such that $\Int(Q^*_x) \cap Y = \emptyset$.  Let $w_1$ and $w_2$ be two adjacent Voronoi edge bends or Voronoi vertices in $\vor(Y)$.   For any point $x \in w_1w_2$, $Q^*_x \subseteq Q^*_{w_1} \cup Q^*_{w_2}$.  The same property holds if $w_1$ and $w_2$ are Voronoi vertices connected by a Voronoi edge, and $x$ lies on that Voronoi edge.}
\begin{proof}
	We assume that $Q^*_x$ is not equal to $Q^*_{w_1}$ or $Q^*_{w_2}$ as there is nothing to prove otherwise.  Let $q$ and $q'$ be two of the defining points of $w_1$ and $w_2$.  So $w_1$ and $w_2$ lie on the Voronoi edge $e$ defined by $q$ and $q'$.  Place an imaginary point $q_1$ in $\partial Q^*_{w_1} \setminus Q^*_{w_{2}}$ such that $q_1$ does not lie on the same edge of $Q^*_{w_1}$ as $q$ or $q'$.  Place an imaginary point $q_{2}$ similarly in $\partial Q^*_{w_{2}} \setminus Q^*_{w_1}$.   Figure~\ref{fg:enclose-app}(a) shows an example.
	
	\begin{figure}
		\centering
		\begin{tabular}{ccc}
			\includegraphics[scale=0.5]{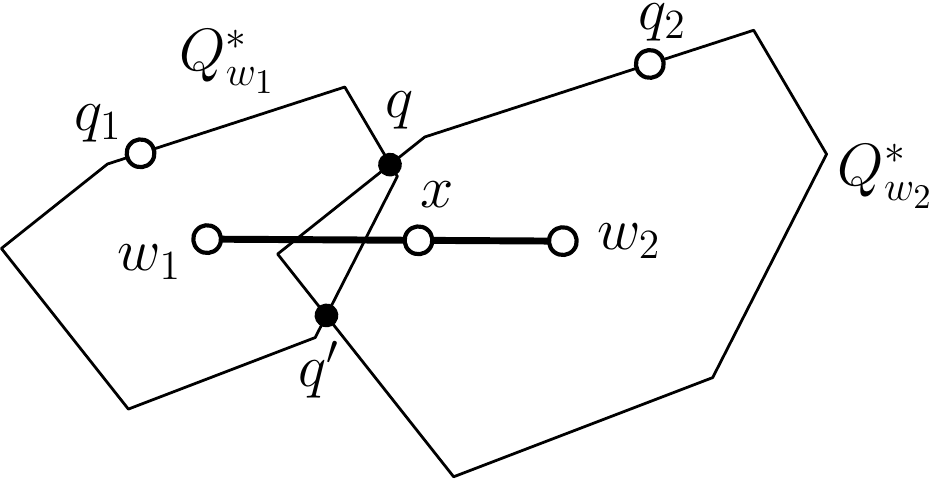} & & 
			\includegraphics[scale=0.5]{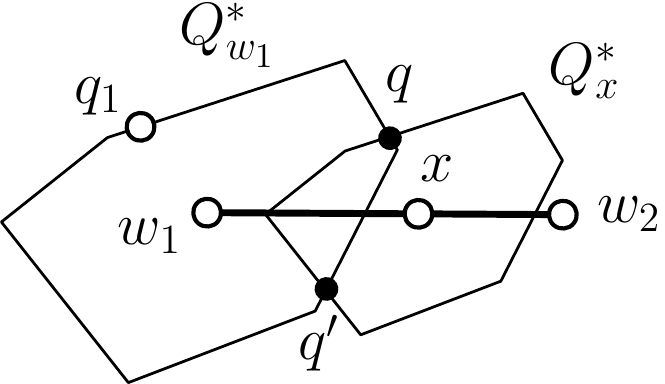} \\ \\
			(a) & \hspace*{.25in} & (b) \\
		\end{tabular}
		\caption{(a) The placement of $q_1$ and $q_2$ in $\partial Q^*_{w_1}$ and $\partial Q^*_{w_2}$. (b)~The chain in $\partial Q^*_{w_1}$ that goes from $q$ to $q'$ through $q_1$ lies outside $Q^*_x$.}
		\label{fg:enclose-app}
	\end{figure}
	
	We claim that $d_Q(q_1,x) \geq d_Q(q',x) = d_Q(q,x)$.  Suppose not.  Then $d_Q(q_1,x) < d_Q(q',x) = d_Q(q,x)$.  Move along the Voronoi edge $e$ from $x$ towards $w_2$.  We must reach some point $y$ before reaching $w_{2}$ such that $d_Q(q_1,y) = d_Q(q',y) = d_Q(q,y)$ because $q_1$ is farther from $w_{2}$ than $q$, $q'$, and $q_{2}$.  Let $Q^*_y$ be the homothetic copy of $Q^*$ centered at $y$ that includes $q$, $q'$, and $q_1$ in its boundary.  As $Q^*_{w_1} \not= Q^*_y$, either one is strictly contained in the other, or their boundaries intersect transversally at two points.  The former case is impossible as at least one of $q$, $q'$, and $q_1$ would lie in the interior of $Q^*_{w_1}$ or $Q^*_y$, an impossibility.  If $\partial Q^*_{w_1}$ and $\partial Q^*_y$ intersect transversally at two points, then one of $q$, $q'$ and $q_1$ would not lie in $\partial Q^*_{w_1}$ or $\partial Q^*_y$, an impossibility again.  This proves our claim that $d_Q(q_1,x) \geq d_Q(q',x) = d_Q(q,x)$.  
	
	Similarly, $d_Q(q_{2},x) \geq d_Q(q',x) = d_Q(q,x)$.
	
	Since both $d_Q(q_1,x)$ and $d_Q(q_2,x)$ are at least $d_Q(q,x) = d_Q(q',x)$, neither $q_1$ nor $q_{2}$ belongs to $\Int(Q^*_x)$.   Since $Q^*_x \not= Q^*_{w_1}$, either one of $Q^*_x$ and $Q^*_{w_1}$ is strictly contained in the other, or their boundaries intersect transversally.  The former case is impossible because one of $q_1$, $q$ and $q'$ would lie in the interior of $Q^*_x$ or $Q^*_{w_1}$.  In the second case, $\partial Q^*_x$ and $\partial Q^*_{w_1}$ must intersect transversally at $q$ and $q'$.  It follows that one of the two chains in $\partial Q^*_{w_1}$ delimited by $q$ and $q'$ lies outside $Q^*_x$.  Figure~\ref{fg:enclose-app}(b) shows an example.  Since $d_{Q^*}(x,q_1) = d_Q(q_1,x) \geq d_Q(q,x) = d_Q(q',x)$, we conclude that the chain that contains $q_1$ lies outside $Q^*_x$.  Similarly, we can show that $Q^*_x$ does not contain the chain in $\partial Q^*_{w_{2}}$ that goes from $q$ through $q_{2}$ to $q'$.  Hence, $Q^*_x$ must be contained in $Q^*_{w_1} \cup Q^*_{w_{2}}$.
	
	The same proof applies when $w_1$ and $w_2$ are Voronoi vertices connected by a Voronoi edge, and $x$ lies on that Voronoi edge.
\end{proof}

\vspace{8pt}

\noindent {\bfseries\sffamily Statement of Lemma~\ref{lem:conflict-tech}:}\hspace{4pt}\emph{Let $q$ be a point in some point set $Y$.  Let $quv$ be a triangle in the triangulated $\vor(Y)$.  If a point $p \not\in Y$ conflicts with a point in $quv$, then $p$ conflicts with $u$ or $v$.  Hence, if $p$ conflicts with $V_q(Y)$, $p$ conflicts with a Voronoi edge bend or Voronoi vertex in $\partial V_q(Y)$.}
\begin{proof}
	For any point $y \in \mathbb{R}^2$, let $Q^*_y$ be the largest homothetic copy of $Q^*$ centered at $y$ such that $\Int(Q^*_y) \cap Y = \emptyset$.  It suffices to show that the point $p$ that conflicts with $V_q(Y)$ belongs to $Q^*_u$ or $Q^*_v$.  If $p$ conflicts with any point in $uv$, Lemma~\ref{lem:enclose} implies that $p \in Q^*_u$ or $p \in Q^*_v$.  Suppose that $p$ does not conflict with any point in $uv$.  
	%So $V_p(Y \cup \{p\})$ does not intersect $uv$.  
	So $uv \subseteq V_q(Y \cup \{p\})$.
	Recall that the Voronoi cells of a Voronoi diagram under a convex distance function are star-shaped with respect to their sites.  So $V_q(Y \cup \{p\})$ is star-shaped with respect to $q$.   However, some segment that connects $q$ to some point in $uv$ must cross $V_p(Y \cup \{p\})$ as $p$ conflicts with a point in $quv$, a contradiction.   Figure~\ref{fg:conflict-app-1} illustrates this situation.
	%Therefore, $V_p(Y \cup \{p\})$ cannot lie strictly inside $quv$ because $quv \setminus V_p(Y \cup \{p\})$ is not star-shaped with respect to $q$ otherwise.  So $V_p(Y \cup \{p\})$ must cross $qu$ or $qv$ as $V_q(Y \cup \{p\})$ does not intersect $uv$ by assumption.  Figure~\ref{fg:conflict-app-1} illustrates this situation.  Without loss of generality, assume that $V_p(Y \cup \{p\})$ intersects $qu$ at some point $x$.   As $q$ is the closest site to $x$ among $Y$, $q$ must lie in $\partial Q^*_x$.  As $p$ is closer to $x$ than $q$, we have $p \in Q^*_x$.  Observe that $q$, $x$ and $u$ are collinear, $q \in \partial Q^*_x$, and $q \in \partial Q^*_u$.  It implies that $Q^*_x \subseteq Q^*_{u}$, and so $p \in Q^*_u$.  Figure~\ref{fg:conflict-app-2} illustrates this situation.
	%Hence, $p \in Q^*_{v_i}$.
\end{proof}

\begin{figure}
	\centerline{\includegraphics[scale=0.5]{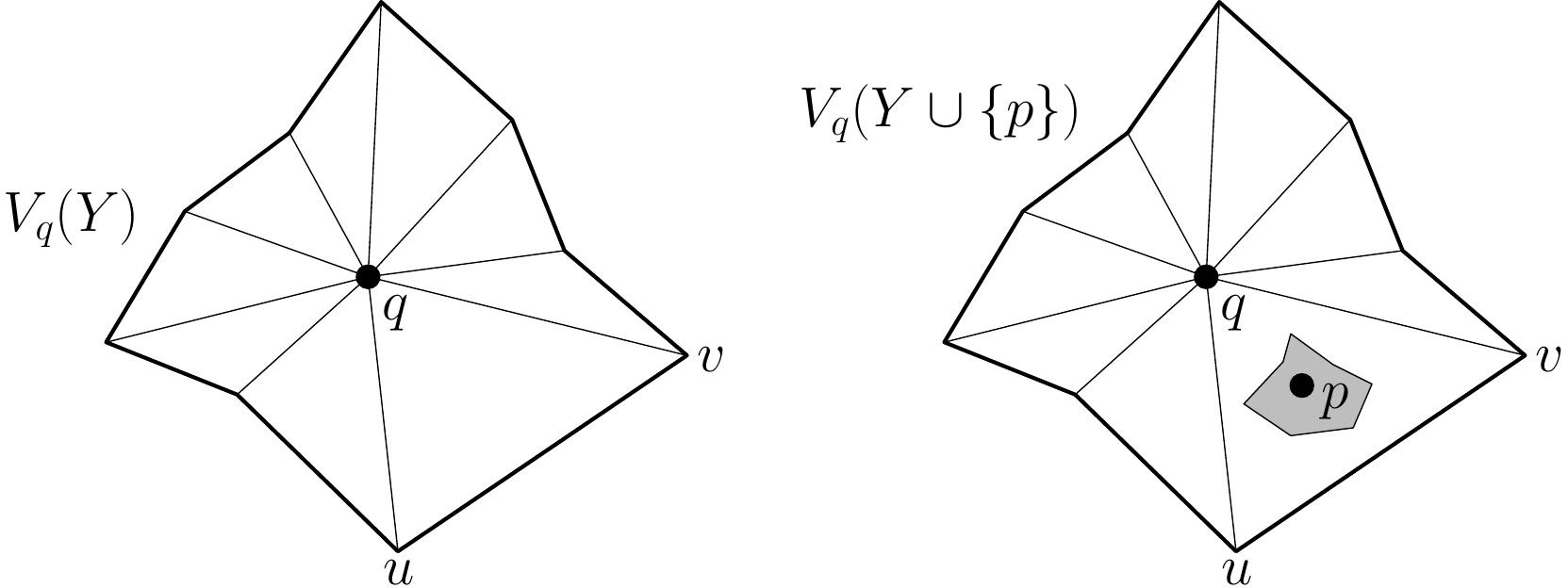}}
	\caption{The left image shows a schematic drawing of $V_q(Y)$.  If $V_p(Y \cup \{p\})$ lies inside $quv$ such as the shaded region on the right, then $V_q(Y \cup\{p\})$ is obtained by removing the shaded region from $V_q(Y)$, and $V_q(Y \cup \{p\})$ cannot be star-shaped with respect to $q$.}
	\label{fg:conflict-app-1}
\end{figure}

\cancel{
\begin{figure}
	\centerline{\includegraphics[scale=0.5]{figures/conflict-2}}
	\caption{The left image shows a schematic drawing of the situation and $Q^*_x$ (the dashed convex polygon).  The point $q$ is in $\partial Q^*_x$ as it is the closest site to $x$.  The point $p$ lies inside $Q^*_x$ as $p$ is closer to $x$ than $q$.  The right image shows $Q^*_u$ and $Q^*_x$.  Both contain $q$ in their boundaries.  Since they are homothetic copies of the same convex polygon, $Q^*_u$ must contain $Q^*_x$.}
	\label{fg:conflict-app-2}
\end{figure}
}

\section{Missing details in Section~\ref{sec:locate}}
\label{app:step1}

We will show that Task~1 in Section~\ref{sec:locate} runs in $O(n2^{O(\log^* n)})$ expected time and Task~2 in Section~\ref{sec:locate} runs in $O(n\log m)$ expected time.

\subsection{Lower envelope of two lower envelopes}
\label{app:step1-1}

We describe a algorithm based on the randomized divide-and-conquer approach due to Chan~\cite[Section~4]{chan16} to compute the lower envelope of ${\cal L}(B)$ and ${\cal L}(I)$.
	
Construct point location data structures for the triangulated $\vor(B)$ and $\vor(I)$ in $O(n)$ time.  Let $\mathtt{B}$ be the multiset version of $B$ in which each $p$ has multiplicity equal to the complexity of $V_p(B)$.  Draw a random sample $\mathtt{B}'$ of $\mathtt{B}$ of size $O(n/\log n)$, and let $B'$ denote the equivalent of $\mathtt{B}'$ without the multiplicity.  Similarly, $\mathtt{I}$ is multiset version of $I$, $\mathtt{I}'$ is a random sample of $\mathtt{I}$ of size $O(n/\log n)$, and $I'$ is the equivalent of $\mathtt{I}'$ without the multiplicity.

Compute ${\cal L}(B' \cup I')$, which has $O(n/\log n)$ size, directly in $O((n/\log n) \cdot \log n) = O(n)$ time.  For each triangle $t$ in ${\cal L}(B' \cup I')$, the strategy is to extract the subsets $B|_t = \{p \in B: \text{some point in $t$ is above $C_p$} \}$ and $I|_t = \{p \in I : \text{some point in $t$ is above $C_p$}\}$, and then recursively compute the patch ${\cal L}\bigl(B|_t\cup I|_t  \cup \{p_t\}\bigr) \cap \hat{t}$, where $p_t$ is the point in $B' \cup I'$ such that $t \subseteq C_{p_t}$, and $\hat{t}$ is the vertical prism obtained by sweeping $t$ upward and downward.   Collect these patches over all triangles in ${\cal L}(B' \cup I')$ and stitch them together to form ${\cal L}(B \cup I)$.  The recurrence for the expected running time is $T(n) = \sum_{ t \in {\cal L}(B' \cup I')} T(n_t+1) + E$, where $n_t = \bigl|B|_t \bigr| + \bigl|I|_t\bigr|$ and $E$ is the total expected running time to identify $B|_t$ and $I|_t$ for all triangles $t$ in ${\cal L}(B' \cup I')$.  

We identify $B|_t$ as follows.  
%Since horizontal cross-sections of the polygonal cones are homothetic copies of $Q$, some point of $t$ is above a cone $C_p$ if and only if a vertex of $t$ is above $C_p$.  
If some point $x$ of $t$ is above a cone $C_p$, then $p$ conflicts with the projection of $x$ in the projection of $t$.   Lemma~\ref{lem:conflict-tech} implies that $p$ conflicts with the projection of a vertex of $t$ different from $p_t$.
Take a vertex $v$ of $t$ different from $p_t$.  Locate $v$'s vertical projection in a triangle $t_v$ in the triangulated $\vor(B)$ by a point location query.  If $v$ is below $t_v$, then $v$ does not conflict with any point in $B$.  If $v$ is above $t_v$, we search $\vor(B)$ within the vertical projection of $t$ to determine the subset $B|_v = \{p \in B \setminus B' : \text{$v$ is above $C_p$}\}$.  The time needed is $O\bigl(\log n + \sum_{p \in B|_v} |\partial V_p(B)|\bigr)$.  Summing over all triangles in ${\cal L}(B' \cup I')$, the total point location time is $O(n/\log n \cdot \log n) = O(n)$.   By Clarkson-Shor's analysis~\cite[Corollary~3.8]{CS89}, it holds with probability at least $2/3$ that the sum of $\sum_{p \in B|_v} |\partial V_p(B)|$ over all vertices of ${\cal L}(B' \cup I')$ is $O(n)$, and $\max_{t \in {\cal L}(B' \cup I')} \max\{n_t\} = O(\log^2 n)$.  We identify $I|_t$ in the same way, which involves determining $I_v = \{p \in I \setminus I' : \text{$v$ is above $C_p$}\}$ for the vertices $v$ of ${\cal L}(B' \cup I')$, and the same analysis applies.  

When determining $B|_v$ and $I|_v$ over all vertices of ${\cal L}(B' \cup I')$, if the total number of steps exceeds $cn$ for some appropriate constant $c$, we abort, resample $\mathtt{B}'$ and $\mathtt{I}'$, and then repeat.  Similarly, if $\max_{t \in {\cal L}(B' \cup I')} \max\{n_t\}$ exceeds $c'\log^2 n$ for some appropriate constant $c'$, we also abort, resample $\mathtt{B}'$ and $\mathtt{I}'$, and then repeat.  We expect to succeed in $O(1)$ trials and proceed with the recursive calls.  

In summary, $E$ is $O(n)$ in the recurrence $T(n) = \sum_{ t \in {\cal L}(B' \cup I')} T(n_t+1) + E$, and there are $O(\log^* n)$ levels of recursion in expectation.  The hidden big-Oh constant may thus be raised to a power of $O(\log^* n)$, resulting in an expected running time of $n2^{O(\log^* n)}$.

\subsection{Determining $\pmb{t_1, t_2, \ldots,t_n}$}
\label{app:step1-2}

We generalize a method in~\cite{ailon11}, which does not work directly in our case because no information about $\vor(B)$ is gathered in the training phase.  
%For any point $x \in \mathbb{R}^2$, let $Q^*_x$ be the largest homothetic copy of $Q^*$ such that $\Int(Q^*_x) \cap B = \emptyset$.  
Our method works in two stages for each point $p_i \in I$ as follows.

\begin{itemize}
	
	\item Stage~1: Determine some subset $B_i$ that satisfies $B \subseteq B_i \subseteq S$, and compute a Voronoi edge bend or Voronoi vertex $v_i$ in $\vor(B_i)$ that conflicts with $p_i$ and is known to be in $V_S$ or not.
	
	\item Stage~2:  Use $v_i$ to find the triangle $t_i$ that contains $p_i$.  
	%If $v_i \not\in V_S$, find a Voronoi edge bend or Voronoi vertex $v'_i \in V_S$ that conflicts with $p_i$.
	
\end{itemize}

\noindent We provide the details of these two stages for each input point in the following.  We discuss the second stage first because it is easier.  

\subsubsection{Stage 2}

%If the region $\Pi$ in the $m^2$-division of $\vor(S)$ that contains $p_i$ is computed in stage~1, we just do a point location in $O(\log m)$ to locate $p_i$ in the portion of the triangulated $\vor(S)$ inside $\Pi$.

If $v_i \in V_S$, we search $\vor(S)$ from $v_i$ to find $V_S|_{p_i}$ (i.e., the subset of $V_S$ that conflict with $p_i$), which by Lemma~\ref{lem:enclose} will also give the triangle in the triangulated $\vor(S)$ that contains $p_i$.  The time needed is $O\bigl(\bigl|V_S|_{p_i}\bigr|\bigr)$.  

Suppose that $v_i \not\in V_S$.  Then $v_i$ cannot be a region boundary vertex in the $m^2$-division of $\vor(S)$, so $v_i$ lies inside a region in the $m^2$-division, say $\Pi$.   We check whether $p_i$ conflicts with any boundary vertex of $\Pi$.  For each boundary vertex $w$ of $\Pi$, let $Q^*_w$ be the largest homothetic copy of $Q^*$ centered at $w$ such that $\Int(Q^*_w) \cap B = \emptyset$.  These $Q^*_w$'s form an arrangement of $O(m^2)$ complexity.  The point $p_i$ conflicts with $w$ if and only if $p_i \in Q^*_w$.  So we do a point location in the arrangement in $O(\log m)$ time to decide whether $p_i$ is contained in $Q^*_w$ for some boundary vertex $w$ of $\Pi$.   

If so, 
%$w$ is the desired point $v'_i \in V_S$, and 
we can search $\vor(S)$ from $w$ as before to find the triangle in the triangulated $\vor(S)$ that contains $p_i$.   Suppose that $p_i \not\in Q^*_w$ for any boundary vertex $w$ of $\Pi$.  We claim that $p_i$ lies inside $\Pi$, which means that we can do a point location in $\vor(S) \cap \Pi$ to find the triangle in the triangulated $\vor(S)$ that contains $p_i$.  
%By Lemma~\ref{lem:enclose}, $p_i$ must conflict with a vertex of $t_i$ that belongs to $V_S$, and we can return in $O(1)$ time this vertex of $t_i$ as $v'_i$.  
The time needed is $O(\log m)$ as $\vor(S) \cap \Pi$ has $O(m^2)$ size.    The running time of Stage~2 is $O\bigl( \bigl|V_S|_{p_i}\bigr| + \log m \bigr)$.

We prove the claim as follows.  Assume to the contrary that it is false.  Since $p_i$ conflicts with $v_i$ that lies inside $\Pi$, we have $p_i \in Q^*_{v_i}$, the largest homothetic copy of $Q^*$ centered at $v_i$ such that $\Int(Q^*_{v_i}) \cap B = \emptyset$.  The segment $p_iv_i$ intersects the boundary of $\Pi$ at some point $x$.   We can define a deformation of $Q^*_{v_i}$ so that its center moves linearly from $v_i$ to $p_i$ while the polygon shrinks linearly to the point $p_i$.   Since $x \in p_iv_i$, at some point during this deformation of $Q^*_{v_i}$, we must obtain a homothetic copy $\tilde{Q}_x$ of $Q^*_x$ such that $x$ is the center of $\tilde{Q}_x$, $\Int(\tilde{Q}_x) \cap B = \emptyset$, and $p_i \in \tilde{Q}_x$.  Then, we can invoke Lemma~\ref{lem:enclose} to obtain the contradiction that $p_i$ must conflict with a boundary vertex of $\Pi$.

\subsubsection{Stage~1}  

%Recall that stage~1 is to determine a Voronoi edge bend or Voron vertex $v_i$ in $\vor(B_i)$ that conflicts with an input point $p_i \in I$, where $B_i$ is some subset that satisfies $B \subseteq B_i \subseteq S$.   The subset $B_i$ may be different for different $p_i \in I$.  Moreover, we will inductively guarantee the knowledge of whether $v_i \in V_S$.  

For efficiency purpose, we will present a procedure that runs stage~1 for all input points in $I$ in an inductive manner.  The procedure is a generalization of a method for a similar task in~\cite{ailon11}.  During the running of this procedure, whenever we have computed $v_i$ for an input point $p_i$ as required of stage~1, the procedure will invoke stage~2 for $p_i$ in order to locate the triangle $t_i$, and if $v_i \not\in V_S$, compute a Voronoi edge bend or Voronoi vertex $v'_i \in V_S$ that conflicts with $p_i$.

We first present a technical result that will be used by the procedure.

\begin{lemma}
	\label{lem:cyclic}
	Let $p$ be a point in $B \cup I$.  Let $B_p$ be any subset that satisfies $B \subseteq B_p \subseteq S$.  Assume that $V_p(B_p \cup I)$, $V_p(B_p \cup \{p\})$, and the edges of $\vor(B_p)$ that intersect $V_p(B_p \cup \{p\})$  are known.  For each $p_i \in N_p(B_p \cup I) \setminus N_p(B_p \cup \{p\})$, we can compute a 
	%Voronoi edge bend or  Voronoi vertex $w_i$ in $\partial V_p\bigl(B_p \cup \{p\}\bigr)$ that conflicts with $p_i$, as well as the 
	Voronoi edge $e_i$ in $V_p(B_p)$ that conflicts with $p_i$.  The total running time is $O\bigl(\bigl|N_p(B_p \cup I)\bigr| + \bigl|N_p(B_p \cup \{p\})\bigr|\bigr)$.
\end{lemma}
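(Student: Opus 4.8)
The plan is to combine the ``empty-$Q^*$'' characterisation of conflicts with the star-shapedness of Voronoi cells, in the spirit of Lemmas~\ref{lem:enclose} and~\ref{lem:conflict-tech}. First I would note that any $p_i\in N_p(B_p\cup I)\setminus N_p(B_p\cup\{p\})$ must lie in $I$: since $p\in B\cup I\subseteq B_p\cup I$ we have $B_p\cup\{p\}\subseteq B_p\cup I$, so every point $x$ on the common Voronoi edge $\gamma_i$ of $V_p(B_p\cup I)$ and $V_{p_i}(B_p\cup I)$ satisfies $d_Q(p,x)=d_Q(p_i,x)\le d_Q(q,x)$ for all $q\in B_p\cup\{p\}$; were $p_i\in B_p$ this would place $\gamma_i$ on the common boundary of $V_p(B_p\cup\{p\})$ and $V_{p_i}(B_p\cup\{p\})$, giving $p_i\in N_p(B_p\cup\{p\})$, a contradiction. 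For such a $p_i$ and any $x\in\gamma_i$, the largest homothet $Q^*_x$ of $Q^*$ centred at $x$ with $\Int(Q^*_x)\cap B_p=\emptyset$ has scale $\min_{q\in B_p}d_Q(q,x)\ge d_Q(p,x)=d_Q(p_i,x)=d_{Q^*}(x,p_i)$, hence $p_i\in Q^*_x$; that is, \emph{every} point of $\gamma_i$ conflicts with $p_i$ relative to $B_p$. Finally, adding sites only shrinks a cell, so $\gamma_i\subseteq V_p(B_p\cup I)\subseteq V_p(B_p\cup\{p\})$, the region that the given $\vor(B_p)$-edges subdivide into faces, each contained in a single $\vor(B_p)$-cell.

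Next I would turn a conflicting point of $\gamma_i$ into a conflicting $\vor(B_p)$-edge by a radial argument. Pick $x\in\gamma_i$ and shoot the ray from $p$ through $x$; by star-shapedness it exits $V_p(B_p\cup\{p\})$ at a single point $x'$ with $x\in\overline{px'}$. In both cases $p\in B_p$ and $p\notin B_p$ one checks that $d_Q(p,x')$ equals the $B_p$-scale of $Q^*_{x'}$, so $p\in\partial Q^*_{x'}$, while $p$ also lies on the boundary of the $(B_p\cup I)$-empty homothet at $x$; the homothet-nesting computation used in Lemma~\ref{lem:conflict-tech} (collinear centres sharing the boundary point $p$) then yields $p_i\in Q^*_{x'}$. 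If $p\in B_p$, then $x'$ lies on an edge of $\vor(B_p)=\partial V_p(B_p)$ and that edge is $e_i$. If $p\notin B_p$, then $x'$ lies in the interior of some $\vor(B_p)$-cell $V_s(B_p)$, whose boundary within $V_p(B_p\cup\{p\})$ is part of the given data; applying Lemma~\ref{lem:conflict-tech} to $V_s(B_p)$ and invoking once more the radial/homothet-containment reasoning of Lemma~\ref{lem:enclose} to keep the chase inside $V_p(B_p\cup\{p\})$ produces a $\vor(B_p)$-edge conflicting with $p_i$, which is $e_i$. For the running time, both $V_p(B_p\cup I)$ and $V_p(B_p\cup\{p\})$ are star-shaped about $p$, so I would merge their boundary chains in a single synchronised scan in angular order around $p$; each new-neighbour arc $\gamma_i$ is thereby matched to the aligned portion of $\partial V_p(B_p\cup\{p\})$ and, via the incident given $\vor(B_p)$-edges, to $e_i$, at $O(1)$ amortised cost, for a total of $O\bigl(|N_p(B_p\cup I)|+|N_p(B_p\cup\{p\})|\bigr)$.

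The step I expect to be the main obstacle is the case $p\notin B_p$: there $x'$ typically lies in the interior of a $\vor(B_p)$-cell rather than on a $\vor(B_p)$-edge, so one must chase the conflict out to an actual edge while (i) staying inside $V_p(B_p\cup\{p\})$, so the edge is one of those assumed known, and (ii) charging only $O(1)$ work per new neighbour. Achieving both simultaneously is precisely what the synchronised angular scan, combined with the homothet-containment facts of Lemmas~\ref{lem:enclose} and~\ref{lem:conflict-tech}, is designed to accomplish.
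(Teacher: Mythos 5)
Your observation that any $p_i$ in $N_p(B_p\cup I)\setminus N_p(B_p\cup\{p\})$ must lie in $I$ is correct, your ``every point of $\gamma_i$ conflicts with $p_i$ relative to $B_p$'' claim is correct, and the nesting/collinear-centres argument showing $p_i\in Q^{*,B_p}_{x'}$ for the exit point $x'$ of the ray $p\to x$ can be made to work: $x'\in\partial V_p(B_p\cup\{p\})$ implies $p\in\partial Q^{*,B_p}_{x'}$, and the homothet centred at $x$ through $p$ is a homothetic shrink of the one centred at $x'$ through $p$ about the common boundary point $p$, hence contained in it; since $p_i$ lies on the smaller one, $p_i\in Q^{*,B_p}_{x'}$. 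Your case $p\in B_p$ then finishes cleanly because $\partial V_p(B_p\cup\{p\})=\partial V_p(B_p)$ is itself a union of $\vor(B_p)$-edges.

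But for $p\in I$ (i.e.\ $p\notin B_p$) the proposal genuinely stalls, and you say so yourself. The exit point $x'$ lies on an edge of $\vor(B_p\cup\{p\})$ that is \emph{not} an edge of $\vor(B_p)$: it sits in the interior of $V_s(B_p)$ for whichever $s\in B_p$ is nearest to $x'$. Invoking Lemma~\ref{lem:conflict-tech} to push the conflict from $x'$ onto $\partial V_s(B_p)$ does not by itself produce an edge that (i) intersects $V_p(B_p\cup\{p\})$ and is therefore among the known edges, nor (ii) can be found with $O(1)$ amortised work: the conflicting Voronoi edge bend or vertex of $\partial V_s(B_p)$ produced by that lemma may lie outside $V_p(B_p\cup\{p\})$ entirely, and a ``chase'' to find one inside the known region is exactly the missing argument. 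The appeal to ``the synchronised angular scan combined with Lemmas~\ref{lem:enclose} and~\ref{lem:conflict-tech}'' names the desired conclusion rather than proving it.

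The paper's route avoids this by never working with an arbitrary exit point on $\partial V_p(B_p\cup\{p\})$. Instead, it observes that for $p\in I$ the ray $pp_i$ is angularly sandwiched between $pq$ and $pq'$ for two \emph{consecutive} neighbours $q,q'\in N_p(B_p\cup\{p\})$, and that (thanks to the dummy points making cells bounded) $p,q,q'$ define a Voronoi \emph{vertex} $w\in\partial V_p(B_p\cup\{p\})$. A transversal-intersection argument with the two empty homothets through $p$ (one through $\{p,q,q'\}$ centred at $w$, one through $\{p,p_i\}$ witnessing the neighbourhood in $\vor(B_p\cup I)$) shows $p_i$ conflicts with $w$. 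Since $w$ is equidistant from $q$ and $q'$ with all other points of $B_p$ no closer, $w$ lies on the $\vor(B_p)$-edge defined by $q,q'$; that edge intersects $V_p(B_p\cup\{p\})$ at $w$, so it is among the edges assumed known, and it is the desired $e_i$. Landing directly on a Voronoi vertex of $V_p(B_p\cup\{p\})$, rather than an interior boundary point, is what collapses the chase you identify as the obstacle. The synchronised cyclic scan of $N_p(B_p\cup\{p\})$ against $N_p(B_p\cup I)$ you describe for the running time then goes through exactly as you outline.
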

\begin{proof}
	Suppose that $p \in I$.  Take a point $p_i \in N_p(B_p \cup I) \setminus N_p(B_p\cup \{p\})$.  The segment $pp_i$ lies between $pq$ and $pq'$ for some $q, q' \in N_p(B_p \cup \{p\})$ that are consecutive in the cyclic order of $N_p(B_p \cup \{p\})$ around $p$.   
	%Recall that we place three far away dummy points that define a triangle which encloses all possible input points and make Voronoi cells of input points bounded.   Consequently, 
	Recall that the dummy points make all Voronoi cells of input points bounded.  So there is a Voronoi vertex $w$ in $\partial V_p(B_p \cup \{p\})$ defined by $p$, $q$ and $q'$.  
	
	We claim that $p_i$ conflicts with $w$.  Let $Q^*_w$ be the largest homothetic copy of $Q^*$ centered at $w$ that circumscribes $p$, $q$ and $q'$.  Since $p_i \in N_p(B_p \cup I)$, there exists a homothetic copy $Q^*_x$ of $Q^*$ such that $\{p,p_i\} \subset \partial Q^*_x$ and $\Int(Q^*_x) \cap (B_p \cup I) = \emptyset$.  If $p_i$ does not conflict with $w$, then $p_i \not\in Q^*_w$.  Refer to Figure~\ref{fg:cyclic}.  But then as $Q^*_w$ and $Q^*_x$ intersects transversally at zero or two points, $Q^*_x$ is forced to contain $q$ or $q'$ in its interior, a contradiction.
	%
	%
	%If we move along the segment $pp_i$ from $p_i$ towards $p$, we must come to a point $x \in pp_i \cap V_p(B_p \cup \{p\})$ such that $d_Q(p_i,x) = d_Q(p,x)$ and $d_Q(p_i,x) \leq d_Q(q,x)$ for all $q \in B_p$.   That is, $p_i$ conflicts with $x$.
	%Let $Q^*_x$ be the homethetic copy $Q^*_x$ of $Q^*$ centered at $x$ such that $p_i \in \partial Q^*_x$.  Observe that $\Int(Q^*_x) \cap (B_p \cup \{p\}) = \emptyset$.  Let $y$ be the intersection point between $px$ and an edge $e$ in $\partial V_p(B_p \cup \{p\})$.  We shrink $Q^*_x$ towards $p_i$ until $y$ becomes the center of the shrunk polygon.  The emptiness of this shrunk polygon certifies that $p_i$ conflicts with the point $y$.  
	%Let $e$ be the edge of $V_p(B_p \cup \{p\})$ that intersects the ray from $p$ through $p_i$.   Lemma~\ref{lem:conflict-tech} implies that $p_i$ conflicts with an endpoint of $e$.  Then, by Lemma~\ref{lem:enclose}, $p_i$ must conflict with an endpoint of the Voronoi edge that contains $e$.  That endpoint is the Voronoi vertex $w_i$ that we look for.  
	Clearly, $q$ and $q'$ define a Voronoi edge $e_i$ in $\vor(B_p)$ that intersects $V_p(B_p \cup \{p\})$ and contains $w$, so $e_i$ is the Voronoi edge that we look for.
	
	By the analysis above, a synchronized cyclic scan of $N_p(B_p \cup \{p\})$ and $N_p(B_p \cup I)$ gives the Voronoi edges of $V_p(B_p)$ that we look for.
	
	The remaining case is that $p \in B$.  Hence, $V_p(B_p \cup \{p\}) = V_p(B_p)$.  Every point $p_i \in N_p(B_p \cup I)$ must conflict with some point in $\partial V_p(B_p)$ in order that $p_i$ becomes a Voronoi neighbor of $p$ in $N_p(B_p \cup I)$.  Each $p_i \in N_p(B_p \cup I)$ conflicts with a connected portion of $\partial V_p(B_p)$.  Moreover, this portion of $\partial V_p(B_p)$ is not nested within the portion of $\partial V_p(B_p)$ that conflicts with any other $p_j \in N_p(B_p \cup I)$.  It follows that a synchronized cyclic scan of $\partial V_p(B_p)$ and $N_p(B_p \cup I)$ gives the Voronoi edges that we look for.
	%
	%We claim that $p_i$ must conflict with one of the two Voronoi edges defined by $p$ with $q$ and $q'$ (which is thus the Voronoi edge that we look for).  Suppose not.  Refer to Figure~\ref{fg:ill} for an illustration of the situation.   
	%The left image in Figure~\ref{fg:ill} shows the situation that $V_{p_i}(B_p \cup \{p_i\})$ intersects the segment $pq$.   
	%As $p_i$ does not conflict with the Voronoi edge of $p$ and $q'$, this edge still bounds $V_p(B_p \cup \{p_i\})$.  But some segment that connects $p$ to some point on this Voronoi edge must cross $V_{p_i}(B_p \cup \{p_i\})$, contradicting the fact that $V_p(B_p \cup \{p_i\})$ is star-shaped with respect to $p$.  We can derive the same contradiction if $V_{p_i}(B_p \cup \{p_i\})$ intersects the segment $pq'$.  The right image in Figure~\ref{fg:ill} shows the situation that $V_{p_i}(B_p \cup \{p_i\})$ intersects neither $pq$ nor $pq'$.  Note that $V_{p_i}(B_p \cup \{p_i\})$ is bounded as $p_i \in I$.   $V_p(B_p \cup \{p_i\})$ is obtained by removing $V_{p_i}(B_p \cup \{p_i\})$ as a hole in $V_p(B_p)$.  But then $V_p(B_p \cup \{p_i\})$ would not be star-shaped with respect to $p$, a contradiction.
\end{proof}

\begin{figure}
	\centerline{\includegraphics[scale=0.6]{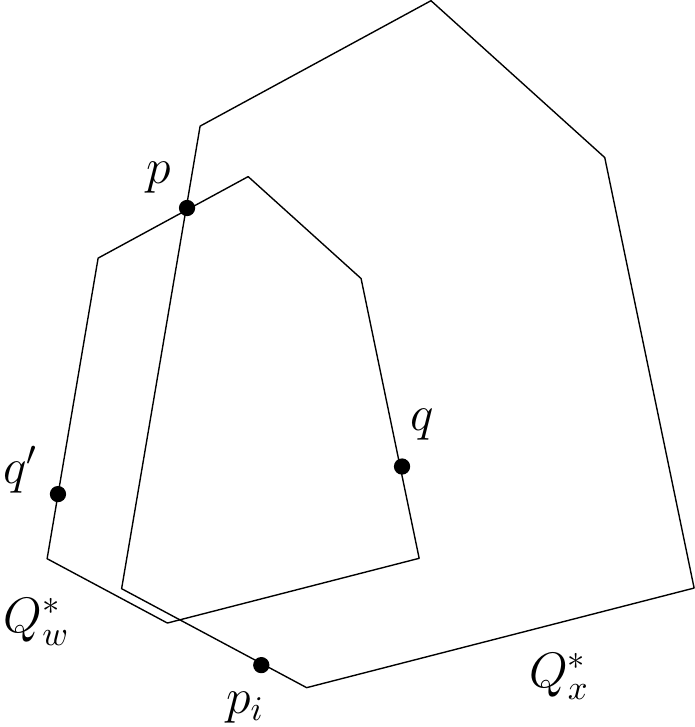}}
	\caption{Illustration for the proof of Lemma~\ref{lem:cyclic}.}
	\label{fg:cyclic}
\end{figure}

\cancel{
\begin{figure}
	\centerline{\includegraphics[scale=0.4]{ill}}
	\caption{The two Voronoi edges defined by $p$ and $q$ and $q'$, respectively, extend downward to infinity.}
	\label{fg:ill}
\end{figure}
}

\cancel{
Next, we show that once $w_i$ is found by Lemma~\ref{lem:cyclic} for a point $p_i \in N_p(B_p \cup I) \setminus N_p(B_p \cup \{p\})$, we can determine in $O(1)$ time a Voronoi edge bend or Voronoi vertex $v_i$ in $\vor(B_p)$ that conflicts with $p_i$.

\begin{lemma}
	\label{lem:vi}
	Given the setting and $w_i$ in Lemma~\ref{lem:cyclic}, we can determine in $O(1)$ time a Voronoi edge bend or Voronoi vertex $v_i$ in $\vor(B_p)$ that conflicts with $p_i$.  
	%Moreover, $v_i$ is adjacent to $w_i$ in the Voronoi edge in $\vor(B_p)$ that contains $w_i$.
\end{lemma}
\begin{proof}
	Suppose that $w_i$ is a Voronoi vertex defined by $p$ and $q, q' \in N_p(B_p \cup \{p|)$.  In $\vor(B_p)$, $w_i$ lies on the Voronoi edge  $e$ between $q$ and $q'$.  Observe that $p_i$ still conflicts with $w_i$ in $\vor(B_p)$.  By Lemma~\ref{lem:enclose}, among the two Voronoi edge bends or Voronoi vertices in $e$ that are adjacent to $w_i$, $p_i$ must conflict with at least one of them.
	
	The other case is that the two Voronoi edges in $\partial V_p(B_p \cup \{p\})$ induced by $p$ with $q$ and $q'$ are infinite, and $w_i$ is a Voronoi edge bend on the infinite Voronoi edge between $p$ and $q$.

\end{proof}
}

The following is the pseudocode for determining the triangles in the triangulated $\vor(S)$ that contains the input points in $I$.

\begin{quote}
	\begin{enumerate}
		\item Initialize a queue $L$ to contain all points in $B$.
		\item Mark all points in $B \cup I$ as unvisited.
		\item While $L$ is non-empty do:
		\begin{enumerate}[(\alph{enumii})]
			\item Dequeue the next point $p$ from $L$.
			\item If $p \in B$, let $B_p = B$.  Otherwise, $p = p_j \in I$, and $v_j$ has been inductively determined, and we perform the following steps:
			\begin{enumerate}[(\roman{enumiii})]
				\item We will show in Lemma~\ref{lem:vi2} below that $v_j \in V_S$.  Search $\vor(S)$ from $v_j$ to determine $V_S|_{p}$.
				\item Let $S_{p}$ be the set of defining points of the elements of $V_S|_{p}$.  Let $S'_{p}$ be the set of defining points of Voronoi edge bends and Voronoi vertices in $\vor(S)$ that are adjacent to the elements of $V_S|_{p}$.  Note that $|S_{p}|$ and $|S'_{p}|$ are $O\bigl(\bigl|V_S|_{p}\bigr|\bigr)$.
				\item $B_p := B \cup S_{p} \cup S'_{p}$.   The motivation for this definition of $B_p$ is to ensure that $V_p(B_p \cup \{p\}) = V_p(S \cup \{p\})$.
				\item In the same search in step~3(b)(i), we construct $V_p(B_p \cup \{p\})$ and the edges of $\vor(B_p)$ that intersect $V_p(B_p \cup \{p\})$ without increasing the asymptotic running time.
				\item Merge $V_p(B_p \cup \{p\})$ and $V_p(B \cup I)$ to form $V_p(B_p \cup I)$.
			\end{enumerate}
			\item Use Lemma~\ref{lem:cyclic} to determine for each $p_i \in N_p(B_p \cup I) \setminus N_p(B_p \cup \{p\})$, the 
			%vertex $w_i$ in $\partial V_p\bigl(B_p \cup \{p\}\bigr)$ that conflicts with $p_i$, as well as the 
			Voronoi edge $e_i$ in $\vor(B_p)$ that conflicts with $p_i$.  By Lemma~\ref{lem:enclose}, $p_i$ must conflict with some Voronoi edge bend or endpoint of $e_i$, which is the desired Voronoi edge bend or Voronoi vertex $v_i$ in $\vor(B_p)$ for each unvisited $p_i \in N_p(B_p \cup I) \setminus N_p(B_p \cup \{p\})$.
			\item For each unvisited $p_i \in N_p(B_p \cup I) \setminus N_p(B_p \cup \{p\})$, 
			\begin{enumerate}[(\roman{enumiii})]
				\item Invoke the second stage to find the triangle $t_i$ in the triangulated $\vor(S)$ that contains $p_i$, mark $p_i$ as visited, and append $p_i$ to $L$.
				\item If $p \in B$, let $u, u' \in V_S$ be two of the vertices of $t_i$, and update $v_i$ to be $u$ or $u'$ whichever conflicts with $p_i$.  Note that if $p \in I$, we will prove in Lemma~\ref{lem:vi2} below that $v_i$ already belongs to $V_S$.
			\end{enumerate}
		\end{enumerate}
	\end{enumerate}
\end{quote}

\begin{lemma}
	\label{lem:vi2}
	At the end of step~3(c), if $p \in I$, then for each $p_i \in N_p(B_p \cup I) \setminus N_p(B_p \cup \{p\})$, $v_i \in V_S$.  At the end of step~3(d)(ii), for each visited $p_i \in I$, $v_i \in V_S$.
\end{lemma}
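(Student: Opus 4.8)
\noindent\emph{Approach.} I would prove~(i) and~(ii) simultaneously by induction on the order in which the points of $I$ are appended to the queue $L$ --- equivalently, the order in which their values $v_i$ become fixed. Granting~(i), statement~(ii) is immediate: a visited $p_i\in I$ had $v_i$ set while the procedure was processing some $p$. If $p\in I$, then~(i) already gives $v_i\in V_S$ and step~3(d)(ii) leaves $v_i$ unchanged; if $p\in B$, step~3(d)(ii) overwrites $v_i$ with one of the two non-site vertices $u,u'$ of the triangle $t_i$ of the triangulated $\vor(S)$ that contains $p_i$, and since $p_i$ lies in $t_i$ it conflicts with a point of $t_i$, so Lemma~\ref{lem:conflict-tech} forces $p_i$ to conflict with $u$ or $u'$; as $u,u'\in V_S$, we get $v_i\in V_S$. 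So everything reduces to~(i), under the inductive hypothesis that the value $v_j$ of the point $p=p_j\in I$ currently being processed already lies in $V_S$ (so that step~3(b)(i) legitimately searches $\vor(S)$ from $v_j$ and $B_p=B\cup S_p\cup S'_p$ is formed as intended).

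For~(i) I would first record the structural identity $V_p(B_p\cup\{p\})=V_p(S\cup\{p\})$, which is exactly the point of the definition of $B_p$: every Voronoi neighbor $s$ of $p$ in $\vor(S\cup\{p\})$ has $V_s(S)$ conflicting with $p$, hence by Lemma~\ref{lem:conflict-tech} $p$ conflicts with a Voronoi edge bend or Voronoi vertex of $\partial V_s(S)$, which lies in $V_S|_p$, so $s\in S_p\subseteq B_p$; since $N_p(S\cup\{p\})\subseteq B_p\subseteq S$, the cells $V_p(S\cup\{p\})$, $V_p(B_p\cup\{p\})$ and $\bigcap_{s\in N_p(S\cup\{p\})}\bigl\{x:d_Q(p,x)\le d_Q(s,x)\bigr\}$ all coincide. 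Fix $p_i\in N_p(B_p\cup I)\setminus N_p(B_p\cup\{p\})$. From Lemma~\ref{lem:cyclic} and its proof I would extract the consecutive neighbors $q,q'$ of $p$ in $\vor(B_p\cup\{p\})$, the Voronoi edge $e_i$ of $\vor(B_p)$ they define, and the Voronoi vertex $w$ of $\vor(B_p\cup\{p\})$ defined by $p,q,q'$; this $w$ lies on $e_i\cap\partial V_p(B_p\cup\{p\})$ and conflicts with $p_i$. The crucial observation is that $w$ is also a Voronoi vertex of $\vor(S\cup\{p\})$ defined by $p,q,q'$ --- because $w\in\overline{V_p(S\cup\{p\})}$ gives $d_Q(p,w)\le d_Q(s,w)$ for every $s\in S$ while $d_Q(q,w)=d_Q(q',w)=d_Q(p,w)$ --- so $w$ lies in the relative interior of a straight piece of the Voronoi edge $\hat{e}_i$ of $\vor(S)$ between $q$ and $q'$, flanked there by two consecutive bends or vertices $\hat{w}_1,\hat{w}_2\in V_S$ of $\vor(S)$.

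It then remains to match the $v_i$ produced by step~3(c) with one of $\hat{w}_1,\hat{w}_2$. Since $p$ conflicts with $w$ (with respect to $S$) and $w$ lies between $\hat{w}_1$ and $\hat{w}_2$, Lemma~\ref{lem:enclose} shows $p$ conflicts with $\hat{w}_1$ or $\hat{w}_2$; hence one of them, say $\hat{w}_1$, lies in $V_S|_p$, so its defining sites lie in $S_p$, and $\hat{w}_2$, being adjacent to $\hat{w}_1$ in $\vor(S)$, has its defining sites in $S'_p$. Therefore the defining sites of both $\hat{w}_1$ and $\hat{w}_2$ lie in $B_p$; by the general position assumption they remain the unique nearest $B_p$-sites at those points, so $\hat{w}_1,\hat{w}_2$ are bends or vertices of $\vor(B_p)$ too, and since no bend of the bisector of $q,q'$ and no site of $S$ ties with $q,q'$ strictly between them, $\hat{w}_1$ and $\hat{w}_2$ are precisely the two bends or vertices of $\vor(B_p)$ that flank $w$ along $e_i$. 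Applying Lemma~\ref{lem:enclose} once more, now with respect to $B_p$ and using that $p_i$ conflicts with $w$, I get that $p_i$ conflicts with $\hat{w}_1$ or $\hat{w}_2$; that point is the $v_i$ selected in step~3(c), and it belongs to $V_S$, proving~(i). Degenerate situations involving dummy points or unbounded Voronoi edges would be treated exactly as in Lemma~\ref{lem:cyclic} and change nothing.

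The main obstacle is the third paragraph. The difficulty is that $e_i$, being a full Voronoi edge of $\vor(B_p)$, generally sticks out of the cell $V_p(B_p\cup\{p\})$, so $v_i$ need not lie inside that cell and the identity $V_p(B_p\cup\{p\})=V_p(S\cup\{p\})$ by itself does not suffice; one must show that $e_i$ and the corresponding edge $\hat{e}_i$ of $\vor(S)$ agree in a genuine neighborhood of the new vertex $w$. This is exactly what makes the one-ring set $S'_p$ indispensable --- it supplies the defining sites of the flanking element that might miss $V_S|_p$ --- and it is where the general-position identification of the bends and vertices of $\vor(B_p)$ with those of $\vor(S)$ has to be carried out carefully.
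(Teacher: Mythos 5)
Your proof is correct and follows essentially the same route as the paper's: reduce~(ii) to~(i) plus the $p\in B$ case (handled via Lemma~\ref{lem:conflict-tech}), use $V_p(B_p\cup\{p\})=V_p(S\cup\{p\})$ to identify $w$ as a Voronoi vertex of $\vor(S\cup\{p\})$, conclude $p$ conflicts with an element of $V_S$ on the $q,q'$-bisector, and then deduce that the flanking pair $\hat w_1,\hat w_2\in V_S$ reappears in $\vor(B_p)$ so that the $v_i$ found in step~3(c) lies in $V_S$. You spell out a point the paper's proof leaves implicit: that $\hat w_1\in V_S|_p$ accounts for $S_p$ while the adjacent $\hat w_2$ requires $S'_p$, and that one must verify (via general position and $B_p\subseteq S$) that $\hat w_1,\hat w_2$ are indeed the bends/vertices of $\vor(B_p)$ flanking $w$ on $e_i$, so that the element returned by step~3(c) is one of them rather than a more distant endpoint of $e_i$ that might lie outside $V_S$. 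This added care is a genuine improvement in exposition over the paper's terser argument, not a different approach.
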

\begin{proof}
	We prove the lemma by induction.  Consider an iteration of the while-loop in step~3.  The newly visited points $p_i \in I$ are those in $N_p(B_p \cup I) \setminus N_p(B_p \cup \{p\})$.  
	
	Suppose that $p \in I$.  By the definition of $B_p$, $V_p(B_p \cup \{p\}) = V_p(S \cup \{p\})$.  Therefore, 
	%$w_i$ in Lemma~\ref{lem:cyclic} is a Voronoi vertex in $\partial V_p(S \cup \{p\})$, and 
	$e_i$ in Lemma~\ref{lem:cyclic} is a Voronoi edge in $\vor(S)$ that conflicts with $p_i$.   Moreover, the proof of Lemma~\ref{lem:cyclic} reveals that $p_i$ conflicts with a Voronoi vertex $w$ of $\vor(B_p \cup \{p\})$ that lies on $e_i$, and $w$ is defined by $p$ and the two defining points of $e_i$.  Therefore, $p$ conflicts with $w$ on the edge $e_i$ of $\vor(S)$ too.  By Lemma~\ref{lem:enclose}, $p$ conflicts with some point in $V_S \cap e_i$.
	%Let $u$ be the other Voronoi edge bend or Voronoi vertex in $e$ that is adjacent to $w_i$.  Since $w_i \in \partial V_p(S \cup \{p\})$, $p$ must conflict with $w_i$ in $\vor(S)$.  By Lemma~\ref{lem:enclose}, $p$ conflicts with $v_i$ or $u$.  Therefore, 
	The defining points of $V_S|_p$ are included in $B_p$ by definition, which includes the defining points of $V_S|_p \cap e_i$.  Therefore, we have obtained the edge $e_i$ during the construction of $V_p(B_p \cup \{p\})$, which allows $e_i$ to be returned for $p_i$ in the application of Lemma~\ref{lem:cyclic} in step~3(c).  When we search along $e_i$ in step~3(c), we find the point $v_i \in V_S$ that conflicts with $p_i$.
	
	Suppose that $p \in B$.  Then, $B_p = B$.  Consider step~3(d)(ii).  Let the vertices of $t_i$ be $\{q,u,u'\}$, where $q$ is a point in $S$.  Since $p_i \in t_i$, 
	%the ray from $q$ through $p_i$ intersects $uu'$ at a point $x$.  As $q$, $p_i$, and $x$ are collinear, $d_Q(p_i,x) \leq d_Q(q,x)$, which implies that $p_i$ conflicts with $x$.  By 
	Lemma~\ref{lem:conflict-tech} implies that $p_i$ conflicts with $u$ or $u'$.  Note that both $u$ and $u'$ belong to $V_S$.
\end{proof}
	
The correctness of the pseudocode follows from induction using Lemmas~\ref{lem:cyclic} and~\ref{lem:vi2}.

We can view step~3(b)(v) as taking the lower envelope of two polygonal cones.  In particular, we lift $V_p(B_p \cup \{p\})$ and $V_p(B \cup I)$ to $\mathbb{R}^3$.  Then $V_p(B_p \cup I)$ is the lower envelope of these two polygonal cones, which can be obtained by a synchronized cyclic scan of $N_p(B_p \cup \{p\})$ and $N_p(B \cup I)$ in linear time.  In step~3(d)(i), when we invoke stage~2 for a point $p_i \in I$, we are supposed to know whether $v_i \in S$.  If $p \in I$, then Lemma~\ref{lem:vi2} ensures that $v_i \in S$.  If $p \in B$, we can assume that all Voronoi edge bends and Voronoi vertices in $\vor(B)$ have been labelled in preprocessing whether they belong to $V_S$ or not.

The total running time is $O\bigl(\sum_{p \in B \cup I} |N_p(B \cup I)| + \sum_{p \in B} |N_p(B)| + \sum_{p_i \in I}\bigl|V_S|_{p_i}\bigr| + n\log m \bigr)$ from our previous discussion.  The first two terms are $O(n)$.  By Lemma~\ref{lem:R}, the expected value of the third term is $O(n)$.  Therefore, the expected running time of the pseudocode is $O(n\log m)$.

\cancel{

\subsection{Proof of Lemma~\ref{lem:train}}

	The correctness of (b) follows directly from the analysis in~\cite{ailon11}.  The correctness of (c) follows from~\cite{har-peled06} and our previous description of the clusters.  We give the proof of (a) below.  

Let $X = \{x_1,\ldots,x_{mn\ln(mn)}\}$ be the set of points from which the $\frac{1}{mn}$-net $S$ was extracted.  Let $\sigma$ be a set of distinct pair or triple of indices in $[1,mn\ln(mn)]$.  If $\sigma$ is a pair $\{a,b\}$, the points $x_a$ and $x_b$ define a bisector with at  most $\kappa \geq 1$ bends, and we use $Q^*_{\sigma,k}$ to denote the homothetic copy of $Q^*$ that circumscribes $x_a$ and $x_b$ and is centered at the $k$-th bend along the bisector.  If $\sigma$ is a triple $\{a,b,c\}$, let $Q^*_{\sigma,1}$ be the homothetic copy of $Q^*$ that circumscribes $x_a$, $x_b$ and $x_c$ provided it exists; otherwise, we ignore $\sigma$.  Assume that $\sigma$ is not ignored.

Fix a distribution ${\cal D}_a$ in the mixture.  Let ${\cal J}_\sigma = [1,mn\ln(mn)]\setminus\sigma$.  For every $i \in {\cal J}_\sigma$ and every $k \in [1,\kappa]$, define a random variable $Y_{\sigma,k}(i)$.  Specifically, if $x_i$ is drawn from ${\cal D}_a$ and $x_i \in Q^*_{\sigma,k}$, then $Y_{\sigma,k}(i) = 1$; otherwise, $Y_{\sigma,k}(i) = 0$.  Define $Y_{\sigma,k}= \sum_{i \in {\cal J}_\sigma} Y_{\sigma,k}(i)$.  The Chernoff bound implies that for any $\lambda \in (0,1)$, $\mathrm{Pr}\bigl[Y_{\sigma ,k}> (1-\lambda)\mathrm{E}[Y_{\sigma,k}]\bigr] > 1 - e^{-\lambda^2\mathrm{E}[Y_{\sigma,k}]/2}$.   Setting $\lambda = \sqrt{143} - 11$, plugging in the inequality $\mathrm{E}[Y_{\sigma,k}] > \ln(mn)/(12-\sqrt{143})$, and using the union bound over all choices of $\sigma$ and $k$, we obtain the following conclusion: It holds with probability at least $1 - O(\ln^{11}(mn)/(m^{11}n^{11}))$ that for any $\sigma$ and any $k$, if $\mathrm{E}[Y_{\sigma,k}] > \ln(mn)/(12-\sqrt{143})$, then $Y_{\sigma,k} > \ln(mn)$.

Fix an element $v$ of $V_S$.  So $v$ corresponds to a particular $(\sigma,k)$ combination.  Recall that $S$ is a $\frac{1}{mn}$-net of $X$ with respect to the family of homothetic copies of $Q^*$, and $Q^*_{\sigma,k}$ is empty of points in $S$.  It follows that $Y_{\sigma,k} < \ln(mn)$.  Therefore, it holds with probability at least $1-O(\ln^{11}(mn)/(m^{11}n^{11}))$ that $\mathrm{E}[Y_{\sigma,k}] \leq \ln(mn)/(12-\sqrt{143})$.   The input distribution is oblivious of the transition between the training and operation phases.  Also, since $\pr{I \sim {\cal D}_a} = \Omega(1/(\sqrt{m}n))$ by our hidden mixture model, the Chernoff bounds implies that, with probability at least $1 - O(1/(mn))$, the distribution ${\cal D}_a$ contributes an instance $\Omega(\sqrt{m}\ln(mn))$ times in the training phase.  Therefore, $\mathrm{E}[Y_{\sigma,k}] \geq \bigl(\sum_{i=1}^{n} \Omega(\sqrt{m}\ln(mn)) \cdot \mathrm{Pr}[X_{iv} | I \sim {\cal D}_a]\bigr) - 3$.  The subtraction of 3 on the right hand side accounts for the fact that ${\cal J}_{\sigma}$ excludes at most three indices, but these excluded indices are allowed in $\sum_{i=1}^{n} \Omega(\sqrt{m}\ln(mn)) \cdot \mathrm{Pr}[X_{iv} | I \sim {\cal D}_a]$.  Rearranging terms gives $\sum_{i=1}^n \mathrm{Pr}[X_{iv} | I \sim {\cal D}_a] \leq (\mathrm{E}[Y_{\sigma,k}] + 3) \cdot O(1/(\sqrt{m}\ln(mn)))= O(1/\sqrt{m})$.  Applying the union bound over all pairs and triples from $V_S$, all choices of $k \in [1,\kappa]$, and all choices of $a$ from $[1,m]$, we get a probability bound of $1 - O(1/n)$.
}

\cancel{

\subsection{Proof of Lemma~\ref{lem:level}}

	Take two nodes $u, v\in N(\ell)$.  Since the level number of a node is strictly smaller than the level number of its parent in $T_R$,  no node in $N(\ell)$ is an ancestor of another node in $N(\ell)$.  Therefore, the subtrees in $T_R$ rooted at $u$ and $v$ are disjoint.   So are the subtrees in $T_S$ rooted at $u$ and $v$.  Suppose that $\ell_u \geq \ell-1$.   Applying property~(d) to $T_S$ gives 
	$p_v \not\in B\bigl(p_u,\frac{\tau-5}{2(\tau-1)} \tau^{\ell_u}\bigr)$.  Therefore, $d(p_u,p_v) \geq \frac{\tau-5}{2(\tau-1)} \tau^{\ell_u} \geq \frac{\tau-5}{2(\tau-1)} \tau^{\ell-1} \geq \tau^{\ell-2}$ as $\tau \geq 11$.  Suppose that $\ell_u \leq \ell-2$.  Assume to the contrary that $d(p_u,p_v) < \tau^{\ell-2}/4$.  Applying property~(c) to $T_R$ gives $P_u \subseteq B(p_v,\lambda \tau^{\ell-2})$, where $\lambda = \frac{2\tau}{\tau-1} +  \frac{1}{4}$.  But $\lambda \leq \frac{\tau-5}{2(\tau-1)}\tau$ as $\tau \geq 11$, which implies that $P_u \subseteq B\bigl(p_v,\frac{\tau-5}{2(\tau-1)}\tau^{\ell-1}\bigr)$ for $T_R$.  This is a contradiction to the application of property~(d) to node $v$ in $T_R$.  
	
\subsection{Proof of Lemma~\ref{lem:WSPD}}
	
	Let $L_u$ and $L_v$ denote the diameters of $P_u$ and $P_v$ under $d$, respectively,.
	By the properties of a relaxed net-tree, $P_u \subseteq B(p_u,\frac{2\tau}{\tau-1}\tau^{\ell_u})$.  Thus, $\max\{L_u,L_v\} \leq \frac{4\tau}{\tau-1}\max\{\tau^{\ell_u},\tau^{\ell_v}\}$, which is less than $\frac{1}{c+2}d(p_u,p_v)$ by steps~1 and~2 of Build.  Then, $\max\{L_u,L_v\} < \frac{1}{c+2}d(p_u,p_v) \leq \frac{1}{c+2}\bigl(d(P_u,P_v) + L_u + L_v\bigr)$.  Rearranging terms gives $\max\{L_u,L_v\} < \frac{1}{c} \cdot d(P_u,P_v)$.  Clearly, the working of Build guarantees that for every distinct pair of points $x, y \in R$, there exists a unique pair $(P_u,P_v)$ in the output of Build such that $x \in P_u$ and $y \in P_v$.  It remains to bound the output size and the running time.
	
	Consider a pair $(P_u,P_v)$ in the output.  Without loss of generality, assume that Build$(u,v)$ is called by Build$(u,w)$, where $w = \mathit{parent}(v)$.  We charge the pair $(P_u,P_v)$ to $w$.  Since Build considers the children of $w$ instead of those of $u$ in processing $(u,w)$, we must have $\ell_w \geq \ell_u$.  We claim that $\ell_{\mathit{parent}(u)} \geq \ell_w$.  Suppose not.  There must be a call Build$(\mathit{parent}(u),w')$ before the call Build$(u,w)$, where $w' = w$ or $w'$ is an ancestor of $w$.  Since $\ell_{\mathit{parent}(u)} <\ell_w \leq \ell_{w'}$, Build$(\mathit{parent}(u),w')$ eventually leads to the call Build$(\mathit{parent}(u),v)$, which must then call Build$(u,v)$ for $(P_u,P_v)$ to be included in the output.  But this contradicts our assumption that Build$(u,v)$ is called by Build$(u,w)$.  Therefore, $\ell_{\mathit{parent}(u)} \geq \ell_w \geq \ell_u$.  Hence, $u$ belongs to $N(\ell_w)$ as defined in Lemma~\ref{lem:level}.
	
	Since the pair $(P_u,P_w)$ is not included in the output, we must have $\frac{4\tau}{\tau-1}\tau^{\ell_w} \geq \frac{1}{c+2} d(p_u,p_w)$.  That is, $p_u$ lies in $B(p_w,O(c\tau^{\ell_w}))$.  On the other hand, for every pair of nodes $u,u' \in N(\ell_w)$, $d(p_u,p_{u'}) \geq \tau^{\ell_w-2}/4$ by Lemma~\ref{lem:level}.  By the doubling property of $d$, there are $c^{O(1)}$ nodes in $N(\ell_w)$, which implies that $w$ is charged at most $c^{O(1)}$ times.  Since $T_R$ has $O(n)$ nodes, the size bound of the $c$-WSPD follows.   
	
	Construct a computation tree $\cal T$ in which each node is labelled $(u,v)$ for the call Build$(u,v)$, and a node $(u,v)$ is a child of another node $(u,w)$ if Build$(u,w)$ calls Build$(u,v)$.  Each internal node of $\cal T$ has at least two children because each internal node of $T_R$ has at least two children. So $\cal T$ has $O(nc^{O(1)})$ node as it has $O(nc^{O(1)})$ leaves.  Clearly, Build spends $O(1)$ time at each node of $\cal T$, establishing the running time bound.
}

\section{Missing details in Section~\ref{sec:kNN}}
\label{app:NN}

\subsection{$\pmb c$-WSPD from a compression $\pmb{T_X}$ of $\pmb{T_S}$}
\label{app:NN-1}

Let $T_X$ be the compression of $T_S$ to $X$.   We compute a $c$-WSPD as described in~\cite{har-peled06} for a doubling metric, which is $d$ in our case, using $T_X$.  The pseudocode is given below.  The top-level call is Build$(r,r)$, where $r$ is the root of $T_X$.
\begin{quote}
	Build$(u,v)$
	\begin{enumerate}
		\item Swap $u$ and $v$ if necessary to ensure that $\ell_u > \ell_v$, or $\ell_u = \ell_v$ and $u$ is to the left of $v$ in the inorder traversal.
		\item If $\frac{4\tau}{\tau-1} \cdot \tau^{\ell_u} < \frac{1}{c+2} \cdot d(p_u,p_v)$ then return $\bigl\{\{P_u,P_v\}\bigr\}$.
		\item Otherwise, let $w_1,\ldots,w_j$ be the children of $u$, return $\bigcup_{i=1}^j \mathrm{Build}(w_i,v)$.
	\end{enumerate}
\end{quote}

\cancel{
The following lemma is adapted from a slightly stronger result in~\cite{har-peled06} for a net-tree.

\begin{lemma}
	\label{lem:level}
	Let $\ell$ be some integer not larger than the level number of the root of $T_X$.  Let $N(\ell) = \{\mbox{node $u$ of $T_X$} : \ell_u \leq \ell \leq \ell_{\mathit{parent}(u)} \}$.  Then, $d(p_u,p_v) \geq \tau^{\ell-2}/4$ for all $u,v\in N(\ell)$.
	%and $X \subseteq \bigcup_{u \in N(\ell)} B(p_u,4\tau^\ell)$.
\end{lemma}
\begin{proof}
	Take two nodes $u, v\in N(\ell)$.  Since the level number of a node is strictly smaller than the level number of its parent in $T_X$,  no node in $N(\ell)$ is an ancestor of another node in $N(\ell)$.  Therefore, the subtrees of $T_X$ rooted at $u$ and $v$ are disjoint.   So are the subtrees of $T_S$ rooted at $u$ and $v$.  Suppose that $\ell_u \geq \ell-1$.   Since the subtrees of $T_S$ rooted at $u$ and $v$ are disjoint, applying property~(d) of a net-tree to $T_S$ gives 
	$p_v \not\in B\bigl(p_u,\frac{\tau-5}{2(\tau-1)} \tau^{\ell_u}\bigr)$.  Therefore, $d(p_u,p_v) \geq \frac{\tau-5}{2(\tau-1)} \tau^{\ell_u} \geq \frac{\tau-5}{2(\tau-1)} \tau^{\ell-1} \geq \tau^{\ell-2}$ as $\tau \geq 11$.  Suppose that $\ell_u \leq \ell-2$.  Assume to the contrary that $d(p_u,p_v) < \tau^{\ell-2}/4$.  Applying property~(c) of a relaxed net-tree to $T_X$ gives $P_u \subseteq B\bigl(p_u,\frac{2\tau}{\tau-1}\tau^{\ell_u}\bigr)$.  Therefore, $P_u \subseteq B(p_v,\lambda \tau^{\ell-2})$, where $\lambda = \frac{2\tau}{\tau-1} +  \frac{1}{4}$.  But $\lambda \leq \frac{\tau-5}{2(\tau-1)}\tau$ as $\tau \geq 11$, which implies that $P_u \subseteq B\bigl(p_v,\frac{\tau-5}{2(\tau-1)}\tau^{\ell-1}\bigr)$ for $T_X$.  This is a contradiction to the application of property~(d) of a relaxed net-tree to $v$ in $T_X$ because the subtrees of $T_X$ rooted at $u$ and $v$ are disjoint.
\end{proof}

Using Lemma~\ref{lem:level}, one can prove that Build$(r,r)$ produces a $c$-WPSD.  The proof is adapted from~\cite{har-peled06} by taking into consideration that the representative point of a node $u$ in $T_X$ may not belong to $P_u$.
}

\begin{lemma}
	\label{lem:build}
	\emph{Build} constructs a $c$-WSPD of size $O((c+1)^{O(1)}|X|)$  in $O((c+1)^{O(1)}|X|)$ time.
\end{lemma}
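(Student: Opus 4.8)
The goal is to show that the \texttt{Build} procedure, run on a compression $T_X$ of $T_S$ to a subset $X$, produces a $c$-WSPD of $X$ under $d$ of near-linear size in near-linear time. The approach is to follow the standard net-tree WSPD analysis from Har-Peled and Mendel~\cite{har-peled06}, with one caveat: $T_X$ is only a \emph{compression} of a net-tree, so a representative point $p_u$ attached to a node $u$ of $T_X$ need not lie in $P_u$ (property~(a) of a net-tree can fail), though the "packing"-style properties survive. Concretely, I would first record the properties that $T_X$ does inherit: since compression only bypasses degree-one internal nodes, every node $u$ of $T_X$ is also a node of $T_S$ with the same $p_u$, $\ell_u$, and $P_u\cap X$; property~(b) (levels strictly decrease toward the leaves) is preserved; property~(d) ($P_u\subseteq B(p_u,\frac{2\tau}{\tau-1}\tau^{\ell_u})$) holds verbatim because $P_u$ only shrinks; and property~(e), which lower-bounds how far apart representatives of "incomparable" nodes at a given level must be, carries over from $T_S$ since the relevant subtrees are disjoint in both trees. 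These facts give the separation lemma one needs: for any integer $\ell$, the set $N(\ell)=\{u : \ell_u\le\ell\le\ell_{\mathit{parent}(u)}\}$ has $d(p_u,p_v)=\Omega(\tau^{\ell-2})$ for all distinct $u,v\in N(\ell)$, and $N(\ell)$ forms an antichain whose representatives are spread out.

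\textbf{Correctness.} Next I would verify that \texttt{Build}$(r,r)$ outputs a valid $c$-WSPD. For the separation property: when \texttt{Build}$(u,v)$ returns $\{P_u,P_v\}$ in step~2 we have $\frac{4\tau}{\tau-1}\tau^{\ell_u}<\frac{1}{c+2}d(p_u,p_v)$ with $\ell_u\ge\ell_v$; by property~(d) the diameters of $P_u$ and $P_v$ are each at most $\frac{4\tau}{\tau-1}\tau^{\ell_u}$, and a triangle-inequality argument ($d(P_u,P_v)\ge d(p_u,p_v)-\mathrm{diam}(P_u)-\mathrm{diam}(P_v)$) converts the step~2 inequality into $\max\{\mathrm{diam}(P_u),\mathrm{diam}(P_v)\}<\frac{1}{c}d(P_u,P_v)$, exactly as in the cancelled proof sketch of the old Lemma on WSPDs in the excerpt. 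For the covering property: a straightforward induction on the recursion shows that for every pair $x\neq y$ in $X$, the recursion eventually isolates a pair $\{P_u,P_v\}$ with $x\in P_u$, $y\in P_v$; termination is guaranteed because step~3 descends to children whose levels are strictly smaller (property~(b)), and at the leaves the singletons trivially satisfy the separation test. One should also note $A_i\cap B_i=\emptyset$ follows from the separation inequality.

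\textbf{Size and running time.} This is the part requiring the packing argument and is where the real work lies. For each output pair $\{P_u,P_v\}$, assume WLOG it was produced by \texttt{Build}$(u,v)$ called from \texttt{Build}$(u,w)$ with $w=\mathit{parent}(v)$; charge the pair to $w$. As in~\cite{har-peled06}, one argues $\ell_{\mathit{parent}(u)}\ge\ell_w\ge\ell_u$ (otherwise the recursion would have reached this pair along a different branch, contradicting how it was produced), so $u\in N(\ell_w)$; and since $\{P_u,P_w\}$ was \emph{not} output, the step~2 test failed for $(u,w)$, forcing $p_u\in B(p_w,O((c+1)\tau^{\ell_w}))$. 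Combining this with the $\Omega(\tau^{\ell_w-2})$ pairwise separation of representatives in $N(\ell_w)$ and the doubling property of $d$, a ball-packing bound gives $(c+1)^{O(1)}$ nodes $u$ that can charge a fixed $w$. Since $T_X$ has $O(|X|)$ nodes (each internal node has $\ge 2$ children after compression), the total number of output pairs is $(c+1)^{O(1)}|X|$. For the time bound, build the recursion tree $\mathcal{T}$ whose nodes are the calls \texttt{Build}$(u,v)$; each internal node has $\ge 2$ children (since internal nodes of $T_X$ have $\ge 2$ children), so $|\mathcal{T}|=O(\text{number of leaves})=(c+1)^{O(1)}|X|$, and \texttt{Build} spends $O(1)$ per node (a constant-size polygon distance computation under $d$). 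The main obstacle is making the charging argument airtight for a \emph{compression} rather than a genuine net-tree — specifically checking that property~(e)/the $N(\ell)$ separation still holds when $p_u\notin P_u$ — but this goes through because the separation comes from disjointness of the corresponding $T_S$-subtrees and $T_S$ \emph{is} a genuine net-tree, so its property~(e) applies to the $p_u$'s directly.
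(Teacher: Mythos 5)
Your high-level structure mirrors the paper's proof (verify the WSPD covering and separation properties via property (d) and a triangle inequality, then do a charging/packing argument for the size, then a computation-tree argument for time). The correctness and running-time parts are fine. However, the core of the size bound — the claimed separation lemma for the compression $T_X$ — is wrong, and this is exactly the pitfall the paper's proof is designed to avoid.

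Concretely, you claim that for $N_X(\ell)=\{u : \ell_u\le\ell\le\ell_{\mathit{parent}_{T_X}(u)}\}$, any two distinct $u,v\in N_X(\ell)$ have $d(p_u,p_v)=\Omega(\tau^{\ell-2})$, arguing that this ``carries over'' from property~(e) of $T_S$ because the $T_S$-subtrees of $u$ and $v$ are disjoint. But property~(e) of $T_S$ applied at $u$ only gives $d(p_u,p_v)\geq \frac{\tau-5}{2\tau-2}\tau^{\ell_{\mathit{parent}_{T_S}(u)}-1}$, and $\ell_{\mathit{parent}_{T_S}(u)}$ can be far smaller than $\ell$ when a long chain of $T_S$-ancestors of $u$ was bypassed by the compression. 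For instance, take $T_S$ with a root $r$ at level $L$ with two children $c_1,c_2$ at level $L-1$, each with a deep subtree; let $X$ contain a single leaf $x_1\in P_{c_1}$ and a single leaf $x_2\in P_{c_2}$. Then $T_X$ has root $r$ with leaf children $x_1,x_2$, so $x_1,x_2\in N_X(\ell)$ for every $\ell\leq L$, yet property~(e) only guarantees $d(p_{c_1},p_{c_2})\gtrsim \tau^{L-1}$ while $x_i$ can lie up to $\approx\frac{2\tau}{\tau-1}\tau^{L-1}$ from $p_{c_i}$, so $d(x_1,x_2)$ can be a tiny constant even though $\tau^{\ell-2}$ is huge. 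The separation simply fails on $T_X$. The paper sidesteps this by never asserting such a separation for $T_X$: it charges each $u\in N_X(\ell_w)$ injectively to a node $u'\in N_S(\ell_w)$ (either $u$ itself or the unique pruned $T_S$-ancestor of $u$ that lands in $N_S(\ell_w)$), notes $d(p_u,p_{u'})=O(\tau^{\ell_w})$ so the ball around $p_w$ only grows by an additive $O(\tau^{\ell_w})$, and then applies the known $\Omega(\tau^{\ell_w-1})$ separation of $N_S(\ell_w)$ in the genuine net-tree $T_S$ (Proposition~2.2 of Har-Peled--Mendel). To repair your proof you would need to introduce this indirection; the direct $N_X(\ell)$ separation you rely on does not hold.
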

\begin{proof}
	We first show that Build outputs a $c$-WSPD.  Clearly, the working of Build guarantees that for every distinct pair of points $x, y \in X$, there exists a pair $\{P_u,P_v\}$ in the output of Build such that $x \in P_u$ and $y \in P_v$.  Let $\{P_u,P_v\}$ be a pair in the output of Build.  Let $P'_u$ and $P'_v$ be the subsets of points for the subtrees of $T_S$ rooted at $u$ and $v$.  Let $\delta'_u$ and $\delta'_v$ be the diameters of $P'_u$ and $P'_v$ under $d$, respectively.  By property~(c) of a net-tree, $P'_u \subseteq B(p_u,\frac{2\tau}{\tau-1}\tau^{\ell_u})$.  Thus, $\max\{\delta'_u,\delta'_v\} \leq \frac{4\tau}{\tau-1}\max\{\tau^{\ell_u},\tau^{\ell_v}\}$, which is less than $\frac{1}{c+2}d(p_u,p_v)$ by steps~1 and~2 of Build.  Then, $\max\{\delta'_u,\delta'_v\} < \frac{1}{c+2}d(p_u,p_v) \leq \frac{1}{c+2}\bigl(d(P'_u,P'_v) + \delta'_u + \delta'_v\bigr)$.  Rearranging terms gives $\max\{\delta'_u,\delta'_v\} < \frac{1}{c} \cdot d(P'_u,P'_v)$.  Let $\delta_u$ and $\delta_v$ be the diameters of $P_u$ and $P_v$ respectively.  Observe that $P_u \subseteq P'_u$ and $P_v \subseteq P'_v$.  Hence, $\max\{\delta_u,\delta_v\}  \leq \max\{\delta'_u,\delta'_v\} < \frac{1}{c} \cdot d(P'_u,P'_v) \leq \frac{1}{c} \cdot d(P_u,P_v)$.  In summary, the output of Build is a $c$-WSPD.
	
	It remains to bound the output size and the running time of Build.
	
	Consider a pair $\{P_u,P_v\}$ in the output.  Without loss of generality, assume that Build$(u,v)$ is called by Build$(u,w)$, where $w = \mathit{parent}(v)$.  We charge the pair $\{P_u,P_v\}$ to $w$.  Since Build considers the children of $w$ instead of those of $u$ in processing $(u,w)$, we must have $\ell_w \geq \ell_u$.  We claim that $\ell_{\mathit{parent}(u)} \geq \ell_w$.  Suppose not.  There must be a call Build$(\mathit{parent}(u),w')$ before the call Build$(u,w)$, where $w' = w$ or $w'$ is an ancestor of $w$.  Since $\ell_{\mathit{parent}(u)} <\ell_w \leq \ell_{w'}$, Build$(\mathit{parent}(u),w')$ eventually leads to the call Build$(\mathit{parent}(u),v)$, which must then call Build$(u,v)$ for $\{P_u,P_v\}$ to be included in the output.  But this contradicts our assumption that Build$(u,v)$ is called by Build$(u,w)$.  Therefore, $\ell_{\mathit{parent}(u)} \geq \ell_w \geq \ell_u$.  Hence, $u$ belongs to the set $N_X(\ell_w)$ as defined for $T_X$ below:
	\[
	N_X(\ell) = \{\text{node $u$ of $T_X$: $\ell_u \leq \ell \leq \ell_{\mathit{parent}(u)}$}\}.
	\]
	Note that the parent-child relation in the definition of $N_X(\ell)$ refers to $T_X$.   Consider the following similar definition for $T_S$:
	\[
	N_S(\ell) = \{\text{node $u$ of $T_S$: $\ell_u \leq \ell \leq \ell_{\mathit{parent}(u)}$}\}.
	\]
	Note that the parent-child relation in the definition of $N_S(\ell)$ refers to $T_S$.   Since the pair $\{P_u,P_w\}$ is not included in the output, we must have $\frac{4\tau}{\tau-1}\tau^{\ell_w} \geq \frac{1}{c+2} d(p_u,p_w)$.  That is, $p_u \in B\bigl(p_w,O(c\tau^{\ell_w})\bigr)$.   We analyze the total charge on $w$ in the following.
	
	First, we charge the nodes in $N_X(\ell_w)$ to some nodes in $N_S(\ell_w)$.  Take any node $u \in N_X(\ell_w)$.  If $u \in N_S(\ell_w)$, we charge $u$ to itself.  Suppose that $u \not\in N_S(\ell_w)$.  Let $u'' = \mathit{parent}(u)$ in $T_X$.  It means that there are some internal nodes on the path from $u''$ to $u$ in $T_S$, and all of them are pruned by the compression of $T_S$ to $X$.  It follows from the definition of $N_S(\ell)$ that exactly one of these pruned internal node belongs to $N_S(\ell_w)$, say $u'$.  We charge $u \in N_X(\ell_w)$ to $u' \in N_S(\ell_w)$.   Note that $u'$ cannot be charged by another node in $N_X(\ell_w)$.  Otherwise, $T_X$ would contain nodes in two different child subtrees of $u'$ in $T_S$, which would force $u'$ to be a node of $T_X$.  This contradicts the fact that $u'$ is pruned by the compression of $T_S$ to $X$.
	
	Second, consider a node $u \in N_X(\ell_w)$ that charges an ancestor $u' \in N_S(\ell_w)$ of $u$ in $T_S$.  We have shown earlier that $d(p_u,p_w) = O(c\tau^{\ell_w})$ if $\{P_u,P_w\}$ does not appear in the output of Build.  By property~(d) of the net-tree $T_S$, we have $d(p_u,p_{u'}) = O(\tau^{\ell_{u'}})$, which is $O(\tau^{\ell_w})$ as $\ell_{u'} \leq \ell_w$ by the definition of $N_S(\ell_w)$.  As a result, $d(p_{u'}, p_w) \leq d(p_u,p_{u'}) + d(p_{u}, p_w) = O((c+1)\tau^{\ell_w})$.  Consequently, the nodes in $N_S(\ell_w)$ that are charged by the nodes in $\bigl\{u \in N_X(\ell_w) : \text{$\{P_u,P_w\}$ does not appear in the output of Build} \bigr\}$ lie in $B(p_w,O((c+1)\tau^{\ell_w}))$.
	
	It is known that any two nodes in $N_S(\ell_w)$ are at distance $\frac{1}{4}\tau^{\ell_w-1}$ or more apart~\cite[Proposition~2.2]{har-peled06}.  Therefore, $N_S(\ell_w)\cap B(p_w,O((c+1)\tau^{\ell_w}))$ has size at most $(c+1)^{O(1)}$ by the doubling property.  
	
	In summary, the total charge on the node $w$ in $T_X$ is $(c+1)^{O(1)}$.  Since $T_X$ has $O(|X|)$ nodes, the size bound of the $c$-WSPD follows.

	Construct a computation tree $\cal T$ in which each node is labelled $(u,v)$ for the call Build$(u,v)$, and a node $(u,v)$ is a child of another node $(u,w)$ if Build$(u,w)$ calls Build$(u,v)$.  The leaves of $\cal T$ correspond to the pairs output by Build.  Each internal node of $\cal T$ has at least two children because each internal node of $T_X$ has at least two children. So $\cal T$ has $O((c+1)^{O(1)}|X|)$ nodes as it has $O((c+1)^{O(1)}|X|)$ leaves.  Clearly, Build spends $O(1)$ time at each node of $\cal T$, establishing the running time bound.
\end{proof}

\cancel{
	
	\section{Splitting $\pmb{\vor(R \cup I)}$ to yield $\pmb{\vor(I)}$}
	\label{sec:split}
	
	We employ the randomized algorithm of Chazelle et al.~\cite{chazelle02} for splitting an Euclidean Delaunay triangulation in linear time.  We apply it to split $\del(R \cup I)$ into $\del(R)$ and $\del(I)$.
	\begin{quote}
		\begin{enumerate}
			\item Let $p_a$ and $p_b$ be two points in $R \cup I$ chosen uniformly at random.   Let $U_a$ denote $R$ or $I$ whichever contains $p_a$.  Similarly, let $U_b$ denote $R$ or $I$ whichever contains $p_b$.  It is possible that $U_a = U_b$.
			\item Let $q_a$ and $q_b$ be the nearest neighbors of $p_a$ in $U_a$ and $p_b$ in $U_b$ under the metric $d$, respectively.  Search $\del(R \cup I)$ to find $q_a$ and $q_b$.  Without loss of generality, suppose that $d(p_a,q_a) \leq d(p_b,q_b)$.
			\item Remove $p_a$ from $\del(R \cup I)$.
			\item Recursively compute $\del(R \setminus \{p_a\})$ and $\del(I \setminus \{p_a\})$ from $\del((R \cup I)\setminus \{p_a\})$.
			\item Use $q_a$ to insert $p_a$ into $\del(U_a \setminus \{p_a\})$ to produce $\del(U_a)$.
		\end{enumerate}
	\end{quote}
	
	The crux of the analysis in~\cite{chazelle02} depends the following properties:
	\begin{quote}
		\begin{enumerate}[(\alph{enumi})]
			\item Step~2 find the nearest neighbor $q_a$ of $p_a$ correctly (under the metric $d$ in our case and Euclidean in~\cite{chazelle02}) in $O(N_a\log N_a + N_b\log N_b)$ time, where $N_a$ is sum of the degrees of points in $\del(R \cup I)$ that are at distance $d(p_a,q_a)$ or less from $p_a$, and $N_b$ is the sum of the degrees of points in $\del(R \cup I)$ that are at distance $d(p_b,q_b)$ or less from $p_b$.
			\item The maximum degree in the nearest-neighbor graph of a point set under the metric ($d$ in our case and Euclidean in~\cite{chazelle02}) is at most a constant.
			\item The average degree in a Delaunay triangulation under $d_Q$ is $O(1)$.
			\item Step~3 runs in time proportional to the degree of $p_a$ in $\del(R \cup I)$.
			\item Step~5 runs in time proportional to the sum of the degree of $q_a$ in $\del(U_a \setminus \{p_a\})$ and the degree of $p_a$ in $\del(U_a)$.
		\end{enumerate}
	\end{quote}
	Property~(b) follows from Lemma~\ref{lem:deg}.  Property~(c) follows from the fact that a Delaunay triangulation under $d_Q$ is planar.  Step~5 can be implemented by going through the neighbors of $q_a$ in $\del(U_a \setminus \{p_a\})$ to the triangle $t$ that is intersected by $pq$.  Note that $t$ must be in conflict with $p_a$.  Then, as in Section~\ref{sec:merge}, we perform a BFS to find all triangles in $\del(U_a)$ that are in conflict with $p_a$.  They form a polygon $K$ that is star-shaped with respect to $p_a$ by Lemma~\ref{lem:star}.  Then, we remove these triangles and connect the boundary edges of $K$ to $p_a$.  Therefore, step~5 runs in time proportional to the sum of the degree of $q_a$ in $\del(U_a \setminus \{p_a\})$ and the degree of $p_a$ in $\del(U_a)$; that is, property~(e) holds in our case.
	
	Step~2 works as follows~\cite{chazelle02}.  Initialize a priority queue $\cal A$ to store the neighbors of $p_a$ in $\del(R \cup I)$.  Extract the point $q$ from $\cal A$ that has the minimum $d(p_a,q)$.  If $q$ belongs to $U_a$, then $q_a = q$ and we are done; otherwise, we insert the neighbors of $q$ in $\del(R \cup I)$ that are not already in $\cal A$.   The correctness can be seen as follows.  Let $\hat{Q}_a$ be the scaled copy of $\hat{Q}$ centered at $p_a$ that includes $q_a$ in its boundary.  If $\hat{Q}_a$ is empty, then $p_aq_a$ is an edge in $\del(R \cup I)$, so $q_a$ is found correctly in $O(N_a\log N_a) = O(\mathrm{deg}(p_a) \cdot \log \mathrm{deg}(p_a))$, where $\mathrm{deg}(p_a)$ is the degree of $p_a$ in $\del(R \cup I)$.  If $\hat{Q}_a$ is not empty, the search procedure above visits all the points in $\hat{Q}_a \cap (R \cup I)$.  We can shrink $\hat{Q}_a$ towards $q_a$ to obtain the largest empty scaled copy of $\hat{Q}_a$ that lies inside $\hat{Q}_a$ and contains a point $\hat{Q}_a \cap (R \cup I)$ in its boundary.  This shows that $q_a$ is connected to a point in $\hat{Q}_a \cap (R \cup I)$ by an edge in $\del(R \cup I)$, and $q_a$ is found in $O(N_a \log N_a)$ time.  Similarly, $q_b$ is found in $O(N_b \log N_b)$ time.
	
	Consider step~3.  Since $\vor(R \cup I)$ is an abstract Voronoi diagram, we can invoke the recent method of Junginer and Papadopoulou~\cite{JP18} to delete $p_a$ from $\del(R \cup I)$ in $O(\mathrm{deg}(p_a))$ expected time.  More directly, one can also use Chew's method~\cite{C86} as in~\cite{chazelle02} in the Euclidean case.
}

\subsection{Correctness of the extraction of $\pmb{k}$-nearest neighbor graph}
\label{app:NN-2}

%To compute the $k$-nearest neighbor graph under $d$, we transfer a strategy in~\cite{CK95} for constructing it in the Euclidean case from a fair split tree.

Compute a subset $C_v \subseteq X$ for every leaf $v$ of $T_X$ such that $|C_v| = O(k)$ and $C_v$ contains the subset $\bigl\{p \in X : \text{the point in $P_v$ is a $k$-nearest neighbor of $p$}\bigr\}$.  The containment may be strict, so the point in $P_v$ may not be a $k$-nearest neighbor of some point $p \in C_v$.  We will discuss shortly how to compute such $C_v$'s.  After obtaining all $C_v$'s, for each point $p \in X$, construct $L_p = \bigcup \bigl\{ P_v : \text{$v$ is a leaf of $T_X$} \wedge p \in C_v \bigr\}$.   By definition, all $k$-nearest neighbors of $p$ are included in $L_p$, although $L_p$ may contain more points.    We select in $O(|L_p|)$ time the $k$-th farthest point $p'$ in $L_p$ from $p$ under $d$.  Then, we scan in $O(|L_p|)$ time using $d(p,p')$ to find the $k$-nearest neighbors of $p$.  Hence, as $|C_v| = O(k)$, the total running time is $O(\sum_{p \in X} |L_p|) = O(\sum_{\text{leaf $v$}} |C_v|) = O(k|X|)$ plus the time to compute the $C_v$'s.  It remains to discuss the computation of the $C_v$'s.

Compute a 4-WSPD $\Delta$ of $X$ which takes $O(|X|)$ time by Lemma~\ref{lem:build}.  We define a subset $C_u \subseteq X$ for every node $u$ of $T_X$ with a generalized requirement.   For every node $u$ of $T_X$, we require that $|C_u| = O(k)$ and $C_u$ contains the subset $\bigl\{p : \exists \, \{P_w,P_{w'}\} \in \Delta \, \text{s.t.} \, p \in P_{w'}$, $w$ is $u$ or an ancestor of $u$, and $P_u$ contains a $k$-nearest neighbor of $p$ $\bigr\}$.  The containment may be strict, i.e., some $p \in C_u$ may violate the property above.

This generalization is consistent with the requirement for $C_v$ at a leaf $v$ because if the single point $q$ in $P_v$ is a $k$-nearest neighbor of some $p$, then as $\Delta$ is a WSPD, there exists $\{P_w,P_{w'}\} \in \Delta$ such that $q \in P_w$ and $p \in P_{w'}$; $w$ is clearly either $v$ or an ancestor of $v$.

The $C_u$'s are generated in a preorder traversal of $T_X$.   The computation of $C_u$ is complete after visiting $u$.  When visiting a node $u$, we initialize a set $C$ and prune $C$ later to obtain $C_u$.  The initial $C$ is $C_{\mathit{parent}(u)} \cup \bigl\{p : \exists \, \{P_u,P_{w'}\} \in \Delta \; \text{s.t.} \; p \in P_{w'} \wedge |P_{w'}| \leq k \bigr\}$.  (If $u$ is the root of $T$, take $C_{\mathit{parent}(u)}$ to be $\emptyset$.)    Lemma~\ref{lem:initial} below shows that the initial $C$ satisfies the requirement for $C_u$ except that $|C_u|$ may not be $O(k)$.   As $|C_{\mathit{parent}(u)}| = O(k)$ inductively, the initialization of $C$ takes $O\bigl(k + \bigl|\bigl\{p : \exists \, \{P_u,P_{w'}\} \in \Delta \; \text{s.t.} \; p \in P_{w'} \wedge |P_{w'}| \leq k\bigr\}\bigr|\bigr)$ time.  

\begin{lemma}
	\label{lem:initial}
	The initial $C$ contains the subset $\bigl\{p : \exists \, \{P_w,P_{w'}\} \in \Delta \, \text{s.t.} \, p \in P_{w'}$, $w$ is $u$ or an ancestor of $u$, and $P_u$ contains a $k$-nearest neighbor of $p$ $\bigr\}$.
\end{lemma}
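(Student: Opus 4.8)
The plan is to establish the containment by a two-way case analysis on the node $w$ relative to $u$, invoking only the inductive hypothesis for $\mathit{parent}(u)$ and the well-separatedness of the $4$-WSPD $\Delta$. I would fix a point $p$ in the target subset, so that there is a pair $\{P_w,P_{w'}\}\in\Delta$ with $p\in P_{w'}$, with $w$ equal to $u$ or an ancestor of $u$, and with a point $q\in P_u$ that is a $k$-nearest neighbor of $p$, and then show $p\in C$.

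First I would dispose of the case where $w$ is a proper ancestor of $u$. Then $w$ is $\mathit{parent}(u)$ or an ancestor of $\mathit{parent}(u)$, and since $P_u\subseteq P_{\mathit{parent}(u)}$ in $T_X$, the same $q$ witnesses that $P_{\mathit{parent}(u)}$ contains a $k$-nearest neighbor of $p$. Hence $p$ lies in the target subset for the node $\mathit{parent}(u)$, and because the $C_u$'s are computed in preorder, the value $C_{\mathit{parent}(u)}$ is already available; the inductive hypothesis then gives $p\in C_{\mathit{parent}(u)}\subseteq C$. This case is pure bookkeeping about ancestry.

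The substantive case is $w=u$. Here I would use the defining inequality of the $4$-WSPD pair $\{P_u,P_{w'}\}$: the diameters of $P_u$ and $P_{w'}$ under $d$ are each less than $\tfrac14\,d(P_u,P_{w'})$, which in particular forces $P_u\cap P_{w'}=\emptyset$ and $d(P_u,P_{w'})>0$. I would note this positivity explicitly so that no degenerate division occurs. Since $p\in P_{w'}$ and $q\in P_u$, we get $d(p,q)\ge d(P_u,P_{w'})>0$, so $p\neq q$, and for every $p'\in P_{w'}$, $d(p,p')\le \mathrm{diam}_d(P_{w'}) < \tfrac14\,d(P_u,P_{w'}) \le \tfrac14\,d(p,q) < d(p,q)$. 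Thus every point of $P_{w'}\setminus\{p\}$ is strictly closer to $p$ than $q$ is. Because $q$ is a $k$-nearest neighbor of $p$, at most $k-1$ points are strictly closer to $p$ than $q$; hence $|P_{w'}\setminus\{p\}|\le k-1$, i.e., $|P_{w'}|\le k$. Therefore $p$ belongs to $\bigl\{p : \exists\, \{P_u,P_{w'}\}\in\Delta \;\text{s.t.}\; p\in P_{w'}\wedge |P_{w'}|\le k\bigr\}\subseteq C$, which completes the case analysis.

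The only step that is not routine is the $w=u$ case, and within it the real content is the chain of inequalities that converts well-separatedness into the cardinality bound $|P_{w'}|\le k$; everything else (the ancestry facts, $P_u\subseteq P_{\mathit{parent}(u)}$, and the availability of the inductive hypothesis under a preorder traversal) is immediate. I do not expect any genuine obstacle, but the packing-style counting that "many close points cannot all be crowded inside a well-separated cell" is the one place where the WSPD constant $c=4$ (in fact any $c>1$) is actually used, so I would make sure to keep the strict inequality $\tfrac14 d(p,q) < d(p,q)$ visible.
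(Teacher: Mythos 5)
Your proof is correct and follows essentially the same route as the paper: split on whether $w$ is a proper ancestor of $u$ (handled by the inductive hypothesis for $\mathit{parent}(u)$, using $P_u \subseteq P_{\mathit{parent}(u)}$) or $w = u$ (handled by the $4$-WSPD inequality $\mathrm{diam}_d(P_{w'}) < \tfrac14 d(P_u,P_{w'}) < d(p,q)$, forcing $|P_{w'}| \le k$). You make the counting step a touch more explicit, but the decomposition and the key inequality chain match the paper's proof.
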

\begin{proof}
	Let $K = \bigl\{p : \exists \, \{P_w,P_{w'}\} \in \Delta \, \text{s.t.} \, p \in P_{w'}$, $w$ is $u$ or an ancestor of $u$, and $P_u$ contains a $k$-nearest neighbor of $p$ $\bigr\}$.  Partition $K$ into a disjoint union $K' \cup K''$, where $K'$ covers those pairs $\{P_w,P_{w'}\} \in \Delta$ such that $w$ is an ancestor of $u$, and $K''$ covers those pairs $\{P_u,P_{w'}\} \in \Delta$.  Inductively, $K' \subseteq C_{\mathit{parent}(u)}$.  We just need to argue that $K''$ is contained in the subset $\bigl\{p : \exists \, \{P_u,P_{w'}\} \in \Delta \; \text{s.t.} \; p \in P_{w'} \wedge |P_{w'}| \leq k\bigr\}$ which is part of the initial $C$.  That is, if $p \in P_{w'}$ for some $\{P_u,P_{w'}\} \in \Delta$ and some $q \in P_u$ is a $k$-nearest neighbor of $p$, we need to show that $|P_{w'}| \leq k$.   In this case, as $\Delta$ is a 4-WSPD, the diameter of $P_{w'}$ is less than $\frac{1}{4}d(P_u,P_{w'}) < d(p,q)$.   So all points in $P_{w'} \setminus \{p\}$ are closer to $p$ than $q$, which implies that $|P_{w'}| \leq k$ because $q$ is a $k$-nearest neighbor of $p$. 
\end{proof}

We prune $C$ as follows.  Recall that $\hat{Q}$ is the polygon of $O(1)$ size that induces the metric $d$.   Let $\Xi = (\xi_1,\xi_2,\ldots)$ be a maximal set of points in $\partial \hat{Q}$ in clockwise order such that for any $\xi_i \in \Xi$, $d(\xi_i,\xi_{i+1}) \in \bigl[\frac{1}{8},\frac{1}{4}\bigr]$.   The set $\Xi$ has $O(1)$ size and can be computed in $O(1)$ time by placing points greedily in $\partial \hat{Q}$.  Let $\gamma_i$ be the ray from the origin through $\xi_i$.  These rays divide $\mathbb{R}^2$ into cones.  Fix an arbitrary point $q_0 \in P_u$.   Compute in $O(|C|)$ time, for all $i$, the subset $C_i$ of $C$ in the cone bounded by $\gamma_i + q_0$ and $\gamma_{i+1} + q_0$.  Determine the $k$-th nearest point in $C_i$ from $q_0$ in $O(|C_i|)$ time.  Then, scan $C_i$ in $O(|C_i|)$ time to retain only the $k$ nearest points in $C_i$ from $q_0$.  Repeat the same for every $C_i$.   The union of the pruned $C_i$'s is $C_u$ which clearly has $O(k)$ size.  Lemma~\ref{lem:kn2} below shows that the pruning of $C_i$ only removes a point $p$ if no point in $P_u$ can be a $k$-nearest neighbor of $p$.  Therefore, the union of the pruned $C_i$'s satisfies the requirement for $C_u$.  The running time over all $C_i$'s is $O\bigl(|C|) = O\bigl(k + \bigl|\bigl\{p : \exists \, \{P_u,P_{w'}\} \in \Delta \; \text{s.t.} \; p \in P_{w'} \wedge |P_{w'}| \leq k\bigr\}\bigr|\bigr)$.

In summary, the computation of all $C_u$'s takes $O(k|\Delta|) = O(k|X|)$ time.

\begin{lemma}
	\label{lem:kn2}
	For any point $p \in C_i$, if there are at least $k$ points in $p' \in C_i \setminus \{p\}$ such that $d(p',q_0) \leq d(p,q_0)$, then no point in $P_u$ can be a $k$-nearest neighbor of $p$.
\end{lemma}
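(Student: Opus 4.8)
The plan is to argue by contradiction: assume some $q\in P_u$ is a $k$-nearest neighbor of $p$, and then exhibit at least $k$ points that are all strictly closer to $p$ than $q$ is, which is impossible.

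The geometric core is a narrow-cone estimate for the metric $d$. Since $d$ is induced by the \emph{centrally symmetric} polygon $\hat{Q}$, any point $x\in C_i$ can be written as $x=q_0+d(q_0,x)\,\hat{x}$ with $\hat{x}$ on the arc of $\partial\hat{Q}$ spanned by the angular sector between $\gamma_i$ and $\gamma_{i+1}$. For two such points $x,y\in C_i$ with $d(q_0,y)\le d(q_0,x)$, two applications of the triangle inequality give
\[
d(x,y)\;\le\;\bigl(d(q_0,x)-d(q_0,y)\bigr)+d(q_0,y)\cdot d(\hat{x},\hat{y}),
\]
and here I would invoke the defining property of $\Xi$ (consecutive points of $\partial\hat{Q}$ at $d$-distance at most $1/4$) to bound $d(\hat{x},\hat{y})$ for two points of $\partial\hat{Q}$ lying in a common cone, and hence obtain an estimate of the form $d(x,y)\le\bigl(1-\tfrac14\bigr)d(q_0,x)$, matching the factor $\tfrac14$ that the well-separated pair decomposition will cost below. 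I expect this to be the delicate step: the crude bookkeeping only yields $d(x,y)\le d(q_0,x)$, which is too weak, so squeezing the cone estimate down past the WSPD loss is exactly where the spacing constants of $\Xi$ and the separation parameter $c=4$ of $\Delta$ have to be chosen in concert.

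Given this estimate, the rest is bookkeeping. It suffices to treat the case where $p$ lies in $P_{w'}$ for some pair $\{P_w,P_{w'}\}\in\Delta$ with $P_u\subseteq P_w$, since otherwise the requirement on $C_u$ asks nothing of $p$ and its removal is harmless. As $q_0,q\in P_u\subseteq P_w$ while $p\in P_{w'}$, the separation property of the $4$-WSPD gives $\mathrm{diam}_d(P_u)\le\mathrm{diam}_d(P_w)<\tfrac14\,d(P_w,P_{w'})\le\tfrac14\min\{d(p,q_0),d(p,q)\}$, hence $d(q_0,q)<\tfrac14\,d(p,q_0)$ and so $d(p,q)>\tfrac34\,d(p,q_0)$. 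Applying the cone estimate with $x=p$, $y=p'$ to each of the at least $k$ points $p'\in C_i\setminus\{p\}$ with $d(q_0,p')\le d(q_0,p)$ then yields $d(p,p')\le\tfrac34\,d(p,q_0)<d(p,q)$ (the degenerate possibility $p'=q_0$ being excluded, since $q_0$ as apex is not among the candidates, or handled separately once ties are broken consistently). Thus the at least $k$ pairwise-distinct points $p'$ all lie in the open $d$-ball of radius $d(p,q)$ around $p$; together they show that $q$ is not among the $k$ nearest neighbors of $p$. Since $q\in P_u$ was arbitrary, no point of $P_u$ can be a $k$-nearest neighbor of $p$, which is the claim.
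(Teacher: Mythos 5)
Your cone estimate is correct as far as it goes: with $\lambda_1=d(q_0,p)$ and $\lambda_0=d(q_0,p')$, it gives $d(p,p')\le(\lambda_1-\lambda_0)+\lambda_0\,d(\hat p,\hat p')\le d(p,q_0)-(1-\alpha)\,d(p',q_0)$ for some constant $\alpha<1$ coming from the spacing of $\Xi$, and this is precisely the intermediate bound the paper obtains (it uses $\alpha=\tfrac12$). The gap is the next step, where you assert $d(p,p')\le\tfrac34\,d(p,q_0)$. That inequality is equivalent to $d(p',q_0)\ge\frac{1/4}{1-\alpha}\,d(p,q_0)$, and nothing in your hypotheses gives such a lower bound on $d(p',q_0)$: the condition $d(p',q_0)\le d(p,q_0)$ allows $p'$ to lie arbitrarily close to $q_0$ (not merely equal to it), in which case $d(p,p')$ is arbitrarily close to $d(p,q_0)$ and the claimed factor $\tfrac34$ fails. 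Excluding the single point $p'=q_0$ does not repair this.

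The missing ingredient is that the WSPD separation must be invoked for each $p'$, not only for $p$. Since $p'\in C$, it was admitted via some pair $\{P_w,P_{w'}\}\in\Delta$ with $p'\in P_{w'}$ and $P_u\subseteq P_w$; the $4$-separation then gives $d(p',q_0)\ge d(P_{w'},P_w)\ge 4\,\mathrm{diam}_d(P_u)$, which is the lower bound you need. Even with it, the final comparison should be carried out at the scale of $\mathrm{diam}_d(P_u)$ rather than as a fixed fraction of $d(p,q_0)$: the cone estimate becomes $d(p,p')\le d(p,q_0)-4(1-\alpha)\,\mathrm{diam}_d(P_u)$, while $d(p,y)\ge d(p,q_0)-\mathrm{diam}_d(P_u)$ for every $y\in P_u$, and these combine to $d(p,p')<d(p,y)$ because $4(1-\alpha)>1$. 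Your use of the WSPD only to deduce $d(p,q)>\tfrac34 d(p,q_0)$ is the less consequential half of the separation property; the half that carries the argument, the lower bound on $d(p',q_0)$, is the one you omitted.
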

\begin{proof}
		By our initialization of $C$, we can inductively show that for any point $p$ included in the initial $C$, there exists $\{P_w,P_{w'}\} \in \Delta$ such that $p \in P_{w'}$, and $w$ is $u$ or an ancestor of $u$.    
		
		Pick a point $p \in C_i$.  Let $p'$ be a point in $C_i \setminus \{p\}$ such that $d(p',q_0) \leq d(p,q_0)$.   Let $\lambda_0$ be the factor such that $p' \in \partial (\lambda_0\hat{Q} + q_0)$.  Similarly, let $\lambda_1 \geq \lambda_0$ be the factor such that $p \in \partial (\lambda_1\hat{Q} + q_0)$.  Let $p''$ be the intersection between $pq_0$ and $\partial (\lambda_0\hat{Q} + q_0)$.   Refer to Figure~\ref{fg:knn}.
		
		\begin{figure}
			\centerline{\includegraphics[scale=0.6]{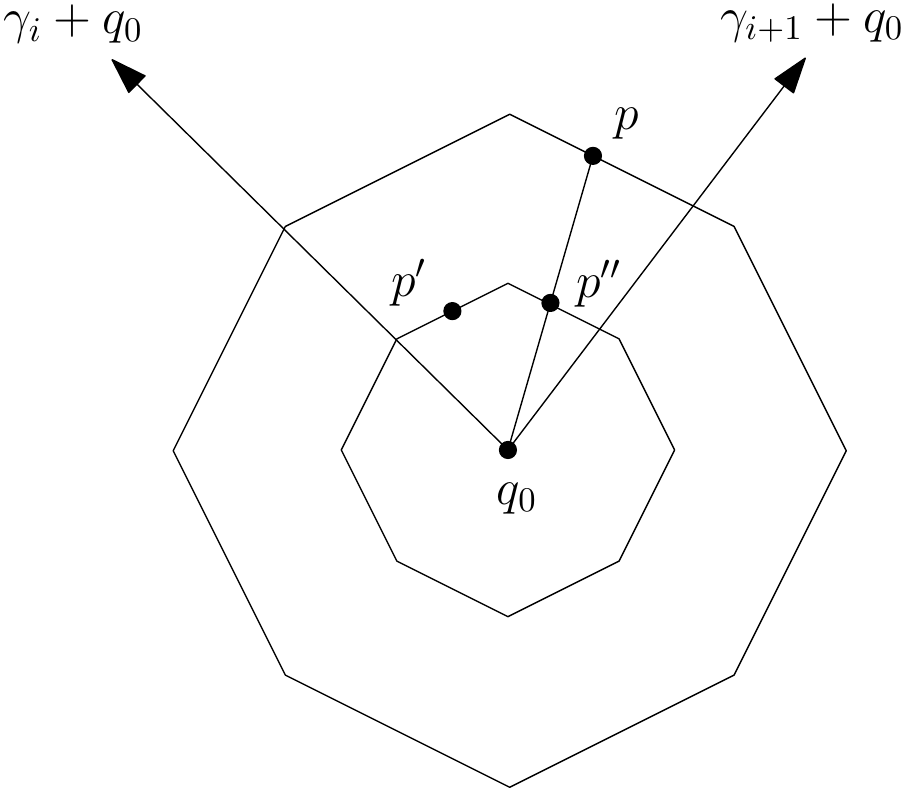}}
			\caption{The two concentric polygons are homothetic copies of $\hat{Q}$.  The smaller one is $\lambda_0 \hat{Q}+q_0$.  The larger one is $\lambda_1\hat{Q} + q_0$.}
			\label{fg:knn}
		\end{figure}
		
		Since $\{p',p''\} \subset \partial(\lambda_0 \hat{Q} + q_0)$, and $p'$ and $p''$ lie in the cone that is bounded by the rays $\gamma_i + q_0$ and $\gamma_{i+1} + q_0$, we can deduce from the property of $d(\xi_i,\xi_{i+1}) \leq 1/4$ that $\{p',p''\} \subset \frac{\lambda_0}{4}\hat{Q} + \lambda_0 \xi_i + q_0$.  Therefore, $d(p',p'') \leq \lambda_0/2$.  The edge of $\lambda_0\hat{Q}+q_0$ that contains $p''$ and the edge of  $\lambda_1\hat{Q}+q_0$ that contains $p$ are homothetic copies of the same edge of $\hat{Q}$.  Since $p$, $p''$ and $q_0$ are collinear, the Euclidean length of $pp''$ is $(\lambda_1-\lambda_0)/\lambda_0$ times the Euclidean length of $q_0p''$.  Therefore, $(\lambda_1-\lambda_0)\hat{Q} + p''$ contains $p$ in its boundary, which implies that $d(p,p'') = \lambda_1 - \lambda_0$.  By the triangle inequality,
		\begin{align}
			d(p,p') &\leq d(p,p'') + d(p',p'') \leq \lambda_1 - \lambda_0/2 \nonumber \\
			              & = d(p,q_0) - d(p',q_0)/2.  \label{eq:0}
		\end{align}
		As mentioned at the beginning of this proof, since $p' \in C$, there exists $\{P_w,P_{w'}\} \in \Delta$ such that $p' \in P_{w'}$, and $w$ is $u$ or an ancestor of $u$.  So $q_0 \in P_u \subseteq P_w$.  Let $\delta_u$ and $\delta_w$ be the diameters of $P_u$ and $P_w$ under $d$, respectively.  We have $d(p',q_0) \geq d(p',P_w) \geq d(P_{w'},P_w)$.  Since $\Delta$ is a 4-WSPD, $d(P_{w'},P_w) \geq 4\delta_w \geq 4\delta_u$.  It follows that $d(p',q_0)/2 \geq 2\delta_u$.  Substituting into \eqref{eq:0} gives $d(p,p') \leq d(p,q_0) - 2\delta_u$.
		
		For any point $y \in P_u$, $d(p,p') \leq d(p,q_0) - 2\delta_u \leq d(p,y) + d(q_0,y) - 2\delta_u \leq d(p,y) - \delta_u < d(p,y)$.  As a result, $p$ is closer to $p'$ than any point in $P_u$.  If there are at least $k$ such $p'$'s, no point in $P_u$ can be a $k$-nearest neighbor of $p$.
\end{proof}

\subsection{Nearest neighbor graph}
\label{app:NN-3}
	
We restate Lemma~\ref{lem:deg} and give its proof.

\vspace{8pt}

\noindent {\bfseries\sffamily Statement of Lemma~\ref{lem:deg}:}\hspace{4pt}\emph{For any subset $X \subseteq S$, every vertex in $1$-$\NN_X$ has $O(1)$ degree, and adjacent vertices in $1$-$\NN_X$ are Voronoi neighbors in $\vor(X)$.}
\begin{proof}
	For every point $p \in X$, let $\hat{Q}_p$ be the largest homothetic copy of $\hat{Q}$ centered at $p$ such that $\Int(\hat{Q}_p) \cap (X \setminus \{p\}) = \emptyset$.  In 1-$\NN_X$, a point $q \in X$ is connected to its nearest neighbor, and if any other point $p \in X$ is connected to $q$, then $q \in \partial \hat{Q}_p$.   Therefore, the vertex degree of 1-$\NN_X$ is bounded from above by the maximum number of polygons in $\{\hat{Q}_p : p \in X\}$ that are intersected by a point in $\mathbb{R}^2$.

		Let $y$ be a point in $\mathbb{R}^2$ that intersects the maximum number of polygons in $\{\hat{Q}_p : p \in X\}$.  Let $\{\hat{Q}_{p_1}, \ldots, \hat{Q}_{p_s}\}$ be the polygons intersected by $y$.  Shrink each $\hat{Q}_{p_i}$ concentrically to a polygon $\hat{Q}_i$ that just contains $y$ in its boundary.  So $\hat{Q}_i \subseteq \hat{Q}_{p_i}$, meaning that $\Int(\hat{Q}_i) \cap (X \setminus \{p_i\}) = \emptyset$.  Take the largest $\lambda > 0$ such that the interior of $\hat{Q}_y = \lambda\hat{Q} + y$ does not intersect $\{p_1, \ldots, p_s\}$.  For $i \in [1,s]$, let $q_i$ be the intersection between the segment $p_iy$ and $\partial \hat{Q}_y$.  Refer to Figure~\ref{fg:deg}.
		
		\begin{figure}
			\centerline{\includegraphics[scale=0.6]{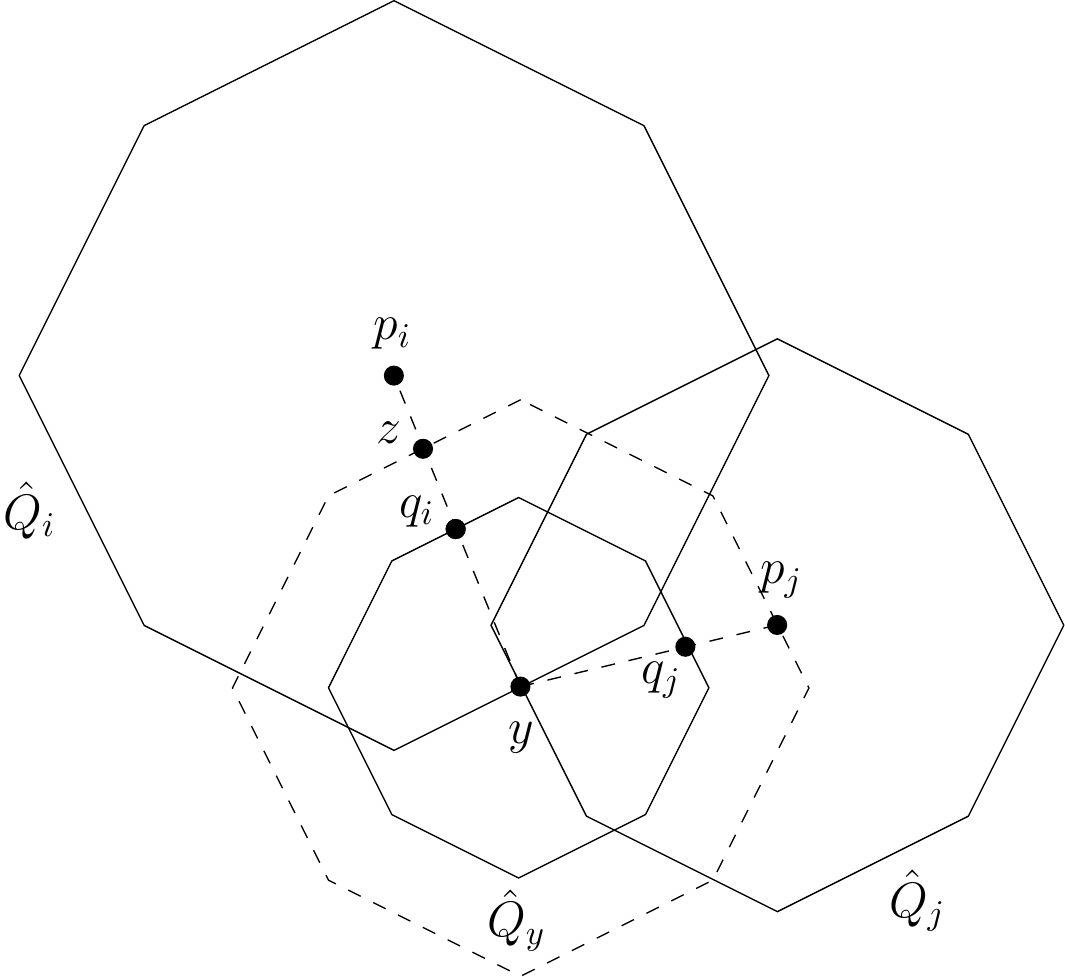}}
			\caption{Illustration for the proof of Lemma~\ref{lem:deg}.}
			\label{fg:deg}
		\end{figure}
		
		We claim that $d(q_i,q_j) \geq \lambda$ for all $i \not= j$.  Without loss of generality, assume that $d(y,p_i) \geq d(y,p_j)$.  Let $z$ be the point in the segment $p_iy$ such that $d(y,z) = d(y,p_j)$.  Since $y$, $z$ and $p_i$ are collinear, we have $d(y,z) = d(y,p_i) - d(p_i,z)$.   Since $p_j \not\in \Int(\hat{Q}_{p_i})$ and $y \in \hat{Q}_{p_i}$, we have $d(p_i,p_j) \geq d(y,p_i)$, which implies that $d(y,z) \leq d(p_i,p_j) - d(p_i,z) \leq d(p_i,z) + d(p_j,z) - d(p_i,z) = d(p_j,z)$.    Since the wedge $yq_iq_j$ is a scaled copy of the wedge $yzp_j$, the inequality $d(p_j,z) \geq d(y,z)$ implies that $d(q_i,q_j) \geq d(y,q_i) = \lambda$.
		
		Our claim implies that we can place non-overlapping copies of $\frac{\lambda}{2}\hat{Q}$ centered at the $q_i$'s.  Each copy has half the area of $\hat{Q}_y$, and all these copies are contained inside $2\hat{Q}_y$.  A packing argument shows that there are $O(1)$ such copies.  This shows that the vertex degree of 1-$\NN_X$ is $O(1)$.
		
		Let $pq$ be an edge in 1-$\NN_X$.  We assume without loss of generality that $p$ is the nearest neighbor of $q$ in $X$.  Thus, there exists $\lambda > 0$ such that $p \in \partial(\lambda\hat{Q}+q)$ and $\Int(\lambda\hat{Q}+q) \cap X = \emptyset$.  By the definition of $\hat{Q}$, it means that there exists a point $x \in \mathbb{R}^2$ such that $\lambda Q^* + x \subset \lambda\hat{Q} + q$ and $\{p,q\} \subset \partial(\lambda Q^* + x)$.  So  $\Int(\lambda Q^* + x) \cap X = \emptyset$ which certifies that $p$ and $q$ are Voronoi neighbors in $\vor(X)$.
\end{proof}
		
\section{Missing details in Section~\ref{sec:vorNN}}
\label{app:vorNN}

%Only the proofs of Lemma~\ref{lem:conflict} and~\ref{lem:delNN} are missing.  We restate them and give their proofs below.

\cancel{
\hspace{6pt}
		
\noindent {\bfseries\sffamily Statement of Lemma~\ref{lem:conflict}:}\hspace{4pt}\emph{If $p$ conflicts with $V_q(X)$, then $p$ conflicts with a Voronoi edge bend or Voronoi vertex in $\partial V_q(X)$.}
			\begin{proof}
				Since $p$ conflicts with $V_q$, $p$ conflicts with a point $x$ in $\partial V_q$.  Let $v_iv_{i+1}$ be the boundary edge of $V_q$ that contains $x$, where $v_i$ and $v_{i+1}$ are adjacent Voronoi edge bends or Voronoi vertices.  If $x \in \{v_i,v_{i+1}\}$, we are done.  Assume that $x  \in \Int(v_iv_{i+1})$.  Let $Q^*_x$, $Q^*_{v_i}$, and $Q^*_{v_{i+1}}$ be the largest homothetic copies of $Q^*$ that are centered at $x$, $v_i$ and $v_{i+1}$, respectively, such that their interior do not include any point in $X$.  Since $p$ conflicts with $x$, we have $p \in Q^*_x$.  By Lemma~\ref{lem:enclose}, $Q^*_x \subseteq Q^*_{v_i} \cup Q^*_{v_{i+1}}$ which implies that $p$ conflicts with $v_i$ or $v_{i+1}$.  
				\cancel{
				We can further assume that $Q^*_x$ is not equal to $Q^*_{v_i}$ or $Q^*_{v_{i+1}}$ as there is nothing to prove otherwise.
				
				Let $q'$ be the Voronoi neighbor of $q$ that define $v_i$ and $v_{i+1}$ together with $q$ and possibly another point in $X$.  Place an imaginary point $q_i$ in $\partial Q^*_{v_i} \setminus Q^*_{v_{i+1}}$ such that $q_i$ does not lie on the same edge of $Q^*_{v_i}$ as $q$ or $q'$.  Place an imaginary point $q_{i+1}$ similarly in $\partial Q^*_{v_{i+1}} \setminus Q^*_{v_i}$.  We claim that $d_Q(q_i,x) \geq d_Q(q',x) = d_Q(q,x)$.  Suppose not.  Then $d_Q(q_i,x) < d_Q(q',x) = d_Q(q,x)$.  Move from $x$ towards $v_{i+1}$.  We must reach some point $y \in \Int(x,v_{i+1})$ before reaching $v_{i+1}$ such that $d_Q(q_i,y) = d_Q(q',y) = d_Q(q,y)$ because $q_i$ is farther from $v_{i+1}$ than $q$, $q'$, and $q_{i+1}$.  Let $Q^*_y$ be the homothetic copy of $Q^*$ centered at $y$ that includes $q$, $q'$, and $q_i$ in its boundary.  As $Q^*_{v_i} \not= Q^*_y$, either one is strictly contained in the other, or their boundaries intersect transversally at two points.  The former case is impossible as at least one of $q$, $q'$, and $q_i$ would lie in the interior of $Q^*_{v_i}$ or $Q^*_y$, an impossibility.  If $\partial Q^*_{v_i}$ and $\partial Q^*_y$ intersect transversally at two points, then one of $q$, $q'$ and $q_i$ would not lie in $\partial Q^*_{v_i}$ or $\partial Q^*_y$, an impossibility again.  Similarly, $d_Q(q_{i+1},x) \geq d_Q(q',x) = d_Q(q,x)$.
				
				By our claim, neither $q_i$ nor $q_{i+1}$ belongs to $\Int(Q^*_x)$.   Since $Q^*_x \not= Q^*_{v_i}$, either one of $Q^*_x$ and $Q^*_{v_i}$ is strictly contained in the other, or their boundaries intersect transversally.  The former case is impossible because one of $q_i$, $q$ and $q'$ would lie in the interior of $Q^*_x$ or $Q^*_{v_i}$.  In the second case, $\partial Q^*_x$ and $\partial Q^*_{v_i}$ must intersect transversally at $q$ and $q'$.  It follows that one of the two chains in $\partial Q^*_{v_i}$ delimited by $q$ and $q'$ lies outside $Q^*_x$.  Since $d_{Q^*}(x,q_i) = d_Q(q_i,x) \geq d_Q(q,x) = d_Q(q',x)$, we conclude that the chain that contains $q_i$ lies outside $Q^*_x$.  Similarly, we can show that $Q^*_x$ does not contain the chain in $\partial Q^*_{v_{i+1}}$ that goes from $q$ through $q_{i+1}$ to $q'$.  Hence, $Q^*_x$ must be contained in $Q^*_{v_i} \cup Q^*_{v_{i+1}}$.
			}
			\end{proof}
		
	}
		
			Only the proof of Lemma~\ref{lem:delNN} is missing.  We restate the lemma and give its proof below.  The proof contains the details of the analysis in~\cite{BM11} that works for step~6 of VorNN.
			
			\vspace{12pt}
		
\noindent {{\bfseries\sffamily Statement of Lemma~\ref{lem:delNN}:}\hspace{4pt}\emph{\emph{VorNN}$(R,T_R)$ computes $\vor(R)$ in $O(|R|)$ expected time.}
\begin{proof}
			We first give the details of step~6 of VorNN.   As in~\cite{BM11}, we grow $X$ and $\vor(X)$ by moving points repeatedly from $Y \setminus X$ to $X$.   We keep track of edges $pq$ in 1-$\NN_Y$ such that $p \in Y \setminus X$ and $q \in X$.  Take such an edge.
			
			Since $pq$ is an edge in 1-$\NN_Y$, $pq$ must also be an edge in 1-$\NN_{X \cup \{p\}}$.  By Lemma~\ref{lem:deg}, $p$ and $q$ are Voronoi neighbors in $\vor(X \cup \{p\})$.   So $p$ must conflict with a point in $V_q(X)$.  Lemma~\ref{lem:conflict-tech} implies that $p$ must conflict with some Voronoi edge bend or Voronoi vertex $v$ in $\partial V_q(X)$.  We search $\vor(X)$ from $v$ to find all Voronoi edge bends and Voronoi vertices that conflict with $p$.  During this search, we can modify $\vor(X)$ to $\vor(X \cup \{p\})$ in time proportional to the number of Voronoi edge bends and Voronoi vertices that conflict with $p$~\cite{KMM93}.
			
			%Let $K$ be the union of the triangles that are in conflict with $p$.  Let $e$ be a boundary edge of $K$, which was incident to a triangle $t'$ in $\del(X)$ that is in conflict with $q$.  Let $C'$ be the scaled copy of $Q^*$ that circumscribes $t'$.  So $q$ lies in $C'$.  We can shrink $C'$ towards an endpoint $x$ of $e$ until we obtain a shrunk copy of $C'$ that lies inside $C'$ and contains $x$ and $q$ in its boundary.  So $C''$ is empty which certifies that $qx$ is an edge in $\del(X \cup \{q\})$.  As a result, we triangulate $K$ by connecting $q$ to every boundary edge of $K$.
			
			Consider the total running time.   The identification of the starting Voronoi edge bend or Voronoi vertex involves checking $\partial V_q(X)$.  In other words, $\partial V_q(X)$ is examined once for each neighbor of $q$ in 1-$\NN_Y$.  Therefore, any Voronoi edge bend or Voronoi vertex $w$ that was constructed at some point during the algorithm can be examined as many times as the degree sum in 1-$\NN_Y$ of the points that define $w$.  This degree sum is $O(1)$ by Lemma~\ref{lem:deg}.  Also, if $w$ is subsequently destroyed by the insertion of a point in $Y \setminus X$, we can charge the time to destroy $w$ to the creation of $w$.  Hence, the total expected running time is bounded by the expected number of Voronoi edge bends and Voronoi vertices created in the course of the algorithm.
			
			Let $Q^*_w$ denote the homothetic copy of $Q^*$ that circumscribes the points that define $w$.  Let $s = |Y \cap Q^*_w|$.  A necessary condition for $w$ to be created in step~6 is that $Y \cap Q^*_w$ contains no point in $X$ right before the execution of step~6.  If some points in $Y \cap Q^*_w$ form matching pair(s) in step~3 of VorNN, at least one of them must be included in $X$ in step~3, which means that we cannot possibly create $w$ in step~6.  If the points in $Y \cap Q^*_w$ do not form any matching pair in step~3, then the points in $Y \cap Q^*_w$ are sampled independently with probability 1/2.   Therefore, the probability that none of these points is selected in step~3 is at most $1/2^s$, so the probability that $w$ is created in step~6 is at most $1/2^s$.  By the result of Clarkson and Shor~\cite[Theorem~3.1]{CS89}, there are $O(|Y|s^2)$ Voronoi edge bends and Voronoi vertices whose circumscribing homothetic copies of $Q^*$ contain at most $s$ points in $Y$.   Therefore, the expected number of Voronoi edge bends and Voronoi vertices created in step~6 of VorNN is $O(\sum_{s = 0}^\infty |Y|s^2/2^s) = O(|Y|)$.  The expected size of $X$ is $|Y|/2$.   Therefore, unwinding the recursion starting from the top-level call VorNN$(R,T_R)$ gives a total expected running time of $O(|R| + |R|/2 + |R|/4 + \cdots) = O(|R|)$.
\end{proof}

\section{Missing details in Section~\ref{sec:I}}
\label{app:I}

Recall that $U_R$ is the set of Voronoi edge bends and Voronoi vertices in $\vor(R)$ that conflict with the input points $p_1,\ldots, p_n$.  We need to prove that $U_R = \bigcup_{i=1}^n V_S|_{p_i}$.  Clearly, $\bigcup_{i=1}^n V_S|_{p_i} \subseteq U_R$ by the definition of $R$.  The following result proves that the containment also holds in the other direction.

\begin{lemma}
	$U_R \subseteq \bigcup_{i=1}^n V_S|_{p_i}$.
\end{lemma}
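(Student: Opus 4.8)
The plan is to show that every $w \in U_R$ is in fact a Voronoi edge bend or Voronoi vertex of $\vor(S)$, and that with respect to $S$ it conflicts with the same input point that it conflicts with in $\vor(R)$; this places $w$ in $\bigcup_{i=1}^n V_S|_{p_i}$. For a point set $Y$ and a point $x$, write $Q^*_x(Y)$ for the largest homothetic copy of $Q^*$ centred at $x$ whose interior avoids $Y$; since $R \subseteq S$ we always have $Q^*_x(S) \subseteq Q^*_x(R)$. Fix $w \in U_R$ and an input point $p_i$ with $p_i \in Q^*_w(R)$, and let $\lambda_0$ be the scale of $Q^*_w(R)$, so the sites defining $w$ in $\vor(R)$ lie on $\partial Q^*_w(R)$ (hence in $R \subseteq S$) at $d_{Q^*}$-distance $\lambda_0$ from $w$.

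First I would prove the reduction: it suffices to show $\Int(Q^*_w(R)) \cap S = \emptyset$. Indeed, if that holds then $Q^*_w(R)$ is empty of $S$ and cannot be enlarged without swallowing one of its defining sites (which lies in $R \subseteq S$), so $Q^*_w(S) = Q^*_w(R)$; by the general position assumptions the boundary of $Q^*_w(S)$ carries exactly two points of $S$ when $w$ is a bend and exactly three when $w$ is a vertex, namely the $\vor(R)$-defining sites of $w$, so $w \in V_S$; and then $p_i \in Q^*_w(R) = Q^*_w(S)$ gives $w \in V_S|_{p_i}$.

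For the reduction, suppose for contradiction that some point of $S$ lies in $\Int(Q^*_w(R))$, and let $\hat s$ be the point of $S$ closest to $w$ under $d_{Q^*}$. Then $\mu := d_{Q^*}(w,\hat s) < \lambda_0$, so $\hat s \in \Int(Q^*_w(R))$, whence $\hat s \notin R$ (the interior of $Q^*_w(R)$ is disjoint from $R$) and $w \in V_{\hat s}(S)$. By the construction of $R$ --- it contains every site of $S$ whose $\vor(S)$-cell conflicts with an input point, as well as every cluster-root representative --- the fact that $\hat s \notin R$ forces $V_{\hat s}(S)$ to conflict with no input point. I will derive a contradiction by showing that $V_{\hat s}(S)$ nevertheless conflicts with $p_i$, i.e., that some $y \in V_{\hat s}(S)$ has $p_i \in Q^*_y(S)$; since $\hat s$ is weakly closest to every point of $V_{\hat s}(S)$, this is the same as finding $y \in V_{\hat s}(S)$ with $d_Q(p_i,y) \le d_Q(\hat s,y)$. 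If $d_Q(p_i,w) \le \mu$, take $y = w$ (then $p_i \in Q^*_w(S)$). Otherwise $\mu < d_Q(p_i,w) \le \lambda_0$, and I would produce $y$ by examining the triangle of the triangulated $\vor(S)$ that contains $w$ and invoking Lemma~\ref{lem:conflict-tech} together with the nesting and transversality behaviour of homothets of $Q^*$ from the proof of Lemma~\ref{lem:enclose}; the goal is to show that $p_i$, which lies in the larger polygon $Q^*_w(R)$, already conflicts with one of the Voronoi edge bends or vertices bounding $V_{\hat s}(S)$, which forces $V_{\hat s}(S)$ to conflict with $p_i$ and hence $\hat s \in R$ --- the desired contradiction.

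The step I expect to be the main obstacle is exactly this last claim: that the $\vor(S)$-cell of the site closest to $w$ must conflict with $p_i$, given only that $w$ is a Voronoi edge bend or vertex of $\vor(R)$ conflicting with $p_i$. The asymmetry of the convex distance function makes it delicate --- the empty polygon at a point $y \in V_{\hat s}(S)$ is controlled by $d_Q(\hat s,y)$ while the conflict of $p_i$ with $y$ is a statement about $d_Q(p_i,y)$, so a naive monotonicity argument along the segment from $\hat s$ to $p_i$ does not close the case, and the witness point must be picked so that no other site of $S$ intervenes between it and $\hat s$. The bend and vertex cases, and the degenerate equality cases, are dispatched uniformly by the general position assumptions.
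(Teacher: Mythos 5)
Your reduction — that it suffices to show $\Int(Q^*_w(R)) \cap S = \emptyset$ for each $w \in U_R$, after which $Q^*_w(S) = Q^*_w(R)$, $w \in V_S$ and $p_i \in Q^*_w(S)$ all follow — is correct and matches the substance of the paper's claim. The gap is in the contradiction argument. Once you posit a nearest site $\hat s \in S \cap \Int(Q^*_w(R))$, you note $\hat s \notin R$, so by the construction of $R$ the cell $V_{\hat s}(S)$ conflicts with no input point; in particular $p_i \notin Q^*_w(S)$, which means $d_Q(p_i,w) > \mu$. So under your hypothesis you are always in the second branch of your case split, never the first — and it is precisely that branch that you flag as the obstacle and leave open. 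The obstacle is real: $p_i \in Q^*_w(R)$ only controls distances from $w$ to the small set $R$, and neither the triangle of the triangulated $\vor(S)$ containing $w$ nor Lemmas~\ref{lem:enclose} and~\ref{lem:conflict-tech} give you a point $y \in V_{\hat s}(S)$ with $p_i \in Q^*_y(S)$. There is no purely local witness construction here, because the local geometry at $w$ does not record which $\vor(S)$ cells conflict with $p_i$.

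The paper closes the gap by a global argument that you do not use. It first shows $N_{p_i}\bigl(S\cup\{p_i\}\bigr) \subseteq R$: every Voronoi neighbour $q$ of $p_i$ in $\vor(S\cup\{p_i\})$ has $V_q(S)$ conflicting with $p_i$, so $q \in R$. Since those neighbours carry the binding constraints for the cell of $p_i$, this gives $V_{p_i}\bigl(S\cup\{p_i\}\bigr) = V_{p_i}\bigl(R\cup\{p_i\}\bigr)$. Then $p_i \in Q^*_w(R)$ forces $w \in V_{p_i}(R\cup\{p_i\}) = V_{p_i}(S\cup\{p_i\})$, hence $d_Q(p_i,w) \le d_Q(q,w)$ for every $q\in S$ — contradicting $d_Q(p_i,w) > \mu = d_Q(\hat s,w)$. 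More directly, it shows that inside $V_{p_i}(S\cup\{p_i\})$ the diagrams $\vor(R)$ and $\vor(S)$ coincide (the region is covered by cells of sites in $R$), so any bend or vertex of $\vor(R)$ conflicting with $p_i$ is already a bend or vertex of $\vor(S)$ conflicting with $p_i$. To repair your proof you need this cell-equality fact, or equivalently $N_{p_i}\bigl(S\cup\{p_i\}\bigr)\subseteq R$; the nearest site $\hat s$ to $w$ alone does not see it.
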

\begin{proof}
	Take any $p_i \in I$.  For each $q \in N_{p_i}\bigl(S \cup \{p_i\}\bigr)$, $p_i$ must conflict with $V_q(S)$ in order that $q$ becomes a Voronoi neighbor of $p_i$ in $\vor\big(S \cup \{p_i\}\bigr)$.  As a result, $N_{p_i}\big(S \cup \{p_i\}\bigr) \subseteq R$, which implies that  $V_{p_i}\bigl(S \cup \{p_i\}\bigr) = V_{p_i}\bigl(R \cup \{p_i\}\big)$.  In $\vor(S)$,  the region $V_{p_i}\bigl(S \cup \{p_i\}\bigr)$ is partitioned and distributed among the Voronoi cells of points in $N_{p_i}\bigl(S \cup \{p_i\}\bigr) \subseteq R$.  Thus, any Voronoi edge bend or Voronoi vertex $w$ in $\vor(R)$ that does not exist in $V_S$ must lie strictly outside $V_{p_i}\bigl(S \cup \{p_i\}\bigr) = V_{p_i}\bigl(R \cup \{p_i\}\big)$.  Hence, $w$ cannot conflict with $p_i$.	  
\cancel{
	Assume to the contrary that there is some Voronoi edge bend or Voronoi vertex $v$ in $\vor(R)$ that conflicts with some $p_i \in I$, but $v \not\in V_S$.  Since $v \not\in V_S$, there is a point $q \in S \setminus R$ such that $v \in V_q(S)$.    As $q \not\in R$, $p_i \not\in V_q(S)$.  Therefore, $p_iq$ intersects $\partial V_q(S)$ at a point $x$.  If $d_Q(p_i,x) > d_Q(q,x)$, there must be a point $y \in \Int(qx) \subseteq V_q(S)$ such that $d_Q(p_i,y) = d_Q(q,y)$.  But then $p_i$ conflicts with $V_q(S)$, contradicting the fact that $q \not\in R$.  Suppose that $d_Q(p_i,x) \leq d_Q(q,x)$.   Let $Q^*_x$ be the largest homothetic copy of $Q^*$ centered at $x$ such that $\Int(Q^*_x) \cap S = \emptyset$.  We have $q \in \partial Q^*_x$ as $x \in V_q(S)$.  It means that $p_i \in Q^*_x$ as $d_{Q*}(x,p_i) = d_Q(p_i,x) \leq d_Q(q,x) = d_{Q^*}(x,q)$.  By Lemma~\ref{lem:enclose}, there is a Voronoi edge bend or Voronoi vertex $w$ in $\partial V_q(S)$ such that $p \in Q^*_w$, where $Q^*_w$ is the largest homothetic copy of $Q^*$ centered at $w$ such that $\Int(Q^*_w) \cap S = \emptyset$.  But then $p_i$ conflicts with $V_q(S)$, a contradiction to the fact that $q \not\in R$.
}
\end{proof}

\end{document}